\documentclass[11pt]{article}
\usepackage{thmtools, thm-restate}
\usepackage{macros}

\title{Robust Gray Codes Approaching the Optimal Rate} 
\author{
Roni Con\thanks{Department of Computer Science, Technion - Israel Institute of Technology,
                    Haifa, Israel,
                    roni.con93@gmail.com.}
\and 
Dorsa Fathollahi\thanks{Department of Electrical Engineering,
                    Stanford University,
                    Stanford, CA,
                    dorsafth@stanford.edu.}
\and
Ryan Gabrys\thanks{University of California San Diego, San Diego, CA, rgabrys@ucsd.edu. }
\and
Mary Wootters\thanks{Department of Electrical Engineering,
                    Stanford University,
                    Stanford, CA,
                    marykw@stanford.edu.}
\and
Eitan Yaakobi\thanks{Department of Computer Science, Technion - Israel Institute of Technology,
                    Haifa, Israel,
                    yaakobi@cs.technion.ac.il}
}
\date{\today}

\begin{document}

\maketitle

\begin{abstract}
Robust Gray codes were introduced by (Lolck and Pagh, SODA 2024).  Informally, a robust Gray code is a (binary) Gray code $\mathcal{G}$ so that, given a \emph{noisy} version of the encoding $\mathcal{G}(j)$ of an integer $j$, one can recover $\hat{j}$ that is close to $j$ (with high probability over the noise).  Such codes have found applications in differential privacy.

In this work, we present near-optimal constructions of robust Gray codes.  
In more detail, we construct a Gray code $\mathcal{G}$ of rate $1 - H_2(p) - \varepsilon$ that is efficiently encodable, and that is robust in the following sense.  Supposed that $\mathcal{G}(j)$ is passed through the binary symmetric channel $\text{BSC}_p$ with cross-over probability $p$, to obtain $x$.  We present an efficient decoding algorithm that, given $x$, returns an estimate $\hat{j}$ so that $|j - \hat{j}|$ is small with high probability.\blfootnote{

DF is partially supported by NSF grant CCF-2133154.
The work of RG was partially supported by  NSF Grant CCF-2212437. MW is partially supported by NSF grants CCF-2133154 and CCF-2231157.
The work of RC and EY was supported by the European Union (DiDAX, 101115134). 
Views and opinions expressed are those of the author(s) only and do not necessarily reflect those of the European Union or the European Research Council Executive Agency. Neither the European Union nor the granting authority can be held responsible for them.\\
Part of this work was done while the authors were visiting the Simons Institute for the Theory of Computing. 
}

\end{abstract}

\newpage

\section{Introduction}

A \emph{robust Gray code} is a Gray code that is robust to noise.  In more detail, a robust Gray code $\calg$ of length $d$ is a map $\calg:\{0, \ldots, N-1\} \to \{0,1\}^d$ so that: 
\begin{itemize}
    \item $\calg$ is a Gray code: For all $j \in \{0, \ldots, N-1\}$, $\Delta(\calg(j), \calg(j+1)) = 1$, where $\Delta$ denotes the Hamming distance.\footnote{The paper \cite{LP24} also gives a more general definition, where the code should have low \emph{sensitivity}, meaning that $|\enc{\calg}(j) - \enc{\calg}(j+1)|$ is small; however, both their code and our code is a Gray code, so we specialize to that case (in which the sensitivity is $1$).}
    \item $\calg$ is robust to noise from the \emph{binary symmetric channel with cross-over probability $p$} ($\BSCp$), for some $p \in (0,1/2)$: Let $\eta \sim \ber(p)^d$ be a random noise vector.  Then for any $j \in \{0, \ldots, N-1\}$, given $\calg(j) \oplus \eta$, it should be possible to (efficiently) recover an estimate $\hat{j}$ so that $|j - \hat{j}|$ is small, with high probability over $\eta$. 
\end{itemize}
As with standard error-correcting codes, we define the \emph{rate} of a robust Gray code $\calg:\{0,\ldots, N-1\} \to \{0,1\}^d$ by $\calr = \frac{\log_2(N)}{d}$.  The goal is then to make the rate as high as possible while achieving the above desiderata.

For intuition about the problem, consider two extreme examples.  The first is the \emph{unary code} of length $d=N$.  The unary code simply encodes an integer $j$ as $j$ ones followed by $d - j$ zeros.  It is not hard to see that if some random noise is introduced (with $p < 1/2$), it is possible to approximately identify $j$; it is the place where the bits go from being ``mostly one'' to ``mostly zero.'' However, the rate of this code tends to zero very quickly; it has rate $\calr = \log_2(d)/d$. The second extreme example is the classical Binary Reflected Code (BRC, see \Cref{def:BRC}).  The BRC is a Gray code with $N = 2^d$ and hence rate $\calr = 1$, as high as possible.  However, the BRC is not at all robust. For example, the encodings of $0$ and $N-1$ under the BRC differ by only a single bit, and more generally changing the ``most significant bit'' (or any highly significant bit) can change the value encoded by quite a lot.  Our goal is something in between these extreme examples: A Gray code with good rate (as close to $1$ as possible), but also with good robustness.  Geometrically, one can think of this as a path that ``fills up'' as much of the Boolean cube $\zo^d$ as possible, while not getting too close to distant parts of itself too often.

Robust Gray codes were introduced by Lolck and Pagh in~\cite{LP24}, motivated by applications to differential privacy.  While their particular application (to differentially private histograms) is a bit involved, the basic idea is the following.  In differential privacy, one adds noise to protect privacy, while hoping to still be able to estimate useful quantities about the data.  Adding continuous noise (say, Laplace noise) to real values is standard, but it can be more practical to add noise from the $\BSCp$ to binary vectors.  This motivates a robust Gray code as a building block for differentially private mechanisms: It is a way of encoding integer-valued data into binary vectors, so that the original value can be estimated after noise from the $\BSCp$ is added.

The original paper of Lolck and Pagh introduced a construction of robust Gray codes that transformed any binary error-correcting code $\calc$ with rate $\calr$ into a robust Gray code $\calg$ with rate $\Omega(\calr)$. They showed that if $\calc$ had good performance on $\BSCp$, then so did $\calg$; more precisely, given $\calg(j) \oplus \eta$, their decoder produced an estimate $\hat{j}$ so that
\[ \Pr_\eta[ |j - \hat{j}| \geq t ] \leq \exp( -\Omega(t)) + \exp(-\Omega(d)) + O(\pfail(\calc)), \]
where $\pfail(\calc)$ is the failure probability of $\calc$ on the $\BSCp$.  
However, the constant in the $\Omega(R)$ in the rate in that work is at most $1/4$, which means that it is impossible for the construction of \cite{LP24} to give a high-rate code, even if $p$ is very small.  The constant inside the term $\Omega(\calr)$ was improved to approach $1/2$ in \cite{fathollahi2024improved}.\footnote{The work \cite{fathollahi2024improved} is by a subset of the authors of the current paper; we view it as a preliminary version of this work.}

Our main result is a family of robust Gray codes that have rate approaching $1 - H_2(p)$ on the $\BSCp$, where $H_2(p) = -p\log_2(p) - (1-p)\log_2(1-p)$ is the binary entropy function.  In fact, we prove a more general result, which takes any binary linear code $\Cin$ of rate $\calr$, and transforms it into a robust Gray code $\calg$ with rate approaching $\calr$.  This more general result is stated in Theorem~\ref{thm:main} below; we instantiate it in Corollary~\ref{cor:ach-capacity} to achieve rate approaching $1 - H_2(p)$.

\begin{restatable}{theorem}{mainResult}
\label{thm:main}
    Fix constants $p \in (0,1/2)$ and a sufficiently small $\varepsilon > 0$.  Fix a constant $\calr \in (0,1)$.
    Let $d$ be sufficiently large, in terms of these constants.  Then there is an $n' = \Theta(\log d)$ so that the following holds. Suppose that there exists a binary linear $[n',k']_2$ code $\Cin$ with rate $k'/n' = \calr$ so that $\Cin$ has a decoding algorithm $\dec{\Cin}$ that has block failure probability on the $\BSCp$ that tends to zero as $n' \to \infty$.\footnote{See \Cref{def:prob-fail} for a formal definition of the failure probability.}  
    Then there is a robust Gray code $\calg: \{0,1,\ldots N-1\} \to \F_2^d$ and a decoding algorithm $\dec{\calg}:\F_2^d \to \{0,1,\ldots N-1\}$ so that:
    \begin{enumerate}
        \item The rate of $\calg$ is $\calr - \varepsilon$. 
        \item Fix $j\in \{0,1,\ldots N-1\}$, let $\eta \sim \ber(p)^d$ be a random error vector, and let $\jhat:= \dec{\calg}(\calg(j) \oplus \eta)$, where $\eta \sim \ber(p)^d$. Then 
        \[
            \Pr_\eta[|j-\jhat| \geq t ] \leq \exp(-\Omega(t)) + \exp\left(-\Omega\left(\frac{d}{\log d}\right)\right) \;,
        \]
        where the constants inside the $\Omega(\cdot)$ notation depend on $p, \varepsilon$, and $\calr$.
        \item The running time of $\calg$ (the encoding algorithm) is $\tilde{O}(d^3)$ and the running time of $\dec{\calg}$ (the decoding algorithm) is $\tilde{O}(d^2)$ where the $\tilde{O}(\cdot)$ notation hides logarithmic factors. 
    \end{enumerate}
\end{restatable}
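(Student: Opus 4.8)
The plan is to build $\calg$ by imitating the Binary Reflected Code one \emph{block of $n' = \Theta(\log d)$ bits} at a time, using the inner code $\Cin$ to pack information densely inside each block and short \emph{walks} through non-codeword states to repair the Hamming-distance-one property between consecutive codewords.

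\emph{The construction.} Partition the $d$ coordinates into $m = \Theta(d/n') = \Theta(d/\log d)$ consecutive blocks of length $n'$. First I would fix, once and for all, a Gray cycle $\Pi$ on (a large subset of) $\zo^{n'}$ that threads all $2^{k'}$ codewords of $\Cin$, with only $O(n')$ ``transit'' vertices between consecutive codewords; such a $\Pi$ exists for a suitable ordering of the codewords and is computable in $\mathrm{poly}(2^{n'}) = \mathrm{poly}(d)$ time. An integer $j$ is then encoded as a reflected mixed-radix ``odometer'' over the blocks: each block's digit records which codeword-to-codeword arc of $\Pi$ its window currently occupies, together with the offset along that arc, and the usual reflection rule for carries guarantees that incrementing $j$ flips a single coordinate, so $\calg$ traces one Gray path on $\zo^d$. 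The essential twist over the flat reflected code is that the carry order among blocks is organized into a small number of hierarchical \emph{levels}, where the order in which the level-$\ell$ ``super-digits'' are enumerated is itself a reflected Gray path through codewords of a large-alphabet analogue of $\Cin$; this is what forces the path to stay far from itself at every scale. Counting the reachable vertices of $\Pi$ and of the level structure gives $\log_2 N = (\calr - \varepsilon)d$, which is item~1; the $\varepsilon$ is the combined rate overhead of the transit vertices and of the $o(1)$-per-level redundancy.

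\emph{Decoding and the tail bound.} Given $x = \calg(j) \oplus \eta$, $\dec{\calg}$ runs $\dec{\Cin}$ on each block's window to get a candidate codeword (hence a candidate digit up to its arc), reconciles these against the reflected level structure --- at each level, a block whose decoded codeword disagrees with its neighbours is overruled, which is sound as long as only a small fraction of that level's blocks disagree --- and finally reads the offset of the unique ``active'' window along its arc of $\Pi$, pinning $j$ down to an additive $O(1)$. For item~2: conditioned on $\dec{\Cin}$ succeeding on every block, the only remaining error is in reading the active offset, and since an arc of $\Pi$ is a geodesic, being off by $t$ there requires the noise to prefer a $t$-Hamming-away offset, which has probability $\exp(-\Omega(t))$. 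When blocks fail, the hierarchical structure localizes the damage: an error of magnitude $t$ requires incorrectly decoding blocks that together account for $\Omega(\min(t,m))$ ``units'' of distance, which --- using that $\eta$ is independent across the disjoint blocks, that consecutive codewords on $\Pi$ are $\Omega(n')$ apart, and that $\pfail(\Cin)$ is below the Chernoff threshold once $n'$ is large --- happens with probability $\exp(-\Omega(\min(t,m))) \leq \exp(-\Omega(t)) + \exp(-\Omega(d/\log d))$. The running times in item~3 follow since each of the $\tilde O(d/\log d)$ blocks has length $\Theta(\log d)$ and costs $\mathrm{poly}(d)$ to decode and reconcile, and the $\tilde O(d^3)$ in the encoder is the linear algebra (here linearity of $\Cin$ is used) for inverting the level-wise encodings. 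Finally, Corollary~\ref{cor:ach-capacity} follows by taking $\Cin$ to be a rate-$(1-H_2(p)-\varepsilon')$ linear code of block length $\Theta(\log d)$ whose block error on $\BSCp$ tends to zero --- e.g.\ a random linear code, decoded by brute-force maximum likelihood, which is efficient because $2^{n'} = \mathrm{poly}(d)$.

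\emph{The main obstacle.} The heart of the argument is the hierarchical carry structure. With a single flat odometer, one failed high-order block flips a most-significant digit and moves $\jhat$ by $2^{\Omega(d)}$ with probability $\pfail(\Cin)$, which already contradicts item~2; so the significant digits must be protected by genuine Hamming distance --- yet spending constant-fraction redundancy at every scale would pin the rate to a constant bounded away from $\calr$. The difficulty is to thread this needle: to arrange the level-$\ell$ counters so that they are themselves \emph{expanding} Gray paths (far apart in the odometer implies far apart in Hamming distance), that they cost only $o(1)$ rate in total, and that the reflected carry rule stays simultaneously consistent with the codeword structure at every level; verifying this, together with the attendant bookkeeping, is the delicate part. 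Once it is in place, the tail bound is a fairly routine combination of the Chernoff bound, the minimum distance of $\Cin$, and $\pfail(\Cin) \to 0$.
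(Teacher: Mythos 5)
Your proposal takes a genuinely different route from the paper's, so let me compare and then point at the gaps. The paper does not use a hierarchical odometer at all: it builds a concatenated code $\Cout \circ \Cin$ with an outer Reed--Solomon code, orders the concatenated codewords via the Binary Reflected Code so that consecutive codewords differ by one generator row, inserts buffer bits $b_i$ between inner blocks to mark parity, and then interpolates one bit at a time between consecutive codewords. At decode time it (a) majority-decodes the buffers to locate the crossover chunk, (b) decodes each inner block with $\dec{\Cin}$, (c) uses the decoded generator-row index $z_{i+1}$ to translate the prefix back to $\sigma_i$, (d) decodes the outer RS code to recover $i$ exactly, and (e) reads the offset within $[r_i, r_{i+1})$ from a unary code on the differing coordinates. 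The index $i$ --- and hence the high-order part of $j$ --- is protected by the RS code's distance, not by any hierarchical redundancy, and that is what prevents a single failed block from moving $\jhat$ by $2^{\Omega(d)}$. Your scheme replaces this with ``the order in which the level-$\ell$ super-digits are enumerated is itself a reflected Gray path through codewords of a large-alphabet analogue of $\Cin$,'' which is closer in spirit to what the paper attributes to the independent work of Guruswami and Wang (who avoid concatenation), so the direction is plausible, but your proposal does not actually build or analyze it.

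The genuine gaps are the following. First, the existence of the Gray cycle $\Pi$ on $\zo^{n'}$ that threads all $2^{k'}$ codewords of $\Cin$ with short transit arcs is asserted, not established; this is a snake-in-the-box-type existence statement and needs either a construction or a counting argument showing that self-avoiding arcs of total length $\Theta(n' 2^{k'})$ can be packed into $\zo^{n'}$ for an arbitrary capacity-achieving linear code $\Cin$ (the paper sidesteps this entirely by only ever interpolating between two concatenated codewords $w_i, w_{i+1}$ that differ in a controlled set of coordinates, and proving injectivity in Lemma~\ref{lem:injective}). Second --- and this you acknowledge yourself --- the entire hierarchical level structure is undefined: what the level-$\ell$ alphabet is, how the $o(1)$-per-level redundancy is spent, how a block ``disagreeing with its neighbours'' is detected when a level may contain only $O(1)$ blocks, and why the reflected carry rule can be made consistent with the codeword structure at every level simultaneously. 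These are not bookkeeping details; they \emph{are} the mechanism that is supposed to replace the outer RS code, and without them the argument for item~2 reduces to a single flat odometer, which you correctly identify as fatal. Third, the tail-probability claim ``an error of magnitude $t$ requires incorrectly decoding blocks that together account for $\Omega(\min(t,m))$ units of distance'' is exactly the quantitative property the hierarchy must deliver, and it is stated as a consequence of the missing construction rather than derived. In short, the plan identifies the right obstacle but stops at the point where the proof would have to begin; the paper's use of an explicit outer code is precisely the device that closes this gap.
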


\begin{remark}[The running time of $\dec{\Cin}$]\label{rem:runningtime}
We note that the running time of $\dec{\Cin}$ does not appear in \Cref{thm:main}.  The reason is that for any code, the brute-force maximum-likelihood decoder runs in time $\mathrm{poly}(n')\cdot 2^{k'}$.  In the proof of \Cref{thm:main}, we will choose $k' = \log(n+1) \leq \log d$, which implies that $n' = O(\log d)$.  Thus, the running time of $\dec{\Cin}$ is at most $d \cdot \mathrm{polylog(d)}$, and this is sufficiently small to obtain the bound on the running time of $\dec{\calg}$ in \Cref{thm:main}.
\end{remark}

For the best quantitative results, we instantiate \Cref{thm:main} by choosing $\Cin$ to be a binary code that achieves capacity on the binary symmetric channel, for example, polar codes~\cite{A08,tal2013construct,GX14,hassani2014finite,guruswami2020arikan,blasiok2022general}; 
Reed-Muller codes~\cite{AS23,RP23}); or even a random linear code.  This yields the following corollary.

\begin{corollary}\label{cor:ach-capacity}
Let $p \in (0,1/2)$, $\epsilon > 0$ be sufficiently small, and fix positive integers $N$ and $d$ sufficiently large, and with $\calr := \frac{\log_2(N)}{d} = 1 - H_2(p) - \epsilon$.  Then there is an efficiently encodable robust Gray code $\cG:[N] \to \F_2^d$ of rate $\calr$, so that the following holds.  There is a polynomial-time algorithm $\dec{\calg}: \F_2^d \to [N]$ so that for any $j \in [N]$, for $\eta \sim \ber(p)^d$, $\hat{j} = \dec{\calg}(\cG(j) \oplus \eta)$ satisfies
\[ 
    \Pr_\eta[|j-\jhat| \geq t ] \leq \exp(-\Omega(t)) + \exp\left(-\Omega\left(\frac{d}{\log d}\right)\right) \;.
\]
for any $t \geq 0$.
\end{corollary}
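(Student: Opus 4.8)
The plan is to obtain \Cref{cor:ach-capacity} as a direct instantiation of \Cref{thm:main}, the only real work being the choice of the inner code $\Cin$ and the bookkeeping of the error parameters. Given the target rate $\calr = 1 - H_2(p) - \epsilon$, I would apply \Cref{thm:main} with its rate parameter set to $\rho := 1 - H_2(p) - \epsilon/2$ (a constant in $(0,1)$ since $p \in (0,1/2)$ and $\epsilon$ is small) and with its loss parameter $\varepsilon$ set to $\epsilon/2$. The theorem then fixes some $n' = \Theta(\log d)$, and to invoke it I must supply a binary linear $[n', k']_2$ code of rate $k'/n' = \rho$ with a decoder whose block failure probability on $\BSCp$ tends to $0$ as $n' \to \infty$. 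Granting such a code, \Cref{thm:main} outputs a robust Gray code of rate $\rho - \varepsilon = 1 - H_2(p) - \epsilon = \calr$ satisfying precisely the tail bound in the statement.

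For the inner code I would take any member of a capacity-achieving family for $\BSCp$. Since $\BSCp$ has capacity $1 - H_2(p)$ and $\rho$ is bounded away from it, the families mentioned above (polar codes, Reed--Muller codes, or a uniformly random linear code) supply binary \emph{linear} codes of rate $\rho$ whose decoders --- successive cancellation for polar codes, or brute-force maximum-likelihood decoding in general --- have block failure probability on $\BSCp$ going to $0$. Instantiating such a family at block length $n'$ is legitimate because $n' \to \infty$ as $d \to \infty$, so for $d$ large enough the failure probability is below whatever constant threshold the proof of \Cref{thm:main} demands; this provides the code $\Cin$ required above.

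The efficiency claims then follow with no extra effort: the $\tilde{O}(d^3)$ encoding and $\tilde{O}(d^2)$ decoding times are inherited directly from \Cref{thm:main}, and by \Cref{rem:runningtime} the cost of $\dec{\Cin}$ is absorbed into these bounds --- indeed $k' = \rho n' = O(\log d)$, so even brute-force maximum-likelihood decoding of $\Cin$ runs in time $\mathrm{poly}(n') \cdot 2^{k'} = \mathrm{poly}(d)$ (and polar codes are faster still). I do not anticipate a substantive obstacle: the corollary is a specialization of \Cref{thm:main}, and the only points needing care are the two-level split of $\epsilon$ (between the rate loss absorbed by the theorem and the gap to capacity left for $\Cin$) and the verification that the chosen capacity-achieving family is linear and meets the decoding hypothesis of \Cref{thm:main} at the prescribed length $n' = \Theta(\log d)$.
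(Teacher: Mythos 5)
Your proposal is correct and takes essentially the same approach the paper intends: the paper's own "proof" of \Cref{cor:ach-capacity} is implicit in the sentence immediately preceding it ("we instantiate \Cref{thm:main} by choosing $\Cin$ to be a binary code that achieves capacity on the binary symmetric channel..."), and your write-up simply spells out the bookkeeping (the $\epsilon/2$--$\epsilon/2$ split between the inner-code gap to capacity and the theorem's rate-loss parameter, and the observation via \Cref{rem:runningtime} that $\dec{\Cin}$ is cheap since $k' = O(\log d)$) that the paper leaves to the reader.
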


We note that $1 - H_2(p)$ is the Shannon capacity for $\BSCp$, which implies that the limiting rate of $1 - H_2(p)$ in \Cref{cor:ach-capacity} is  optimal in the following sense.

\begin{observation}[Optimality of \Cref{cor:ach-capacity}]
    Suppose that $\calg:[N] \to \{0,1\}^d$ is a robust Gray code with rate $\calr = 1 - H_2(p) + \theta$ for some constant $\theta > 0$.  Let $\eta \sim \ber(p)^d$ for some $p \in (0,1/2)$.  Then, for any procedure that recovers $\hat{j}$ from $\calg(j) \oplus \eta$ and any $t = \Theta(1)$ and for sufficiently large $N$, we have
    $\Pr_\eta[ |j - \hat{j}| > t ] \geq 0.99.$
\end{observation}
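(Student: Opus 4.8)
The plan is to reduce the robust Gray code to a standard error-correcting code for the $\BSCp$ and then invoke the converse to Shannon's coding theorem. Given the robust Gray code $\calg:[N]\to\{0,1\}^d$ with rate $\calr = 1-H_2(p)+\theta$, and a recovery procedure that on input $\calg(j)\oplus\eta$ outputs some $\hat\jmath$, suppose toward a contradiction that $\Pr_\eta[|j-\hat\jmath|>t]<0.99$ for every $j$, i.e.\ the procedure outputs an estimate within $t$ of the true value with probability at least $0.01$. The idea is to sub-sample the message set: pick $M := \lceil N/(2t+1)\rceil$ messages $j_1 < j_2 < \cdots < j_M$ that are pairwise more than $2t$ apart (for instance $j_i = (i-1)(2t+1)$). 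Since $t=\Theta(1)$, the resulting sub-code $\calg|_{\{j_1,\dots,j_M\}}$ has rate $\frac{\log_2 M}{d} \to \calr$ as $N\to\infty$, so for large $N$ it still exceeds $1-H_2(p)+\theta/2$.

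The key observation is that the estimate-within-$t$ guarantee, restricted to this well-separated sub-code, becomes an honest decoding guarantee: define $\dec{}(x)$ to be the (unique, if it exists) $j_i$ with $|{\hat\jmath(x)} - j_i| \le t$, where $\hat\jmath(x)$ is the output of the given recovery procedure on input $x$; output an arbitrary fixed symbol if no such $j_i$ exists. Because the $j_i$ are spaced more than $2t$ apart, at most one of them lies within distance $t$ of any given $\hat\jmath$, so $\dec{}$ is well-defined, and whenever $|j_i - \hat\jmath(\calg(j_i)\oplus\eta)| \le t$ we have $\dec{}(\calg(j_i)\oplus\eta) = j_i$. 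Hence the block error probability of this code-plus-decoder pair on the $\BSCp$ is at most $\max_i \Pr_\eta[|j_i - \hat\jmath| > t] < 0.99$, i.e.\ the success probability is at least $0.01$, bounded away from $0$.

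Now apply the strong converse to the channel coding theorem for the $\BSCp$: any code of rate bounded above $1-H_2(p)$ by a constant has block error probability tending to $1$ as the block length grows. (If one prefers only the weak converse, note that a code with rate exceeding capacity by a constant cannot have error probability bounded away from $1$ — in fact a standard Fano-inequality argument shows the error probability is at least a positive constant depending only on $\theta$ and $p$, and the strong converse pushes this to $1 - o(1)$; either way it contradicts a success probability of $0.01$ for all large $d$.) Concretely, the sub-code has $M = 2^{d(\calr - o(1))}$ codewords with $\calr - o(1) > 1 - H_2(p) + \theta/2$, so its maximum-likelihood decoder — and a fortiori our decoder $\dec{}$ — has error probability $\ge 1 - 2^{-\Omega_\theta(d)}$, which exceeds $0.99$ for $d$ large. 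This contradicts the assumption, so in fact $\Pr_\eta[|j-\hat\jmath|>t]\ge 0.99$ for some $j$ (indeed for $j$ one of the $j_i$), as claimed.

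The main obstacle — really the only subtlety — is getting the quantifiers right: the Observation asserts the bound for the specific encoded message $j$ (there exists a bad $j$), whereas the converse gives an average-over-messages or worst-case-over-messages failure statement; one must be careful that "error probability of the code is large" translates to "some codeword $j_i$ has large failure probability," which is immediate since the max over $i$ dominates the average. A secondary point is citing the strong converse in the right form (it is classical, e.g.\ via the blowing-up lemma or Wolfowitz's argument), but since we only need failure probability $\ge 0.99$ rather than $\to 1$, even the elementary Fano/weak-converse bound suffices once one checks its constant exceeds $0.99$ for the relevant rate gap, or one simply takes $N$ large enough that the strong converse's $1-o(1)$ is already above $0.99$.
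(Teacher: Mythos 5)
Your proof is correct, and it reduces to the same core fact the paper uses — the converse to Shannon's channel coding theorem for the $\BSCp$ — but via a genuinely different reduction. The paper keeps the full message set $[N]$ and turns the approximate-recovery guarantee into a communication scheme by appending a \emph{randomized} decoding step: after computing $\hat{j}$, it samples $\tilde{j}$ uniformly from $\{\hat{j}-t,\dots,\hat{j}+t\}$, giving success probability at least $0.01/(2t+1)$ per message, which already contradicts the exponentially small success probability from the converse. You instead sub-sample a $(2t+1)$-separated set of $M=\lceil N/(2t+1)\rceil$ messages and define a \emph{deterministic} decoder that snaps $\hat{j}$ to the unique nearby $j_i$, which yields the cleaner success probability $\geq 0.01$ at essentially the same rate (since $t=\Theta(1)$, the $\log(2t+1)/d$ rate loss is negligible). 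Both reductions are standard; yours has the aesthetic advantage of producing an honest deterministic block code with non-vanishing success probability (so the contradiction with the strong converse is direct, and even the weak/Fano converse would suffice), while the paper's is a line shorter because it doesn't need to verify that the snapping map is well-defined or to track the rate of the subcode. Your discussion of the quantifier over $j$ is also accurate and matches an imprecision in the paper's own statement: what is actually proved, in both arguments, is that the approximate-recovery guarantee cannot hold simultaneously for all (or even most) messages, so there exists a bad $j$; the observation's phrasing elides this.

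Two very small points worth tightening if you were to write this up. First, when you say the subcode "still exceeds $1-H_2(p)+\theta/2$," you should note this requires $N$ large relative to $t$, which is fine since $t=\Theta(1)$ and the observation is asserted "for sufficiently large $N$." Second, for the claim that the ML decoder "a fortiori" dominates your decoder $\dec{}$: this is correct for error probability but worth a clause, since the converse is usually stated for the ML/optimal decoder and you are lower-bounding the error of a possibly suboptimal one; the direction of the inequality does work out (any decoder's error is at least the ML decoder's error, and the converse lower-bounds even the ML error).
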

\begin{proof}
    Suppose that $\calg$ is in the statement of the observation, but that $\Pr_\eta[|j - \hat{j}| > t] < 0.99.$
    Then one could use $\calg$ to communicate with non-negligible failure probability on the $\BSCp$ as follows.  The sender will encode a message $j \in \{0,\ldots, N-1\}$ as $\calg(j)$ and send it over the channel.  The receiver sees $\calg(j) \oplus \eta$ and uses $\calg$'s decoding algorithm (possibly inefficiently) to recover $\hat{j}$.  Then the receiver returns $\tilde{j}$ chosen uniformly at random from the interval $I = \{\hat{j}-t, \hat{j}-t + 1, \ldots, \hat{j} + t\}.$
    The success probability of this procedure will be at least $0.01 \cdot \frac{1}{2t+1}$.  Indeed, with probability at least $0.01$, we have that $|j - \jhat| \leq t$ and hence $j \in I$, and, if that occurs, then with probability at least $1/(2t+1)$ we will have $\tilde{j} = j$, as $|I| = 2t + 1$.
    However, the converse to Shannon's channel coding theorem implies that the success probability for any code with rate $\calr$ can be at most $\exp(-\Omega_{\theta,p}(d))$ (see, e.g., \cite[Theorem 1.5]{madhu_notes}).  
    This is a contradiction for sufficiently large $d$ when $t = \Theta(1)$ is a constant (or even polynomial in $d$). 
\end{proof}

\subsection{Related Work}\label{sec:rel}
As mentioned earlier, robust Gray codes were originally motivated by applications in differential privacy, and have been used in that context; see~\cite{LP24,ALP21,ALS22,ACLST21} for more details on the connection. 
Beyond the initial construction of \cite{LP24}, the only prior work we are aware of is that of \cite{fathollahi2024improved} mentioned above, which we build on in this paper.
It is worth mentioning that there exist non-binary codes based on the Chinese Remainder Theorem~\cite{XXW20,WX10} that have nontrivial sensitivity, but in our work, we focus on binary codes.

\paragraph{Independent Work.} While this paper was in preparation, it came to our attention that Guruswami and Wang have achieved similar results, but with different techniques~\cite{GW24}. In particular, their approach does not use code concatenation.

\subsection{Technical Overview}\label{sec:tech} 

Before diving into the details, we give an overview of our construction along with a discussion of how our approach leverages (and also departs from) ideas presented in previous work. In \cite{LP24}, the main idea was to transform a linear binary ``Base'' code $\cC_{B}$ with rate $R$ into a robust gray code $\cC_{\cG}$ with rate $\Omega(R)$. The technique used involves first concatenating four copies of a codeword from $\cC_B$, of which two are bit-wise negated, in addition to some padding bits to form a codeword in an intermediate code, denoted $\calw$, that is eventually transformed into the code $\cC_{\cG}$. Since each codeword in $\calw$ (and also $\cC_{\cG}$) is composed of four copies of $x \in \cC_B$, it is possible, even in the presence of noise, to allow one of the copies of $x$ to be unrecoverable and still be able to use majority logic on the other three copies to determine the value of the encoded information. 

In our preliminary version of this paper \cite{fathollahi2024improved}, we were able to use an ordering of $\calw$, itself based on a Gray code, that allows us to construct each codeword in $\calw$ using only \textit{two} copies of a given codeword from $\cC_B$, establishing that the rate $R/2$ is achievable. Under this setup, the $i$th codeword in $\calw$ had the following format:
\begin{align*}
b_i \circ c_i \circ b_i \circ c_i \circ b_i,
\end{align*}
where $c_i \in \cC_B$ and where the  $b_i$ is a short padding sequence. However, it remained an open problem to determine whether it is possible to develop a general technique that converts a base code $\cC_B$ of rate $R$ to a robust gray code whose rate also approaches $R$. In this work, we provide an affirmative answer to the previous question. In order to develop such a technique, we rely on two simple ideas. The first idea is to define our base code $\cC_B$ to be a concatenated coding scheme whose resulting code has certain performance guarantees on the $\BSCp$. The second idea has to do with the use of the padding bits. Rather than place our padding bits $b_i$ in between different copies of $c_i \in \cC_B$ to constuct codewords from $\calw$, we will instead embed the markers $b_i$ at regularly spaced intervals within $c_i$. Both these ideas will be discussed in more details in the following exposition. The full technical details of the construction are included in Section~\ref{sec:defs}.

Before we get into a more detailed overview, we define the ingredients we will need.  We require two codes $\Cout$ and $\Cin$ that are compatible under a concatenated error-correcting code scheme, meaning that the parameters are such that the concatenated code $\cC = \Cout \circ \Cin$ makes sense.  We will choose the outer code $\Cout \subseteq \F_q^n$ to be high-rate linear $[n,k]_q$ code, which can correct a small fraction of worst-case errors; and as in \Cref{cor:ach-capacity}, we will choose the inner code $\Cin$ to be any binary code that achieves capacity on the $\BSCp$.

\paragraph{``Interpolating'' between codewords of an intermediate code.} We follow the same high-level idea as in~\cite{LP24,fathollahi2024improved}, in that we first construct an \emph{intermediate code} $\calw$.  The code $\calw$ is a binary code constructed from $\Cout$ and $\Cin$, along with some bookkeeping information; we will describe it in the next paragraph.  We will define an ordering $w_0, w_1, w_2, \ldots$ on the codewords of $\calw$.  Then we will create our final code $\calg$ by ``interpolating'' between the codewords of $\calw$, in order.  We begin by defining $\calg(0) = w_0$. 
 Now, suppose that $z \in [d]$ is the first location that $w_0$ and $w_1$ differ; we define $\calg(1)$ by flipping that bit in $w_0$.  We continue in this way, flipping bits to interpolate between $w_0$ and $w_1$, and then between $w_1$ and $w_2$, and so on. We will choose parameters so that this will {generate distinct encodings for each of our $N$ codewords in the resulting Gray code.}

 \paragraph{Defining the intermediate code.}  While the high-level approach is similar to that in \cite{LP24}, as discussed in the beginning of this section, the improvements come from the definition of the intermediate code $\calw$.  We define it formally in Definition~\ref{def:calw}, but here we give some intuition for the construction.  We begin with an ordering on the codewords $c_0, c_1, \ldots, c_{|\Cout|}$ of the concatenated code $\cC \subseteq \F_2^{n'n}$.  This ordering (formally defined in Section~\ref{sec:base}) has the property that to get from the codeword $c_{i-1}$ to the codeword $c_{i}$, one must simply add one row of the generator matrix $A$ of $\cC$.

 Now, to construct the $i$th codeword $w_i$ in $\calw$, we proceed as follows.  Let $b_i \in \{0,1\}$ be $0$ if $i$ is even and $1$ if $i$ is odd, and let $\vec{b_i}$ denote the bit $b_i$ repeated many times.\footnote{The number of times it is repeated is $B$, the distance of the inner code.  Since the inner code has short length, $\vec{b_i}$ is also not very long, relative to $n$.}  
 Because of our ordering on $\cC$, the only information we need to describe how to transition from $c_{i-1}$ to $c_i$ is the index of which row of $A$ we must add; call this index $z_i \in [kk']$.
 Let $L_{z_i}$ denote an encoding under the repetition code of this information $z_i$; since $z_i$ is short, $L_{z_i}$ can still be fairly short and also be extremely robust against the $\BSCp$.
 Consider a codeword $c_i \in \Cout \circ \Cin$.  This codeword begins with a codeword $\sigma_i \in \Cout$, and has the form
 \[ c_i = c_i[1] \circ c_i[2] \circ \cdots \circ c_i[n],\]
 where $\circ$ denotes concatenation and where for all $m \in \{1, \ldots, n\}$, we have
 \[ c_i[m] = \Cin(\sigma_i[m]) \in \Cin. \]
 We will arrange these inner codewords $c_i[m] \in \Cin$ along with the quantities $L_{z_i}$ and $b_i$ in the following way:

 \begin{center}
     \begin{tikzpicture}[xscale=1.1]
         \draw (0,0) rectangle (12.5,1);
         \node at (-.5,.5) {$w_i = $};
         \draw[fill=red!10] (0,0) rectangle (1.5,1);
         \node at (.75,.5) {$L_{z_i}$};
         \draw[fill=blue!20] (1.5, 0) rectangle (2,1);
         \node at (1.75, .5) {$\vec{b_i}$};
         \draw[fill=yellow!20] (2,0) rectangle (3,1);
         \node at (2.5, .5) {$c_i[1]$};
         \draw[fill=blue!20] (3, 0) rectangle (3.5,1);
         \node at (3.25, .5) {$\vec{b_i}$};
         \draw[fill=yellow!20] (3.5,0) rectangle (4.5,1);
         \node at (4, .5) {$c_i[2]$};
         \draw[fill=blue!20] (4.5, 0) rectangle (5,1);
         \node at (4.75, .5) {$\vec{b_i}$};
         \draw[fill=yellow!20] (5,0) rectangle (6,1);
         \node at (5.5, .5) {$c_i[3]$};
          \draw[fill=blue!20] (12, 0) rectangle (12.5,1);
         \node at (12.25, .5) {$\vec{b_i}$};
         \begin{scope}[xshift=-.5cm]
         \draw[fill=blue!20] (11, 0) rectangle (11.5,1);
         \node at (11.25, .5) {$\vec{b_i}$};
         \draw[fill=yellow!20] (11.5,0) rectangle (12.5,1);
         \node at (12, .5) {$c_i[n]$};
         \draw[fill=blue!20] (9.5, 0) rectangle (10,1);
         \node at (9.75, .5) {$\vec{b_i}$};
         \draw[fill=yellow!20] (10,0) rectangle (11,1);
         \node at (10.5, .5) {\footnotesize $c_i[n\hspace{-.1cm}-\hspace{-.1cm}1]$};
         \draw[fill=yellow!20] (8.5,0) rectangle (9.5,1);
         \node at (9, .5) {\footnotesize $c_i[n\hspace{-.1cm}-\hspace{-.1cm}2]$};
         \node at (7.5,.5) {$\cdots$};
         \end{scope}
     \end{tikzpicture}
 \end{center}

That is, we alternate the inner codewords $c_i[m]$ with bursts of the bit $b_i$, and then include $L_{z_i}$ at the beginning. 

\paragraph{Decoding the resulting robust Gray code.} To see why we define the intermediate code like we do, let us consider what a codeword $\cG(j)$ of our robust Gray code looks like.  Suppose that $\cG(j)$ was an interpolation between $w_i$ and $w_{i+1}$.  Thus, for some ``crossover point'' $h \in [d]$, $\cG(j)$ might look like this:

 \begin{center}
     \begin{tikzpicture}[xscale=1.1]
         \draw (0,0) rectangle (12.5,1);
         \node at (-.7,.5) {$\calg(j) = $};
         \draw[pattern=north west lines, pattern color=orange!60] (0,0) rectangle (1.5,1);
         \node at (.75,.5) {$L_{z_{i+1}}$};
         \draw[pattern=north west lines, pattern color=cyan!20] (1.5, 0) rectangle (2,1);
         \node(c) at (2.5, 2)  {$\vec{b}_{i+1}$};
         \draw[->] (c) to (1.75, .5);
         \draw[->] (c) to (3.25, .5);
         \draw[pattern=north west lines, pattern color=green!40] (2,0) rectangle (3,1);
         \node at (2.5, .5) {\footnotesize $c_{i+1}[1]$};
         \draw[pattern=north west lines, pattern color=cyan!20] (3, 0) rectangle (3.5,1);
         \draw[pattern=north west lines, pattern color=green!40] 
         (3.5,0) rectangle (4,1);
         \draw[fill=yellow!20] (4,0) rectangle (4.5, 1);
         \node(a) at (3.2, -.7) {$c_{i+1}[2]$};
         \draw[->] (a) to (3.75,.5);
         \node(b) at (4.7, -.7) {$c_{i}[2]$};
         \draw[->] (b) to (4.25,.5);
         \draw[fill=blue!20] (4.5, 0) rectangle (5,1);
         \node at (4.75, .5) {$\vec{b_i}$};
         \draw[fill=yellow!20] (5,0) rectangle (6,1);
         \node at (5.5, .5) {$c_i[3]$};
           \draw[fill=blue!20] (12, 0) rectangle (12.5,1);
         \node at (12.25, .5) {$\vec{b_i}$};
         \begin{scope}[xshift=-.5cm]
         \draw[fill=blue!20] (11, 0) rectangle (11.5,1);
         \node at (11.25, .5) {$\vec{b_i}$};
         \draw[fill=yellow!20] (11.5,0) rectangle (12.5,1);
         \node at (12, .5) {$c_i[n]$};
         \draw[fill=blue!20] (9.5, 0) rectangle (10,1);
         \node at (9.75, .5) {$\vec{b_i}$};
         \draw[fill=yellow!20] (10,0) rectangle (11,1);
         \node at (10.5, .5) {\footnotesize $c_i[n\hspace{-.1cm}-\hspace{-.1cm}1]$};
         \draw[fill=yellow!20] (8.5,0) rectangle (9.5,1);
         \node at (9, .5) {\footnotesize $c_i[n\hspace{-.1cm}-\hspace{-.1cm}2]$};
         \node at (7.5,.5) {$\cdots$};
         \end{scope}
        \node[red](h) at (4, 1.5) {$h$};
        \draw[red, ultra thick] (h) to (4, -.5);
     \end{tikzpicture}
 \end{center}
 That is, everything before the ``crossover point'' $h$ has been changed from $w_i$ to $w_{i+1}$.  This picture gives us some intuition for how we should decode $\calg(j) \oplus \eta$ in order to obtain an estimate for $j$.  
 
 The high-level steps in this case would be: 
 \begin{itemize}
     \item \textbf{Identify the approximate location of $h$.}  Observe that the bit $b_i$ is the opposite of the bit $b_{i+1}$.  Thus, with high probability, we can look at the chunks of $\calg(j) \oplus \eta$ that contain the $b$'s and choose a point where they appear to ``switch over'' as an approximation of $h$. 
     \item \textbf{Decode $\Cin$.}  Next, on each chunk that is either $c_{i+1}[r]$ or $c_i[r]$, we run the decoder for $\Cin$ to correctly decode most of them.  This gives us correct estimates for most of the $\sigma_{i+1}[r]$ or $\sigma_i[r]$.
     \item \textbf{Recover a noisy version of $\sigma_i \in \Cout$.} Recall that $L_{i+1}$ contains all the information necessary to recover $c_i$ from $c_{i+1}$ and vice versa.  Thus, after decoding $L_{i+1}$, we can convert all of the $\sigma_{i+1}[r]$'s (at least, those which we have correctly recovered and which we have correctly identified as belonging to $w_{i+1}$ using our estimate of $h$) into $\sigma_i[r]$ for all $r \in [n]$.
     \item \textbf{Decode $\Cout$ to obtain $i$.}  Given our noisy estimates of $\sigma_i[r]$ for all $r \in [n]$, we can now run the decoding algorithm of $\Cout$.  Recall that $\Cout$ can handle a small fraction of worst-case errors; we will show that indeed our estimates of $\sigma_i[r]$ are incorrect for only a small fraction of $r$'s.  After correctly decoding, we can recover $i$.\footnote{In order to recover $i$ efficiently, we leverage the particular ordering that we used on the codewords of $\Cout$.}
     \item \textbf{Recover $\hat{j}$.}  Having correctly identified $i$ and approximately identified $h$ (with high probability), we can now estimate $j$, which is a function only of $i$ and $h$.   
 \end{itemize}
 Of course, there are many more details to be accounted for.  First, one must of course work out the probability of success of all of the above steps, and work out the parameters.  Second, there are several corner cases not captured in the picture above, depending on where the crossover point $h$ lands.  In the rest of the paper, we tackle these details.  In Section~\ref{sec:defs}, we formally define our construction; in Section~\ref{sec:decalg} we state our decoding algorithm; and in Section~\ref{sec:analysis} we analyze it and prove Theorem~\ref{thm:main}.
 
\section{Definitions and Construction}\label{sec:defs}

\subsection{Notation and useful definitions}
We begin with some notation.  For two vectors $x,y$, we use $\Delta(x,y)$ to denote the Hamming distance between $x$ and $y$, and we use $\|x\|$ to denote the Hamming weight of $x$ (that is, the number of non-zero coordinates). 
For an integer $n$, we use $[n]$ to denote the set $\{1, \ldots, n\}$. 
For two strings or vectors, $u$, and $v$ we denote by $u\circ v$ their concatenation. Throughout this paper, we shall move freely between representation of vectors as strings and vice versa. For a string $u$, we define $\text{pref}_m(u)$ to be the prefix of $u$ of length $m$ and similarly $\text{suff}_m(u)$ will denote the last $m$ symbols of $u$. 
For a vector $v$ and an integer $i\geq 1$, we typically use $v[i]$ to denote the $i$th entry of $v$; one exception, defined formally below, is that for a codeword $c$ in the concatenated code $\Cout \circ \Cin$ and for $m \in [n]$, $c[m] \in \Cin$ refers to the $m$th inner codeword in $c$. 

We will use the following versions of the Chernoff/Heoffding bounds.
\begin{lemma}[Multiplicative Chernoff bound; see, e.g., \cite{mitzenmacher2017probability}]\label{lem:chernoff}
    Suppose $X_1, \ldots, X_n$ are independent identically distributed random variables taking values in $\zo$. Let $X = \sum_{i=1}^n X_i$ and $\mu = \bbE[X]$. Then, for any $0<\alpha < 1$:
    \[
    \Pr[X > (1 + \alpha) \mu] < e^{-\frac{\mu  \alpha^2}{3}}
    \]
    and
    \[
    \Pr[X < (1 - \alpha) \mu] < e^{-\frac{\mu  \alpha^2}{2}}
    \]
\end{lemma}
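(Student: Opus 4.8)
The plan is to prove this via the classical exponential-moment method: apply Markov's inequality to $e^{\pm tX}$ for a free parameter $t>0$, factor the moment generating function using independence, and then optimize over $t$ and clean up with two elementary scalar inequalities.

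For the upper tail, fix $t>0$. Since $x \mapsto e^{tx}$ is increasing, Markov's inequality gives
\[
\Pr[X > (1+\alpha)\mu] = \Pr\!\left[e^{tX} > e^{t(1+\alpha)\mu}\right] \le e^{-t(1+\alpha)\mu}\,\bbE\!\left[e^{tX}\right].
\]
Independence yields $\bbE[e^{tX}] = \prod_{i=1}^n \bbE[e^{tX_i}]$; writing $p = \mu/n$ for the common Bernoulli parameter, $\bbE[e^{tX_i}] = 1 + p(e^t-1) \le \exp\!\big(p(e^t-1)\big)$ by the inequality $1+x\le e^x$, so $\bbE[e^{tX}] \le \exp\!\big(\mu(e^t-1)\big)$. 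Substituting the optimal choice $t=\ln(1+\alpha)>0$ yields
\[
\Pr[X > (1+\alpha)\mu] \le \left(\frac{e^{\alpha}}{(1+\alpha)^{1+\alpha}}\right)^{\mu}.
\]
It then suffices to verify the scalar inequality $(1+\alpha)\ln(1+\alpha) - \alpha \ge \alpha^2/3$ for $\alpha\in(0,1)$; this is a routine one-variable calculus check (both sides agree at $\alpha=0$, and the difference has nonnegative derivative on $(0,1)$), and it gives $\Pr[X>(1+\alpha)\mu] < e^{-\mu\alpha^2/3}$.

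For the lower tail the argument is mirror-symmetric: for $t>0$,
\[
\Pr[X < (1-\alpha)\mu] = \Pr\!\left[e^{-tX} > e^{-t(1-\alpha)\mu}\right] \le e^{t(1-\alpha)\mu}\,\bbE\!\left[e^{-tX}\right] \le \exp\!\Big(\mu\big(e^{-t}-1 + t(1-\alpha)\big)\Big),
\]
using the analogous bound $\bbE[e^{-tX}] \le \exp\!\big(\mu(e^{-t}-1)\big)$. The optimal $t = -\ln(1-\alpha)>0$ produces $\big(e^{-\alpha}/(1-\alpha)^{1-\alpha}\big)^{\mu}$, and the remaining inequality $\alpha + (1-\alpha)\ln(1-\alpha) \ge \alpha^2/2$ on $(0,1)$ is again elementary (the left minus right side vanishes at $0$ and has derivative $-\ln(1-\alpha)-\alpha\ge 0$), giving $\Pr[X<(1-\alpha)\mu] < e^{-\mu\alpha^2/2}$.

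I do not anticipate any genuine obstacle: this is a textbook statement and the whole argument is mechanical once the exponential-moment template is in place. The only mildly fiddly step is the verification of the two scalar inequalities needed to pass from the sharp bounds $\big(e^{\alpha}/(1+\alpha)^{1+\alpha}\big)^\mu$ and $\big(e^{-\alpha}/(1-\alpha)^{1-\alpha}\big)^\mu$ to the Gaussian-type forms in the statement; as the authors indicate, one may alternatively just cite \cite{mitzenmacher2017probability}.
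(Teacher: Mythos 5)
The paper does not prove this lemma; it cites it directly from \cite{mitzenmacher2017probability}, so there is no in-paper argument to compare against. Your proof is the standard exponential-moment (Chernoff) argument and is essentially correct: Markov on $e^{\pm tX}$, factor via independence, the elementary bound $1 + x \le e^{x}$, optimize $t$, then reduce to the two scalar inequalities. One small caution on the upper-tail scalar inequality $(1+\alpha)\ln(1+\alpha)-\alpha \ge \alpha^2/3$: letting $f(\alpha)$ be the left side minus the right, one has $f(0)=0$, $f'(\alpha)=\ln(1+\alpha)-\tfrac{2\alpha}{3}$, $f'(0)=0$, and $f''(\alpha)=\tfrac{1}{1+\alpha}-\tfrac{2}{3}$ \emph{changes sign} at $\alpha=1/2$; so $f'$ is not monotone. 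The conclusion still holds because $f'$ rises from $0$, peaks near $\alpha=1/2$, and remains positive down to $f'(1)=\ln 2 - \tfrac{2}{3}>0$, hence $f'\ge 0$ throughout $(0,1]$ — but the justification needs to check the endpoint value $f'(1)$ rather than assert nonnegativity of $f'$ from a sign argument on $f''$. Your lower-tail scalar inequality $\alpha+(1-\alpha)\ln(1-\alpha)\ge \alpha^2/2$ is clean as stated, since there $g'(\alpha)=-\ln(1-\alpha)-\alpha \ge 0$ directly.
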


\begin{lemma}[Hoeffding's Inequality; see, e.g., \cite{mitzenmacher2017probability}]\label{lem:hoeffding}
    Suppose $X_1, \ldots, X_n$ are independent random variables (not necessarily identically distributed) taking values in $\pm 1$.  Let $X = \sum_{i=1}^n X_i$ and $\mu = \mathbb{E}[X]$.  Then for any $t \geq 0$, 
    \[ \Pr[ |X - \mu| \geq t ] \leq 2\exp(-t^2/2n).\]
\end{lemma}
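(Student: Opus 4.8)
The plan is to prove this bound by the standard exponential-moment (Chernoff) method, which reduces the deviation inequality to a bound on the moment generating function of each summand. I would first establish the one-sided bound on $\Pr[X - \mu \geq t]$ and then recover the two-sided statement by symmetry together with a union bound.

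For the one-sided tail, fix $\lambda > 0$ and apply Markov's inequality to the nonnegative random variable $e^{\lambda(X-\mu)}$:
\[
  \Pr[X - \mu \geq t] \;=\; \Pr\!\big[\,e^{\lambda(X-\mu)} \geq e^{\lambda t}\,\big] \;\leq\; e^{-\lambda t}\,\mathbb{E}\!\big[e^{\lambda(X-\mu)}\big].
\]
Since the $X_i$ are independent, the expectation factors as $\prod_{i=1}^n \mathbb{E}\big[e^{\lambda(X_i - \mathbb{E}[X_i])}\big]$, so the problem reduces to bounding a single factor. For this I would invoke Hoeffding's lemma: if $Y$ is a mean-zero random variable taking values in an interval of length $\ell$, then $\mathbb{E}[e^{\lambda Y}] \leq e^{\lambda^2 \ell^2/8}$. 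Here $Y = X_i - \mathbb{E}[X_i]$ takes values in an interval of length $\ell = 2$, since $X_i \in \{-1,+1\}$, so each factor is at most $e^{\lambda^2/2}$ and hence
\[
  \Pr[X - \mu \geq t] \;\leq\; e^{-\lambda t + n\lambda^2/2}.
\]
Minimizing the exponent over $\lambda > 0$ by taking $\lambda = t/n$ gives $\Pr[X - \mu \geq t] \leq e^{-t^2/(2n)}$. Running the identical argument with each $X_i$ replaced by $-X_i$ gives $\Pr[X - \mu \leq -t] \leq e^{-t^2/(2n)}$, and a union bound over these two events yields $\Pr[|X - \mu| \geq t] \leq 2\exp(-t^2/(2n))$.

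The main obstacle is Hoeffding's lemma, since this is where the boundedness hypothesis is actually exploited. I would prove it by writing $\psi(\lambda) = \log \mathbb{E}[e^{\lambda Y}]$ and performing a second-order Taylor expansion about $\lambda = 0$. One checks directly that $\psi(0) = 0$ and $\psi'(0) = \mathbb{E}[Y] = 0$, while $\psi''(\lambda)$ equals the variance of $Y$ under the exponentially tilted law with density proportional to $e^{\lambda Y}$; since $Y$ still takes values in an interval of length $\ell$ under that law, its variance is at most $(\ell/2)^2 = \ell^2/4$, so $\psi''(\lambda) \leq \ell^2/4$ uniformly in $\lambda$. Taylor's theorem with remainder then gives $\psi(\lambda) \leq \lambda^2 \ell^2/8$, which is the claim. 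With this lemma available, everything in the preceding paragraph is routine calculus.
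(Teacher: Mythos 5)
Your proof is correct. The paper does not prove this lemma at all — it cites it to the textbook \cite{mitzenmacher2017probability} and moves on — and what you have written is the standard Chernoff-method proof (one-sided bound via Markov on $e^{\lambda(X-\mu)}$, factorization by independence, Hoeffding's lemma with $\ell = 2$, optimizing $\lambda = t/n$, then symmetry plus a union bound), together with the usual proof of Hoeffding's lemma via $\psi''(\lambda) \leq \ell^2/4$ under the tilted law. This is exactly what the cited reference would give.
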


Gray codes were introduced in \cite{gray} (see also, e.g.,~\cite{knuth}) which also defined a particular Gray code called the \emph{binary reflected code}. We will use this Gray code to order the codewords in one of our ingredient codes.

\begin{definition}[Binary Reflected Code, \cite{gray}] \label{def:BRC}
        Let $k$ be a positive integer. The \textbf{Binary Reflected Code (BRC)} is a map $\calr_k: \sett{0,\ldots, 2^k-1} \rightarrow \mathbb{F}_2^k $ defined recursively as follows.
    \begin{enumerate}
    \item For $k = 1$,  $\calr_1(0) = 0 $ and $\calr_1(1) = 1 $.
    \item For $k > 1$, for any $i\in \sett{0,\ldots,2^k-1}$, 
    \[ \calr_k(i) = \begin{cases} \calr_{k-1}(i) \circ 0 & i < 2^{k-1} \\ {\calr}_{k-1} ( 2^{k} - i-1) \circ 1 & i \geq 2^{k-1} 
    \end{cases} \]
\end{enumerate}
\end{definition}

Before continuing, we introduce some a few more definitions related to the BRC.  

\begin{definition}\label{def:BRC_stuff}
For $i \in \{1,2,\ldots, 2^{k}-1\}$, let $z_i$ be the unique index where 
\[ \calr_k(i)[z_i] \neq \calr_k(i-1)[z_i].\]
Let $\mathcal{N}_k(z,i)$ be the number of $t \in \{0,1,\ldots, i\}$ so that $z_t = z$.
\end{definition}
That is, the value $z_i$ is the index on which the $i$th codeword in the BRC differs from the previous codeword; equivalently, $z_i$ is the integer for which $\calr_k(i) = \calr_k(i-1) + e_{z_i}$ where $e_{z_i}$ is the $z_i$th unit vector. $\mathcal{N}_k(z,i)$ counts the number of codewords among $\{\calr_k(0), \ldots, \calr_k(i)\}$ that differ from the previous codeword in the $z$th index.  
We give an example of all of this notation below in \Cref{ex:brc}.

\begin{example}\label{ex:brc}
    To illustrate Definitions~\ref{def:BRC} and~\ref{def:BRC_stuff}, we give an example for $k=1,2,3$.
    For $k=1$, we have:
    \begin{center}
        \begin{tabu}{c||cc}
            $i$ & 0&1 \\\hline\hline
          \rowfont{\color{blue}} \textcolor{black}{$\calr_1(i)$} & 0&1
        \end{tabu}
    \end{center}
    For $k=2$, we have:
    \begin{center}
    \begin{tabu}{c||cc|cc}
    $i$ & 0&1&2&3 \\\hline\hline
    \rowfont{\color{blue}} \textcolor{black}{$\calr_2(i)[0]$}&0&1&1&0 \\
       \rowfont{\color{green!50!black}}
    \textcolor{black}{$\calr_2(i)[1]$} &0&0&1&1 \\
    \end{tabu}
    \end{center}
    For $k=3$, we have:
     \begin{center}
    \begin{tabu}{c||cccc|cccc}
    $i$ & 0&1&2&3&4&5&6&7 \\\hline\hline

     \rowfont{\color{blue}} \textcolor{black}{$\calr_3(i)[0]$}&0&1&1&0 & 0&1&1&0 \\
    \rowfont{\color{green!50!black}} \textcolor{black}{$\calr_3(i)[1]$}& 0&0&1&1 & 1&1&0&0 \\
  \rowfont{\color{orange!80!black}} \textcolor{black}{$\calr_3(i)[2]$} &0&0&0&0 & 1&1&1&1\\
    \end{tabu}
    \end{center}
     \setlength{\tabcolsep}{6pt}
     The pattern is that in order to obtain the table for $\calr_k$, we take the table for $\calr_{k-1}$, and repeat it two times, first forwards and then backwards; then we add $\mathbf{0} \circ \mathbf{1}$ as the final row.

Next we give some examples of $z_i$ and $\mathcal{N}_k$.
    For $k=3$, we have the following values of $z_i$:
    \begin{center}
    \begin{tabular}{c||cccccccc}
    $i$ &  1&2&3&4&5&6&7 \\ \hline\hline
    $z_i$ &0&1&0&2&0&1&0 
    \end{tabular}
    \end{center}
    That is, $\calr_3(0) = (0,0,0)$ and $\calr_3(1) = (1,0,0)$ differ in the $z_1 = 0$ component, $\calr_3(1) =(1,0,0)$ and $\calr_3(2) = (1,1,0)$ differ in the $z_2 = 1$ component, and so on. 
    Then, for example, $\mathcal{N}_{k=3}(z=0,i=3) = 2$, as there are two values of $t \leq i$ (names, $t-1$ and $t=3)$ so that $z_t = 2$.  As a few more examples, we have $\mathcal{N}_{k=3}(z=1, i=3) = 1$, and $\mathcal{N}_{k=3}(z=0,i=7) = 4.$
\end{example}

Below in \Cref{obs:BRC}, we state a few useful facts about the $z_i$ and $\mathcal{N}_k(z,i)$.  Briefly, the reason these facts are useful for us is that we will use $\calr_k$ to create the ordering on the codewords $c_i \in \calc$ and $w_i \in \calw$ discussed in the introduction.  Understanding $z_i$ and $\mathcal{N}_k(z,i)$ will be useful for efficiently computing indices in this ordering.

\begin{observation}[Bit Flip Sequence of BRC]\label{obs:BRC} 
For $k \geq 1$, the following holds:
\begin{enumerate}
    \item The index $z_i$ is equal to zero if and only if $i$ is odd.
    \item $\mathcal{N}_k(z,i) = \left\lfloor \frac{i+2^z}{2^{z+1}}\right\rfloor.$
\end{enumerate}
\end{observation}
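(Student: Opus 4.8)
The plan is to prove both claims by induction on $k$, exploiting the recursive "reflect and append" structure of $\calr_k$ described in Definition~\ref{def:BRC} and illustrated in Example~\ref{ex:brc}. The key structural fact I would first isolate is how the bit-flip sequence $(z_1, z_2, \ldots, z_{2^k-1})$ for $\calr_k$ relates to that of $\calr_{k-1}$. Writing $S_k := (z_1^{(k)}, \ldots, z_{2^k-1}^{(k)})$ for this sequence, the recursion says that the first $2^{k-1}-1$ transitions of $\calr_k$ are exactly the transitions of $\calr_{k-1}$ (since the last bit stays $0$ and the first $k-1$ bits run through $\calr_{k-1}$), so these contribute $S_{k-1}$. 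The transition from index $2^{k-1}-1$ to $2^{k-1}$ flips \emph{only} the new last bit, contributing the single symbol $k-1$ (using $0$-indexed bit positions, so the newly appended row is indexed $k-1$). Finally, the transitions among indices $2^{k-1}, \ldots, 2^k-1$ run through $\calr_{k-1}$ in \emph{reverse} on the first $k-1$ bits while the last bit stays $1$; reversing a Gray code gives a Gray code whose bit-flip sequence is the reverse of the original, and here that reverse of $S_{k-1}$ is $S_{k-1}$ itself by the palindromic symmetry of the BRC flip sequence (which I would note follows from the same recursion). Hence
\[
S_k = S_{k-1} \circ (k-1) \circ S_{k-1}.
\]
With base case $S_1 = (0)$, this recurrence is exactly the "ruler sequence": $z_i$ equals the position of the lowest set bit of $i$ (equivalently, the $2$-adic valuation of $i$). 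In particular $z_i = 0$ iff $i$ is odd, which is claim~(1); one can also verify claim~(1) directly and more cheaply just by observing that the last bit of $\calr_k(i)$ changes between consecutive indices exactly when $i$ crosses $2^{k-1}$ — no, simpler: the $0$th bit $\calr_k(i)[0]$ is periodic with pattern $0,1,1,0,0,1,1,0,\ldots$ from the $k=2$ table unfolding, so it flips precisely at odd $i$.

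For claim~(2), I would again induct on $k$ using the decomposition $S_k = S_{k-1} \circ (k-1) \circ S_{k-1}$. Fix $z$ and $i \in \{0, \ldots, 2^k-1\}$; $\mathcal{N}_k(z,i)$ counts occurrences of $z$ among $z_1, \ldots, z_i$ (note $\mathcal{N}_k(z,0) = 0$). Split on whether $i < 2^{k-1}$, $i = 2^{k-1}$ (wait, need care at the boundary index $2^{k-1}$ whose incoming transition is the symbol $k-1$), or $i > 2^{k-1}$. If $i \le 2^{k-1}-1$ we are counting within the first copy of $S_{k-1}$, so $\mathcal{N}_k(z,i) = \mathcal{N}_{k-1}(z,i)$ for $z \le k-2$ and $\mathcal{N}_k(k-1, i) = 0$. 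If $i \ge 2^{k-1}$, then $\mathcal{N}_k(z,i) = \mathcal{N}_{k-1}(z, 2^{k-1}-1) + [z = k-1] + \mathcal{N}_{k-1}(z, i - 2^{k-1})$. One then checks that the claimed closed form $\lfloor (i+2^z)/2^{z+1} \rfloor$ satisfies exactly the same recursion and base case; this is a routine floor-function identity, using that $\lfloor (i+2^z)/2^{z+1}\rfloor$ for $z \le k-2$ only "sees" $i \bmod 2^{z+1}$-type behavior and that at $i = 2^{k-1}-1$ it evaluates to $2^{k-1-z-1}$ for $z\le k-2$, matching the number of occurrences of $z$ in a full block $S_{k-1}$. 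For $z = k-1$ the formula gives $\lfloor (i + 2^{k-1})/2^k \rfloor$, which is $0$ for $i < 2^{k-1}$ and $1$ for $2^{k-1} \le i \le 2^k-1$, matching the single middle occurrence. Alternatively — and this is probably the cleanest write-up — once claim~(1)'s argument is upgraded to show $z_i = \nu_2(i)$ (the $2$-adic valuation), claim~(2) is immediate: $\mathcal{N}_k(z,i) = \#\{1 \le t \le i : \nu_2(t) = z\} = \#\{t : t \le i,\ 2^z \mid t,\ 2^{z+1} \nmid t\} = \lfloor i/2^z \rfloor - \lfloor i/2^{z+1}\rfloor$, and an elementary manipulation shows $\lfloor i/2^z\rfloor - \lfloor i/2^{z+1}\rfloor = \lfloor (i + 2^z)/2^{z+1}\rfloor$.

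The main obstacle I anticipate is not any single hard step but rather getting the bookkeeping at the reflection boundary exactly right: one must be careful that reversing $\calr_{k-1}$ produces the \emph{reverse} of its bit-flip sequence and then argue (or simply observe from the recursion) that $S_{k-1}$ is a palindrome, and one must handle the off-by-one at index $2^{k-1}$ correctly since that transition flips the brand-new top bit rather than anything inside the $\calr_{k-1}$ block. Indexing conventions (the paper uses $0$-indexed bit positions $\calr_k(i)[0], \ldots, \calr_k(i)[k-1]$, while $i$ ranges over $1, \ldots, 2^k-1$ for the $z_i$'s) are the other place where care is needed. Once the clean identity $z_i = \nu_2(i)$ is in hand, both claims drop out, so I would aim to front-load all the effort into establishing that identity rigorously from the recursion.
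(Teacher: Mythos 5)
Your proposal is correct. Your first step — deriving $S_k = S_{k-1}\circ(k-1)\circ S_{k-1}$ by noting that reflection reverses the flip sequence and that the flip sequence is a palindrome — is exactly the paper's argument, and your handling of Item~1 is likewise the same. For Item~2 you offer two routes: the first (splitting on $i<2^{k-1}$ versus $i\ge 2^{k-1}$ and inducting with floor-function bookkeeping) is the paper's proof; the second, which you rightly identify as cleaner, is genuinely different. The paper never names the ruler-sequence identity $z_i = \nu_2(i)$, and instead verifies the closed form $\bigl\lfloor (i+2^z)/2^{z+1}\bigr\rfloor$ by a somewhat tedious case analysis in which $i$ is written as $2^{k-1}+\Delta_1 2^{z+1}+\Delta_2$ and the floors are manipulated by hand. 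Your alternative short-circuits this: once $z_i=\nu_2(i)$ is established (a one-line induction from the recursion, using $\nu_2(2^{k-1}+r)=\nu_2(r)$ for $0<r<2^{k-1}$), you get $\mathcal{N}_k(z,i)=\lfloor i/2^z\rfloor - \lfloor i/2^{z+1}\rfloor$ purely by counting multiples, and the final identity $\lfloor i/2^z\rfloor-\lfloor i/2^{z+1}\rfloor=\lfloor(i+2^z)/2^{z+1}\rfloor$ is a two-case check on $i\bmod 2^{z+1}$. What your approach buys is conceptual clarity and shorter bookkeeping — the closed form drops out of a counting argument rather than being verified to satisfy a recursion; what the paper's approach buys is that it stays entirely self-contained within the $\mathcal{N}_k$ recurrence without introducing the auxiliary fact $z_i=\nu_2(i)$. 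Both are complete; if you write yours up, do prove $z_i=\nu_2(i)$ by the induction you sketched rather than asserting it, and confirm the floor identity explicitly, since those are the two places you flagged as "routine."
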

\begin{proof}
Let $Z_k = (z_1, z_2, \ldots, z_{2^k-1})$, where the $z_t$'s are defined with respect to $k$, as in the statement of the observation.  We first observe that for any $k \geq 2$,
\begin{equation}\label{eq:rec}
Z_k = Z_{k-1} \circ (k-1) \circ Z_{k-1}.
\end{equation}
Indeed, for the base case $k=2$, this follows by inspection: We have $Z_1 = 0$, and $Z_2 = 0,1,0$.  For $k > 2$, it is clear from construction that $Z_k = Z_{k-1} \circ (k-1) \circ \overleftarrow{Z_{k-1}}$, where the $\overleftarrow{\cdot}$ notation means that the sequence is reversed.  However, by induction, $Z_{k-1}$ is symmetric, so we have $\overleftarrow{Z_{k-1}} = Z_{k-1}$.  This establishes the statement for $k$.

Given \eqref{eq:rec}, Item 1 follows immediately by induction.
For Item 2, we proceed by induction on $k$.  As a base case, when $k=1$, the statement follows by inspection.  Now suppose that $k > 2$ and that the statement holds for $k-1$.  

\paragraph{Case 1: $i < 2^{k-1}$.} First suppose that $i < 2^{k-1}$.  Then for any $z < k-1$,
\[ \mathcal{N}_k(z,i) = \mathcal{N}_{k-1}(z,i) = \left\lfloor \frac{i +2^z}{2^{z+1}} \right\rfloor\]
by induction, establishing the statement.  Further, if $z = k-1$ but $i < 2^{k-1}$, we have
\[ \mathcal{N}_k(k-1,i) = 0 = \left\lfloor\frac{i +2^{k-1}}{2^{k}} \right\rfloor,\]
and the statement again follows.
\paragraph{Case 2: $i \geq 2^{k-1}$.} 
Next, we turn our attention to the case that $i \geq 2^{k-1}$.  In this case, by \eqref{eq:rec}, we have
\begin{equation*}
\mathcal{N}_k(z,i) = \mathcal{N}_{k-1}(z, 2^{k-1}-1) + \mathbf{1}[z = (k-1)] + \mathcal{N}_{k-1}(z, i -2^{k-1}).
\end{equation*}
If $z < k-1$, then by induction we have
\[ \mathcal{N}_k(z,i) = \left\lfloor \frac{ 2^{k-1} - 1+ 2^z}{2^{z+1}} \right\rfloor + \left\lfloor \frac{ i - 2^{k-1} + 2^z }{2^{z+1}}\right\rfloor. \]
Suppose that $i = 2^{k-1} + \Delta_1 \cdot 2^{z+1} + \Delta_2$, where $\Delta_2 < 2^{z+1}$.  Then we can write the above as:
\begin{align*}
    \mathcal{N}_k(z,i) &= \left\lfloor 2^{k-z - 2} + \frac{1}{2} - \frac{1}{2^{z+1}}  \right\rfloor + \left\lfloor \Delta_1 + \frac{1}{2} + \frac{\Delta_2}{2^{z+1}}\right\rfloor \\
    &=  2^{k-z - 2} + \Delta_1 + \left\lfloor \frac{1}{2} + \frac{\Delta_2}{2^{z+1}} \right\rfloor,
\end{align*}
where above we have used the fact that $z < k-1$ and so $2^{k-z-2}$ is an integer.  On the other hand, we have
\[ \left\lfloor \frac{i + 2^z}{2^{z+1}} \right\rfloor = 
\left\lfloor 2^{k-z-2} + \Delta_1 + \frac{\Delta_2}{2^{z+1}} + \frac{1}{2} \right\rfloor 
= 2^{k-z-2} + \Delta_1 + \left\lfloor \frac{1}{2} + \frac{\Delta_2}{2^{z+1}}\right\rfloor,\]
which is the same.  Thus, we conclude that if $z < k-1$,
\[ \mathcal{N}_k(z,i) = \left\lfloor \frac{i + 2^z}{2^{z+1}} \right\rfloor,\]
as desired.  On the other hand, if $z = k-1$, then by \eqref{eq:rec} we have $\mathcal{N}_k(k-1,i) = 1$ for all $i \geq 2^{k-1}$, and indeed this is equal to $\left\lfloor \frac{i + 2^{k-1}}{2^k} \right\rfloor$.  This completes the proof of Item 2.
\end{proof}

\begin{definition}[Unary code] The \textbf{Unary code} $\calu \subseteq \F_2^\ell$ is defined as the image of the encoding map $\enc{\calu}:\{0,\ldots, \ell\} \to \F_2^\ell$ given by
$ \enc{\calu}(j) :=  1^{j}\circ 0^{\ell -j }.$
The decoding map $\dec{\calu}: \F_2^\ell \to \{0,\ldots, \ell\}$ is given by
\[ \dec{\calu} (x)  =  \mathrm{argmin}_{j\in {\sett{0,\ldots,\ell}}} \Delta(x ,\enc{\calu}(j)).\]
Similarly, we define the \textbf{complementary Unary code} 
 $\calu^{\text{comp}} \subseteq \F_2^\ell$ as the image of the encoding map $\enc{\calu^{\text{comp}}}:\{0,\ldots, \ell\} \to \F_2^\ell$ given by
$ \enc{\calu^\mathrm{comp}}(j) :=  0^{j}\circ 1^{\ell -j }.$
The decoding map $\dec{\calu^{\mathrm{comp}}}: \F_2^\ell \to \{0,\ldots, \ell\}$ is given by
\[ \dec{\calu^{\text{comp}}} (x)  =  \mathrm{argmin}_{j\in {\sett{0,\ldots,\ell}}} \Delta(x ,\enc{\calu^{\text{comp}}}(j)).\]
\end{definition}
Naively, the runtime complexity of $\dec{\calu}$ is $O(\ell^2)$, as one would loop over $\ell$ values of $j$ and compute $\Delta(x, \enc{\calu}(j))$ for each.  However, this decoder can be implemented in time linear in $\ell$, which is our next lemma.

\begin{lemma} \label{lem:unary-dec-complexity}
    Let $\calu$ be the unary code of length $\ell$. Then $\dec{\calu}$ and $\dec{\calu^{\text{comp}}}$ can be implemented to run in time $O(\ell)$.
\end{lemma}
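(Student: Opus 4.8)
The plan is to show that $\dec{\calu}$ (and symmetrically $\dec{\calu^{\text{comp}}}$) can be computed by a single left-to-right scan of the input $x \in \F_2^\ell$, maintaining a running tally. First I would observe that for a fixed input $x$, the quantity being minimized is
\[
\Delta(x, \enc{\calu}(j)) = \Delta(x, 1^j 0^{\ell-j}) = \#\{i \le j : x[i] = 0\} + \#\{i > j : x[i] = 1\}.
\]
The first idea is to rewrite this so that consecutive values of $j$ differ by an easily computable increment. Let $f(j) := \Delta(x, \enc{\calu}(j))$. Then $f(0) = \|x\|$ (the number of ones in $x$), and for each $j \ge 1$ we have $f(j) - f(j-1) = +1$ if $x[j] = 0$ and $f(j) - f(j-1) = -1$ if $x[j] = 1$; equivalently, $f(j) = \|x\| + \#\{i \le j : x[i]=0\} - \#\{i \le j : x[i] = 1\}$.

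Given this recurrence, the algorithm is immediate: initialize $\texttt{cur} \gets \|x\|$, $\texttt{best} \gets \texttt{cur}$, $\texttt{argbest} \gets 0$; then for $j = 1, \ldots, \ell$, update $\texttt{cur} \gets \texttt{cur} + 1 - 2x[j]$, and if $\texttt{cur} < \texttt{best}$ set $\texttt{best} \gets \texttt{cur}$ and $\texttt{argbest} \gets j$. At the end return $\texttt{argbest}$. Computing $\|x\|$ takes $O(\ell)$ time, and the loop runs $\ell$ iterations each doing $O(1)$ work, so the total running time is $O(\ell)$. Correctness follows because the algorithm examines every value $f(0), f(1), \ldots, f(\ell)$ exactly once and returns the index achieving the minimum (with the tie-breaking convention of smallest such index, matching any fixed choice of $\mathrm{argmin}$; if a different tie-break is desired one changes the strict inequality to non-strict). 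For $\dec{\calu^{\text{comp}}}$ the identical argument applies with the roles of $0$ and $1$ swapped: there $f(0) = \ell - \|x\|$ and the increment at step $j$ is $2x[j] - 1$.

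I do not expect any real obstacle here; the statement is essentially a routine dynamic-programming / prefix-sum observation. The only mildly delicate point is bookkeeping the tie-breaking to make sure the returned value genuinely equals the $\mathrm{argmin}$ as defined (which is itself only defined up to ties), but since the definition does not pin down a tie-break rule, any consistent choice suffices, so this is not a genuine difficulty. One could also remark that the same scan computes $\min_j f(j)$ itself, which will be the useful quantity when this lemma is invoked during decoding, but that is beyond what the lemma statement asks for.
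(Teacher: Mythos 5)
Your proof is correct and takes essentially the same approach as the paper: the paper precomputes a prefix-sum array $T[m]$ of zero-counts and then evaluates each $\Delta(x,\enc{\calu}(j))$ in $O(1)$ from it, while you maintain the same quantity as a running tally via the $\pm 1$ increment recurrence; these are the same prefix-sum idea phrased slightly differently, and both give $O(\ell)$.
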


\begin{proof}
We prove the statement just for $\dec{\calu}$; it is the same for $\dec{\calu^{\mathrm{comp}}}$.
For a fixed $j$, by definition we have $\enc{\calu}(j) = 1^j 0^{\ell-j}$. To compute $\Delta(x, \enc{\calu}(j))$ for each $j$, one needs to count the number of zeros before index $j$ and the number of ones after index $j$. We can express this as follows:
\begin{equation}
\Delta(x, \enc{\calu}(j)) = \sum_{m=1}^{\ell} \mathbbm{1}[x[m] = \enc{\calu}(j)] = \sum_{m=0}^{j} \mathbbm{1}[x[m] = 0] + \sum_{m=j+1}^{\ell} \mathbbm{1}[x[m] = 1]
\end{equation}

Define the array $T[m]$ to count the number of zeros up to index $m$:

\[
    T[m] =\begin{cases}
         \mathbbm{1} [ x [m] = 0] &  m = 1 \\
         T[m-1]+ \mathbbm{1} [x[m] = 0 ] & m>1
    \end{cases}
\]
This array can be computed in time $O(\ell)$. Using $T[m]$, we can rewrite $\Delta(x, \enc{\calu}(j))$ as: 
\begin{equation}
    \Delta (x, \enc{\calu}(j)) = T[j] + (\ell - j - (T[\ell] -  T[j])))
\end{equation}
Thus, given the array $T[m]$, the distance for each $j$ can be computed in $O(1)$ time. Therefore, the overall time complexity of $\dec{\calu}$ is $O(\ell)$.

\end{proof}
Next, we define the failure probability of a binary code.
\begin{definition}\label{def:prob-fail} 
Fix $p \in (0,1)$.
Let $\mathcal{C} \subseteq \mathbb{F}_2^n$ be a code with message length $k$ and encoding and decoding maps $\dec{\mathcal{C}}$ and $\enc{\mathcal{C}}$ respectively.
The probability of failure of $\mathcal{C}$ is
    \begin{equation*}
        \pfail( \mathcal{C}) = \max_{v\in \mathbb{F}_2^k } \Pr_{\eta_p}[ \dec{\mathcal{C}} ( \enc{\mathcal{C}}(v) + \eta_p) \neq v ], 
    \end{equation*}
    where the probability is over a noise vector $\eta_p \in \F_2^n$ with $\eta_p \sim \mathrm{Ber}(p)^n$.
\end{definition}
Note that this definition is simply the block error probability of the binary code $\mathcal{C}$ one the binary symmetric channel with parameter $p$.  
\subsection{Base code}\label{sec:base}

\paragraph{Ingredients.} We begin by fixing an outer code and an inner code. Let $q = 2^{k'}$ for some integer $k'$.
Let $\Cout$ be an $[n,k]_q$ linear code over $\F_q$.

Denote the rate of $\Cout$ by $\Rout \in (0,1)$ and the relative distance of $\Cout$ by $\delout \in (0,1)$. 
Note that it is possible to decode $\Cout$ from $e$ errors and $t$ erasures as long as $2e +t < \delout n$.  Let $\dec{\Cout}:(\F_q \cup \bot)^n \to \F_q^k$ denote the decoding map for $\Cout$ that can do this, where $\bot$ represents an erasure.  (Later, we will choose $\Cout$ to be a Reed--Solomon code, so in particular $\dec{\Cout}$ can be implemented efficiently).

Let $\Cin \subseteq \F_2^{n'}$ be a linear code of dimension $k'$.  We will abuse notation and use $\Cin: \zo^{k'} \rightarrow \zo^{n'}$ to also denote its encoding map.
Let $\Rin = \frac{k'}{n'}$ denote the rate of $\Cin$.
Let $\cC = \Cout \circ \Cin$ denote the concatenation of $\Cout$ and $\Cin$, so that
\[ \cC = \sett{\left( \Cin\left(\sigma[1]\right), \ldots, \Cin\left(\sigma[n]\right) \right)\,:\, \sigma \in \Cout} \subseteq \F_2^{n\cdot n'}, \]
where above we identify $\F_2^{k'}$ with $\F_q = \F_{2^{k'}}$ in the natural ($\F_2$-linear) way.
Let $\Aconc \in \F_2^{k' \cdot k \times n' \cdot n}$ be the generator matrix of $\cC$. 
Note that $\Aconc$ can be obtained efficiently from the generator matrices of $\Cin$ and $\Cout$.

Throughout the paper, we shall use $\sigma$ to denote an outer codeword and $c$ to denote a codeword in the concatenated code $\cC$. 
To ease notation, for $c \in \cC$, we will denote $c[m] := \Cin(\sigma[m]) = c[(m-1)\cdot n' + 1:m\cdot n'] \in \F_2^{n'}$. Namely, $c[m]$ is the $m$th inner codeword inside the concatenated codeword $c$. Similarly, for a row $a$ of the generator matrix $A$, we will let $a[m] := a[(m-1)\cdot n' + 1:m\cdot n']$. (Note that for any other string in the paper, when we write $x[m]$, we mean the $m$th \emph{bit} in the string $x$; we use this notation only for codewords $c$ in the concatenated code $\cC = \Cout \circ \Cin$, including the rows of $A$).

\paragraph{Ordering the base code.}
We define an order on the codewords $c_0, c_1, \ldots, c_{2^{kk'}-1}$ of our concatenated code $\cC$.  Define $c_0$ to be the zero codeword. For $i > 0$,
The $i$th codeword in $\cC$ is defined by
\begin{equation}
    c_i= \Aconc^T \calr_{k'k}(i).
\end{equation}
As $\calr_{k'k}$ is a binary reflected code, $\calr_{k'k}(i-1)$ and $\calr_{k'k}(i)$ differ in exactly one index.  Let $z_i$ denote this index, so we have
\[ \calr_{k'k}(i-1)[z_i] \neq \calr_{k'k}(i)[z_i]\;. \] 
Denote by $a_m$ the $m$th row of $A$. Then, for every $i\in \{1, 2, \ldots, 2^{kk'} - 1\}$, we have
\begin{equation} \label{eq:sig-row}
    c_{i} = c_{i-1} \oplus a_{z_i}.
\end{equation}

This is clearly an ordering of all the codewords of $\cC$. Indeed, $\mathcal{R}_{k'k}$ is a bijection and $A$ is full-rank, so as $i$ varies in $\{0, \ldots, 2^{kk'}-1$\}$, c_i = A^T \mathcal{R}_{k'k}(i)$ varies over all the codewords in $\cC$, hitting each $c \in \cC$ exactly once.

    Note that the ordering of $\cC$ immediately implies an ordering of $\Cout$. Indeed, by the concatenation process, there is a bijection between $\cC$ and $\Cout$. Thus, the $i$th codeword $c_i$ in our concatenated codeword defines also the $i$th codeword  in the outer code.  
    We let $\sigma_i \in \Cout$ denote this outer codeword.  That is,
    \[ c_i = \sigma_i \circ \Cin = ( \Cin(\sigma_i[1]), \ldots, \Cin(\sigma_i[n]) ).\]

\subsection{Intermediate Code}
Next, we explain how to get our intermediate code $\calw$ from our base codes $\Cout, \Cin$, and $\cC = \Cout \circ \Cin$.
\paragraph{Encoding the generator matrix row difference.}
Recall that the difference of every two consecutive  codewords is a row of the generator matrix $A$, namely, $c_i - c_{i-1} = a_{z_i}$. 
In the $i$th codeword of the intermediate code $\calw$, we will include $z_i$, encoded with a repetition code that repeats each bit of $\repblow$ times. 
We shall explicitly state the value of $\rowenc$ when we prove \Cref{thm:main} and choose the parameters of our scheme.
Since $z_i$ can be represented using $\log(kk')$ 
{bits}, we shall encode $z_i$ {using the map}\footnote{Note that $\log(kk')$ might not be an integer. Going forward, we will drop floors and ceilings in order to ease notation and the analysis. We note that the loss in the rate due to these roundings is negligible and does not affect the asymptotic results.} $\call:\mathbb{F}_2^{\log(kk')} \rightarrow \mathbb{F}_2^{\rowenc}$ 
which simply {performs} repetition encoding described above, to obtain
\[ L_{z_i} = \call(z_i) \;.\]

\paragraph{Construction of $\calw$.}
Now, we describe how to generate our intermediate code $\calw$.  Informally, to get the $i$th codeword $w_i \in \calw$, we take the $i$th codeword $c_i \in \cC$; add $L_{z_i}$ at the beginning; and then break up $c$ by including short strings of repeated bits in between each inner codeword $c_i[m] \in \Cin$.  Formally, we have the following definition.
\begin{definition} 
\label{def:calw}
    Let $B$ be a (constant) integer that will be chosen later. Let $d := n' n + B (n+1)+ \rowenc$. 
    The intermediate code $\calw$, along with its ordering, is defined as follows. For each $i \in \{0, \ldots, q^k-1\}$, define $w_i \in \zo^{d}$ by the equation
    \begin{equation}
        w_i =\begin{cases}
             L_{z_i} \circ 0^{B} \circ  c_i[1]\circ 0^{B} \circ  \cdots \circ 0^{B} \circ c_i[n] \circ 0^{B} & \text{if } i \text{ is even} \\
            L_{z_i} \circ 1^{B} \circ  c_i[1]\circ 1^{B} \circ \cdots \circ 1^{B} \circ c_i[n] \circ 1^{B} & \text{if } i \text{ is odd}
        \end{cases}
    \end{equation}
where $c_i$ is the $i$th codeword in $\cC$, and where we recall that $c_i[m] = \Cin(\sigma_i[m])$ denotes the $m$th inner codeword in $c_i$.  Finally, we define $\calw \subseteq \F_2^{d}$ by
\[\calw = \{ w_i \,:\, i \in \{0, 1, \ldots, q^k-1\} \}.\] 
Note that $\calw$ has the natural ordering $w_0, w_1, \ldots, w_{q^k-1}$.
\end{definition}

\subsection{The Final Code}

To create our robust Gray code $\calg$, given any two consecutive codewords in $\calw$, we inject extra codewords between them to create $\calg$.  Before we formally define this, we begin with some notation.
\begin{definition}[The parameters $r_i, h_{i,j}, \bar{j}$]\label{def:notation}
Let $\calw \subseteq \zo^d$ be a code defined as in Definition \ref{def:calw}. For each $i \in \{0, \ldots, q^k-1\}$,  define $r_i = \sum_{\ell=1}^{i} \Delta(w_{\ell-1},w_{\ell})$, and let $N = r_{q^k-1}$. 
Also, for $i \in \{0, \ldots, q^k-1\}$ and $1\leq j < \Delta (w_i, w_{i+1})$, let $h_{i,j} \in [d]$ be the $j$th index where codewords $w_i$ and $w_{i+1}$ differ. 
We will also define $h_i =(h_{i,1}, h_{i,2}, \ldots , h_{i,\Delta (w_i, w_{i+1})-1} ) \in [d]^{\Delta(w_i, w_{i+1})-1}$ to be the vector of all indices in which $w_i$ and $w_{i+1}$ differ, in order, except for the last one.\footnote{The reason we don't include the last one is because of Definition~\ref{def:calg} below, in which we flip bits one at a time to move between the codewords $g_j$ of our robust Gray code $\calg$. In more detail, the reason is because once the last differing bit has been flipped, $g_j$ will lie in $[w_{i+1}, w_{i+2})$, not $[w_i, w_{i+1})$.} 
Finally, for $i \in \{0, \ldots, q^k-1\}$ and for $j \in [r_i, r_{i+1})$, we will use the notation $\bar{j}$ to denote $j - r_i$.  
That is, $\bar{j}$ is the index of $j$ in the block $[r_i, r_{i+1})$ in which $j$ falls.
\end{definition}

With this notation, we are ready to define our robust Gray code $\calg$.
\begin{definition}[Definition of $\calg$]
\label{def:calg}
Define the zero'th codeword of $\calg$ as $g_0 = w_0$. 
Fix $j \in \{1, \ldots, N-1\}$.  If  $j = r_i$ for some $i$, we define $g_j \in \zo^d$ by $g_j = w_i$.
On the other hand, if 
$j \in (r_i, r_{i+1})$ for some $i$, then we define $g_j \in \zo^{d}$ as
\begin{equation}
g_j = \text{pref}_{h_{i,\bar{j}}}(w_{i+1}) \circ \text{suff}_{h_{i,\bar{j}}+1}(w_i).
\end{equation}
Finally, define $\calg \subseteq \zo^d$ by $\calg = \{g_j \,:\, j \in \{0, \ldots, N-1\}\},$ along with the encoding map $\enc{\calg}: \{0,\ldots, N-1\} \to \zo^d$ given by $\enc{\calg}(j) = g_j$. 
\end{definition}

Note that when $j \in [r_i, r_{i+1})$, the last bit that has been flipped to arrive at $g_j$ in the ordering of $\calg$ (that is, the ``crossover point'' alluded to in the introduction)  is $h_{i,\bar{j}}.$
We make a few useful observations about Definition~\ref{def:calg}.  The first observation follows immediately from the definition.  
\begin{observation}[$\calg$ is a Gray code]\label{obs:graycode}
$\calg$ is a Gray code.  That is,
    for any $j \in \{0, \ldots, N-1\}$, we have that $ \Delta( g_j , g_{j+1} ) = 1$.  
\end{observation}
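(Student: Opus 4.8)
The plan is to verify directly from Definition~\ref{def:calg} that any two consecutive codewords $g_j$ and $g_{j+1}$ differ in exactly one coordinate, splitting into cases according to where $j$ and $j+1$ fall relative to the block boundaries $r_0 < r_1 < \cdots < r_{q^k-1} = N$. First I would handle the generic case where $j$ and $j+1$ both lie in the same half-open block $(r_i, r_{i+1})$, or $j = r_i$ and $j+1 \in (r_i, r_{i+1})$. Here $\bar{j}$ and $\overline{j+1}$ are consecutive integers in $\{1, \ldots, \Delta(w_i, w_{i+1})-1\}$ (or $\bar j$ ranges from $0$), and by the definition $g_j = \mathrm{pref}_{h_{i,\bar j}}(w_{i+1}) \circ \mathrm{suff}_{h_{i,\bar j}+1}(w_i)$, with $g_{j+1}$ given by the same formula but with the next crossover index $h_{i,\bar j + 1}$. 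Since $h_{i,\bar j} < h_{i,\bar j + 1}$ are consecutive indices on which $w_i$ and $w_{i+1}$ differ, the two strings $g_j$ and $g_{j+1}$ agree everywhere except exactly at coordinate $h_{i, \bar j + 1}$ (on that coordinate $g_j$ has the $w_i$-bit and $g_{j+1}$ has the $w_{i+1}$-bit, which differ by definition of $h_{i,\bar j + 1}$); all earlier differing indices have already been flipped to the $w_{i+1}$ value in both, and all later coordinates still carry the $w_i$ value in both, so the Hamming distance is exactly $1$.

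Next I would treat the boundary case: $j+1 = r_{i+1}$, i.e.\ $j$ is the last index in the block $(r_i, r_{i+1})$, so $\bar j = \Delta(w_i, w_{i+1}) - 1$, and $g_{j+1} = w_{i+1}$. At this point $g_j$ equals $w_{i+1}$ on its prefix of length $h_{i, \Delta(w_i,w_{i+1})-1}$ and equals $w_i$ on the remaining suffix; the only coordinate on which $w_i$ and $w_{i+1}$ still differ in this suffix is the single ``last differing index'' that was deliberately excluded from the vector $h_i$ (per the footnote in Definition~\ref{def:notation}). Flipping that one coordinate turns $g_j$ into $w_{i+1} = g_{j+1}$, so again $\Delta(g_j, g_{j+1}) = 1$. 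One should also note the degenerate possibility $\Delta(w_i, w_{i+1}) = 1$: then the block $(r_i, r_{i+1})$ is empty, $g_{r_i} = w_i$ and $g_{r_{i+1}} = w_{i+1}$ are consecutive in the ordering of $\calg$, and they differ in exactly one coordinate by hypothesis on $\calw$ (the construction of $w_i$ in Definition~\ref{def:calw} does allow consecutive $w_i$'s to be at distance $1$ only if $L_{z_i}$ and the $b_i$-bursts and $c_i$ happen to change in a single bit, but regardless, whatever $\Delta(w_i, w_{i+1})$ is, the definition inserts exactly $\Delta(w_i,w_{i+1})-1$ intermediate codewords, so the spacing works out). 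Finally, the case $j = 0$ with $g_0 = w_0$ is subsumed by taking $i=0$ and $\bar j = 0$ in the argument above.

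I do not expect any serious obstacle here; the statement is essentially immediate once the indexing in Definition~\ref{def:calg} is unwound. The one point that requires a little care — and the only place where a careless argument could go wrong — is the bookkeeping at the block boundary, namely correctly matching the exclusion of the last differing coordinate from $h_i$ with the fact that $g_j$ for $j = r_{i+1}-1$ must be at distance exactly $1$ from $w_{i+1}$, and simultaneously checking that $g_{r_{i+1}} = w_{i+1}$ is consistent with both the ``$j = r_{i+1}$'' clause and the next block's ``$j \in (r_{i+1}, r_{i+2})$'' clause with $\bar j = 0$ giving back $w_{i+1}$ as the prefix-length-$h_{i+1,0}$... — i.e.\ confirming the two formulas agree at the seam. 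Writing out the prefixes and suffixes explicitly at that single coordinate handles it, so the whole proof is a short case check.
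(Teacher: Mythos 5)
Your case analysis is correct. The paper itself gives no proof for this observation, stating only that it ``follows immediately from the definition''; your careful unwinding of Definition~\ref{def:calg} at the block boundaries is exactly the verification the paper implicitly delegates to the reader, and you handle the one subtle point (the excluded last differing index in $h_i$ reconciling $g_{r_{i+1}-1}$ with $g_{r_{i+1}} = w_{i+1}$) correctly.
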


Next, we bound the rate of $\calg$.
\begin{observation}[Rate of $\calg$]\label{obs:rate} 
The rate of the robust Gray code $\calg$ defined in \Cref{def:calg} is at least 
\begin{equation}\label{eq:calg-rate}
\frac{\Rout \Rin}{1 + \frac{B}{n'}\cdot(1 + \frac{1}{n})  + \frac{L}{nn'}}.
\end{equation}
\end{observation}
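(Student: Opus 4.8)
The plan is to directly compute the rate $\calr = \frac{\log_2 N}{d}$ by separately lower-bounding $\log_2 N$ and recalling the value of $d$ from Definition~\ref{def:calw}. Recall that $N = r_{q^k-1} = \sum_{\ell=1}^{q^k-1}\Delta(w_{\ell-1},w_\ell)$ is the total number of codewords of $\calg$ minus one, and that $\calg$ is a Gray code, so consecutive codewords of $\calg$ differ in exactly one bit; thus $N$ counts exactly the number of single-bit flips needed to walk through all of $\calw$ in order. Since each codeword $w_i \in \calw$ appears as $\calg(r_i)$, the number of distinct codewords in $\calg$ is at least $q^k$, and in fact $N \geq q^k - 1$ crudely, but this is far too weak — we need $\log_2 N$ to be essentially $\log_2(q^k) = kk' = \Rout\Rin \cdot nn'$ so that dividing by $d$ gives the claimed rate. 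So the first, and main, step is to argue $N \geq q^k = 2^{kk'}$, i.e. that on average the walk between consecutive $w_i$'s takes at least one step. This actually holds trivially since $\Delta(w_{\ell-1},w_\ell) \geq 1$ for every $\ell$ (the $w_\ell$ are distinct), giving $N \geq q^k - 1$, and hence $\log_2 N \geq \log_2(q^k-1)$, which is $kk' - o(1)$; absorbing the additive $o(1)$ into the floor/ceiling slop already being ignored per the paper's footnote, we may use $\log_2 N \geq kk' = \Rout \Rin n n'$.

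Next I would simply substitute. By Definition~\ref{def:calw}, $d = n'n + B(n+1) + \rowenc = n'n + B(n+1) + L$. Therefore
\[
\calr = \frac{\log_2 N}{d} \;\geq\; \frac{\Rout\Rin\, n n'}{n'n + B(n+1) + L} \;=\; \frac{\Rout \Rin}{1 + \frac{B(n+1)}{n n'} + \frac{L}{n n'}} \;=\; \frac{\Rout\Rin}{1 + \frac{B}{n'}\left(1 + \frac1n\right) + \frac{L}{nn'}},
\]
which is exactly \eqref{eq:calg-rate}. The only slightly delicate point is making sure the bound $\log_2 N \geq kk'$ is legitimate rather than $\log_2(q^k-1)$; I expect this to be the main (minor) obstacle, and it is handled either by the paper's standing convention to drop lower-order terms, or by noting that one may always pad with a dummy codeword, or most cleanly by observing that in fact many transitions $\Delta(w_{\ell-1},w_\ell)$ are strictly larger than $1$ (each differs at least in the $\vec{b}$ blocks, contributing $\Theta(Bn)$ extra flips), so $N$ comfortably exceeds $q^k$ and no rounding issue arises. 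Since Observation~\ref{obs:rate} only claims a lower bound "at least," the crude estimate suffices, and no further work is needed beyond the substitution above.
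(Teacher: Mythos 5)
Your proof is correct and is essentially the paper's argument: both compute $d = n'n + B(n+1) + L$ and lower-bound $\log_2 N$ by $kk' = \Rout\Rin\, nn'$, the paper by observing that $\calg$ has strictly more codewords than $\calw$ (which has exactly $q^k$) at the same length, and you by bounding $N = \sum_\ell \Delta(w_{\ell-1},w_\ell) \geq q^k$ via the $\vec{b}$-block contributions. Your extra care about the $q^k$ versus $q^k-1$ rounding is fine but, as you note, unnecessary once you observe each transition flips at least $B(n+1)\geq 2$ bits.
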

\begin{proof}
Recall that $\Cin$ has rate $\Rin$ and $\Cout$ has rate $\Rout$. 
Then the code $\calw$ constructed as in Definition~\ref{def:calw} has rate 
\begin{align*}
    \frac{\log q^k}{n' \cdot n + B(n + 1) + \rowenc} &= \frac{\Rout n \cdot \Rin n'}{n' \cdot n + B(n + 1) + \rowenc} \\
    & = \frac{\Rout \Rin}{1 + \frac{B}{n'}\cdot (1 + \frac{1}{n}) + \frac{\rowenc}{ nn'}} \;.
\end{align*}
Thus, the rate of $\calg$ is at least the above, given that $\calg$ has more codewords than $\calw$ but the same length.
\end{proof}

Fix $i$, and suppose that $g_j$ is obtained as an intermediate codeword between $w_i$ and $w_{i+1}$.  Then, on the coordinates in which $w_i$ and $w_{i+1}$ differ, $g_j$ will disagree with $w_{i}$ for a first chunk of them, and agree with $w_i$ for the rest.  We make this precise in the following observation.
\begin{observation}\label{obs:unary}
    Let $g_j \in \calg$, and suppose that $j \in (r_i, r_{i+1})$ for some $i \in \{0, \ldots, q^k -1\}$.  
    Recall from \Cref{def:notation} that $h_i \in [d]^{\Delta(w_i, w_{i+1})-1}$ is the vector of positions on which $w_i$ and $w_{i+1}$ differ (except the last one).
    Then
    \[ (g_j + w_i)[h_i] = \enc{\calu}(\bar{j}),\]
    where $\calu \subset \zo^{\Delta(w_i, w_{i+1})-1}$ is the unary code of length $\Delta(w_i, w_{i+1})-1$.  Above, $(g_j + w_i)[h_i]$ denotes the restriction of the vector $g_i + w_i \in \F_2^d$ to the indices that appear in the vector $h_i$. 
    
     Further, for every $m\geq \bar{j}$, we have 
    \[
    (g_j + w_i)[h_i[1:m]] = \enc{\calu}(\bar{j}),
    \]
    where $\calu$ is the unary code of length $m$. That is, even if we take the first $m$ values of $h_i$, then as long as $m \geq \bar{j}$, the restriction of $(g_j + w_i)$ to these values  match the unary encoding of $\bar{j}$.
\end{observation}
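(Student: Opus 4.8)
The plan is to unwind Definition~\ref{def:calg} and compare $g_j$ with $w_i$ coordinate by coordinate on the positions listed in $h_i$. First I would record the relevant ranges: by Definition~\ref{def:notation} we have $r_{i+1}-r_i = \Delta(w_i,w_{i+1}) =: \Delta$, so since $j \in (r_i, r_{i+1})$, the index $\bar j = j - r_i$ satisfies $\bar j \in \{1,\dots,\Delta-1\}$. In particular $h_{i,\bar j}$ is a valid entry of $h_i$, and the unary codeword $\enc{\calu}(\bar j) = 1^{\bar j}\circ 0^{\,\Delta-1-\bar j}$ of length $\Delta-1$ is well-defined. By Definition~\ref{def:calg}, $g_j = \text{pref}_{h_{i,\bar j}}(w_{i+1}) \circ \text{suff}_{h_{i,\bar j}+1}(w_i)$, which says precisely that $g_j$ agrees with $w_{i+1}$ on coordinates $1,\dots,h_{i,\bar j}$ and agrees with $w_i$ on coordinates $h_{i,\bar j}+1,\dots,d$.

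Next I would use the facts, from Definition~\ref{def:notation}, that the entries of $h_i=(h_{i,1},\dots,h_{i,\Delta-1})$ are listed in strictly increasing order and that $w_i$ and $w_{i+1}$ disagree at every one of them. Fix $m\in\{1,\dots,\Delta-1\}$ and examine coordinate $h_{i,m}$. If $m\le \bar j$ then $h_{i,m}\le h_{i,\bar j}$, so $g_j$ agrees with $w_{i+1}$ at $h_{i,m}$, and hence $(g_j + w_i)[h_{i,m}] = (w_{i+1}+w_i)[h_{i,m}] = 1$. If instead $m > \bar j$ then $h_{i,m} \ge h_{i,\bar j}+1$, so $g_j$ agrees with $w_i$ at $h_{i,m}$, giving $(g_j+w_i)[h_{i,m}] = 0$. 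Concatenating these over $m=1,\dots,\Delta-1$ yields $(g_j+w_i)[h_i] = 1^{\bar j}\circ 0^{\,\Delta-1-\bar j} = \enc{\calu}(\bar j)$, which is the first claim.

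For the ``further'' statement, the identical case split applies verbatim when one restricts to the first $m'$ entries $h_i[1:m']$ for any $m'\ge \bar j$: the first $\bar j$ of these coordinates contribute $1$'s and the remaining $m'-\bar j$ contribute $0$'s, so $(g_j+w_i)[h_i[1:m']] = 1^{\bar j}\circ 0^{\,m'-\bar j} = \enc{\calu}(\bar j)$ for the unary code of length $m'$ (which is valid since $\bar j \le m'$).

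The argument is essentially bookkeeping, so I do not expect a serious obstacle; the only points that need a little care are verifying that $1\le \bar j\le \Delta-1$ so that all indices are in range, and matching the (non-)strict inequalities — that $m\le \bar j \iff h_{i,m}\le h_{i,\bar j}$ and $m > \bar j \iff h_{i,m} \ge h_{i,\bar j}+1$ — which follows from the strict monotonicity of the entries of $h_i$.
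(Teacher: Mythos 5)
Your proof is correct and follows the same approach as the paper's: unwind Definition~\ref{def:calg} to see that $g_j$ agrees with $w_{i+1}$ on coordinates up to $h_{i,\bar j}$ and with $w_i$ afterward, then read off the restriction to $h_i$. You simply spell out the index bookkeeping (the range of $\bar j$, the monotonicity of $h_i$, and the case split $m\le\bar j$ vs.\ $m>\bar j$) that the paper's one-line proof leaves implicit.
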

\begin{proof}
    By definition, $h_i$ contains the indices on which $w_i$ and $w_{i+1}$ differ, and also by definition, by the time we have reached $g_j$, the first $j - r_i = \bar{j}$ of these indices have been flipped from agreeing with $w_i$ to agreeing with $w_{i+1}$.  Thus, if we add $g_j$ and $w_i$ (mod 2), we will get $1$ on the first $j-r_i$ indices and $0$ on the on the rest.  The ``further'' part follows immediately.
\end{proof}

Our next objective is to show that Definition~\ref{def:calg} actually defines an injective map.
We begin by providing some notation for different parts of the codeword $g_j\in \calg$. For a string $x$, $x[m:m']$ denotes the substring $(x_m, x_{m+1} , \ldots , x_{m'})$. For any $x\in \zo^d$ define 
\begin{itemize}
    \item $\gind = x[1:\rowenc ] $,
    \item $s_m =  x [ \rowenc + (m-1) (B +n') + 1 :  \rowenc + (m-1) (B +n') + B]  $ for $m \in [n + 1 ]$,
    \item $\tilde{c}_m = x[ \rowenc + m B  + (m-1) n' +1 : \rowenc + m (B +n')  ]$ for $m\in[n] $,
\end{itemize}
As a result, any string $x$ that is either a codeword or a corrupted codeword, has the following format (see also Figure~\ref{fig:notation}): 
\begin{equation}
    x = \gind \circ s_1 \circ \tilde{c}_1 \circ \ldots \tilde{c}_n \circ s_{n+1} \;.
\end{equation} 
 We will call each of $\gind_i, s_i, c_i$ a \textbf{\chnk}. For a codeword $g_j\in \calg $ such that $j\in [r_i, r_{i+1})$ we will call a \chnk\  a \textbf{\chunkn} if it is equal to its corresponding \chnk\ in either $w_i$ or $w_{i+1}$.  
 This notation is illustrated in Figure~\ref{fig:notation}.

 \begin{figure}
 \begin{center}
     \begin{tikzpicture}[xscale=1.1]
         \draw (0,0) rectangle (12.5,1);
         \node at (-.5,.5) {$x = $};
         \draw[fill=red!10] (0,0) rectangle (1.5,1);
         \node at (.75,.5) {$\gind$};
    
         \draw[fill=blue!20] (1.5, 0) rectangle (2,1);
         \node at (1.75, .5) {$s_1$};
         \draw[fill=yellow!20] (2,0) rectangle (3,1);
         \node at (2.5, .5) {$\tilde{c}_1$};
         \draw[fill=blue!20] (3, 0) rectangle (3.5,1);
         \node at (3.25, .5) {$s_2$};
         \draw[fill=yellow!20] (3.5,0) rectangle (4.5,1);
         \node at (4, .5) {$\tilde{c}_2$};
         \draw[fill=blue!20] (4.5, 0) rectangle (5,1);
         \node at (4.75, .5) {$s_3$};
         \draw[fill=yellow!20] (5,0) rectangle (6,1);
         \node at (5.5, .5) {$\tilde{c}_3$};
      
         \draw[fill=blue!20] (12, 0) rectangle (12.5,1);
         \node[rotate=90] at (12.25, .5) {$s_{n+1}$};
            \begin{scope}[xshift=-.5cm]
         \draw[fill=blue!20] (11, 0) rectangle (11.5,1);
         \node at (11.25, .5) {$s_n$};
         \draw[fill=yellow!20] (11.5,0) rectangle (12.5,1);
         \node at (12, .5) {$\tilde{c}_n$};
         \draw[fill=blue!20] (9.5, 0) rectangle (10,1);
         \node[rotate=90] at (9.75, .5) {$s_{n-1}$};
         \draw[fill=yellow!20] (10,0) rectangle (11,1);
         \node at (10.5, .5) {$\tilde{c}_{n-1}$};
         \draw[fill=yellow!20] (8.5,0) rectangle (9.5,1);
         \node at (9, .5) {$\tilde{c}_{n-2}$};
         \node at (7.25,.5) {$\cdots$};\end{scope}
              \draw[ultra thick,red,rounded corners] (0,-.5) rectangle (1.5, 1.5);
         \node[red](a) at (.75,3) {\begin{minipage}{2cm} \begin{center} This is a \chnk.\end{center} \end{minipage}};
         \draw[->,red,thick] (a) to (.75, 1.5);
         \draw[ultra thick,violet,rounded corners] (1.5,-.5) rectangle (2, 1.5);
         \node[violet](b) at (1.75,2.3) {\begin{minipage}{1cm} \begin{center} So is this.\end{center} \end{minipage}};
         \draw[->,violet,thick](b) to (1.75,1.5);
         \node[green!50!black] at (2.5, 1.75) {etc.};

    \begin{scope}[yshift=-3cm]
         \draw (0,0) rectangle (12.5,1);
         \node at (-.7,.5) {$g_j = $};
         \draw[pattern=north west lines, pattern color=orange!60] (0,0) rectangle (1.5,1);
         \node at (.75,.5) {$\gind$};
         \draw[pattern=north west lines, pattern color=cyan!20] (1.5, 0) rectangle (2,1);
         \node at (1.75, .5) {$s_1$};
         \draw[pattern=north west lines, pattern color=green!40] (2,0) rectangle (3,1);
         \node at (2.5, .5) {$\tilde{c}_1$};
         \draw[pattern=north west lines, pattern color=cyan!20] (3, 0) rectangle (3.5,1);
         \node at (3.25, .5) {$s_2$};
         \draw[pattern=north west lines, pattern color=green!40] 
         (3.5,0) rectangle (4,1);
         \draw[fill=yellow!20] (4,0) rectangle (4.5, 1);
       
         \draw[fill=blue!20] (4.5, 0) rectangle (5,1);
         \node at (4.75, .5) {$s_3$};
         \draw[fill=yellow!20] (5,0) rectangle (6,1);
         \node at (5.5, .5) {$\tilde{c}_3$};
           \draw[fill=blue!20] (12, 0) rectangle (12.5,1);
         \node[rotate=90] at (12.25, .5) {$s_{n+1}$};
         \begin{scope}[xshift=-.5cm]
         \draw[fill=blue!20] (11, 0) rectangle (11.5,1);
         \node at (11.25, .5) {$s_n$};
         \draw[fill=yellow!20] (11.5,0) rectangle (12.5,1);
         \node at (12, .5) {$\tilde{c}_{n}$};
         \draw[fill=blue!20] (9.5, 0) rectangle (10,1);
         \node[rotate=90] at (9.75, .5) {$s_{n-1}$};
         \draw[fill=yellow!20] (10,0) rectangle (11,1);
         \node at (10.5, .5) {$\tilde{c}_{n-1}$};
         \draw[fill=yellow!20] (8.5,0) rectangle (9.5,1);
         \node at (9, .5) {$\tilde{c}_{n-2}$};
         \node at (7.5,.5) {$\cdots$};
         \end{scope}
        \node[red](h) at (4, 1.5) {$h_{i,\bar{j}}$};
        \draw[red, ultra thick] (h) to (4, -.1);
        \draw[violet,rounded corners, ultra thick] (3.5,-.5) rectangle (4.5,2);
        \node[violet](a) at (9,-1) {This is \emph{not} a full chunk.  All the other chunks are full chunks.};
        \draw[->,violet, thick] (a) to[out=180,in=-90] (4,-.55);
        \end{scope}
     \end{tikzpicture}
 \end{center}
 \caption{The notation used to break up vectors $x \in \{0,1\}^d$ into {\chnk}s (top), and the distinction between {\chnk}s and {\chunkn}s when $x$ happens to be a codeword $g_j$ (bottom).  Notice that for $g_j$, if $j \in [r_i, r_{i+1})$ then we have, e.g., $s_m = b_i^{B}$ and $\tilde{c}_m = c_i[m]$ or $c_{i+1}[m]$, whenever the corresponding {\chnk}s are {\chunkn}s.}\label{fig:notation}
 \end{figure}
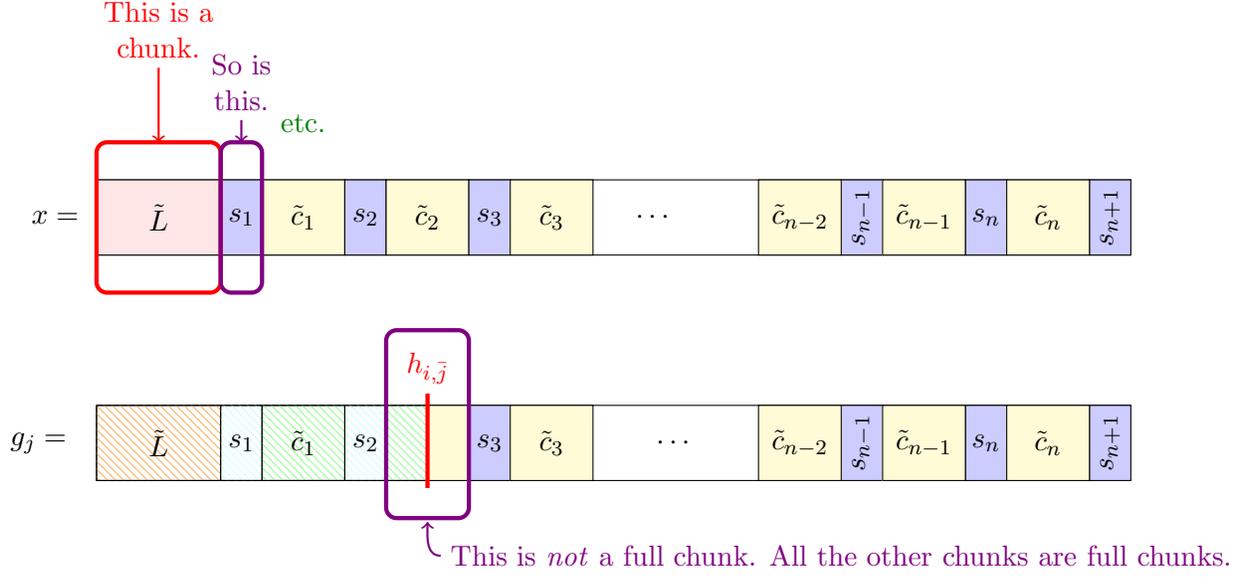
 
 The following lemma shows that for each $g_j$ there is at most a single \chnk\ that is not a \chunkn.

\begin{lemma}\label{lemma:one-broken-chunk}
    Fix $j \in \{0,\ldots,N-1\}$.  Suppose that $i \in \{0, \ldots, q^k - 1\}$ is such that $j \in [r_i, r_{i+1})$, so 
    $g_j\in \calg $ can be written as $ g_j =  \gind \circ s_1 \circ c_1 \circ \ldots c_n \circ s_{n+1} $ as above.   
    Then at most one of the substrings in  $ \cals = \sett{\gind, s_1 \ldots, s_{n+1}, c_1, \ldots, c_n }$  is not equal to the corresponding substring in  $w_i$ or $w_{i+1}$.
\end{lemma}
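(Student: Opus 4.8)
The plan is to track how $g_j$ is built from $w_i$ by flipping bits one at a time, and to observe that all of these bit-flips occur inside a single contiguous block of coordinates, namely the coordinates on which $w_i$ and $w_{i+1}$ disagree — everywhere else $g_j$ agrees with both $w_i$ and $w_{i+1}$. So the only chunks that can fail to be \chunkn s are those that contain the ``crossover point'' $h_{i,\bar j}$, i.e., those chunks $C$ such that some differing coordinates of $w_i, w_{i+1}$ lie inside $C$ and have been flipped while others inside $C$ have not. The key structural fact to extract is therefore: \emph{within each chunk $C \in \cals$, the set of coordinates on which $w_i$ and $w_{i+1}$ differ, listed in increasing order, is exactly a prefix of the global ordering $h_i$ restricted to $C$} — more precisely, if we scan coordinates left to right across the whole codeword, all differing coordinates inside one chunk come before all differing coordinates inside any later chunk, which is immediate since the chunks partition $[d]$ into consecutive intervals. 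Combined with \Cref{obs:unary}, which says $(g_j + w_i)$ restricted to $h_i$ is the unary encoding $\enc{\calu}(\bar j) = 1^{\bar j} 0^{\ast}$, this forces the ``flipped'' coordinates to form a prefix of $h_i$, hence to occupy a prefix of the chunks (in left-to-right order) fully, then possibly one chunk partially, then no coordinates in the remaining chunks.

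Concretely, I would argue as follows. Write the differing-coordinate vector $h_i$ (from \Cref{def:notation}), together with the final differing coordinate, and note that these coordinates, in increasing order, are distributed among the chunks $\gind, s_1, \tilde c_1, \dots, \tilde c_n, s_{n+1}$ in a way that respects the left-to-right order: if $C$ and $C'$ are two chunks with $C$ entirely to the left of $C'$, then every differing coordinate in $C$ precedes every differing coordinate in $C'$ in the list $h_i$. By \Cref{obs:unary}, $(g_j + w_i)[h_i] = \enc{\calu}(\bar j)$, so the coordinates of $h_i$ that have been flipped (value $1$) are exactly the first $\bar j$ of them, and the rest (value $0$) agree with $w_i$. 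Let $C^\star$ be the chunk containing the $\bar j$-th coordinate of $h_i$ (i.e. containing $h_{i,\bar j}$, which is well-defined when $\bar j \geq 1$; if $j = r_i$ then $g_j = w_i$ and there is nothing to prove, and if $j=r_{i+1}-1$ the last flipped coordinate is $h_{i,\bar j}$ with $\bar j = \Delta(w_i,w_{i+1})-1$, still inside some chunk). Then: every chunk strictly to the left of $C^\star$ has \emph{all} of its differing coordinates among the first $\bar j$ — hence all flipped — so on those chunks $g_j$ agrees with $w_{i+1}$ (they match $w_{i+1}$ on the differing coordinates and match both $w_i$ and $w_{i+1}$ elsewhere). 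Every chunk strictly to the right of $C^\star$ has \emph{none} of its differing coordinates among the first $\bar j$ — hence none flipped — so $g_j$ agrees with $w_i$ there. Only $C^\star$ may have some differing coordinates flipped and others not, so $C^\star$ is the unique potentially-non-\chunkn\ chunk. This gives ``at most one.''

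I would also double-check the edge cases where the crossover coordinate $h_{i,\bar j}$ happens to be the first or last coordinate of a chunk: if $h_{i,\bar j}$ is the last differing coordinate lying inside a chunk $C$ and the next differing coordinate lies in a strictly later chunk, then $C$ itself has all its differing coordinates flipped and so is a \chunkn\ (equal to $w_{i+1}$), and likewise the boundary on the right side is clean; so in these degenerate cases \emph{zero} chunks are non-\chunkn, which is consistent with the claim. The main (and only real) obstacle is bookkeeping: being careful that the chunk decomposition $\gind \circ s_1 \circ \tilde c_1 \circ \cdots \circ \tilde c_n \circ s_{n+1}$ of \Cref{fig:notation} really does partition $[d]$ into consecutive intervals (which follows directly from the index ranges defining $\gind, s_m, \tilde c_m$), so that ``left-to-right order of coordinates'' coincides with ``order within $h_i$,'' and that \Cref{obs:unary} is being applied with the correct length parameter. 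Once that is in place, the argument is a short monotonicity observation rather than a computation.
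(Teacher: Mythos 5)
Your proposal is correct and follows essentially the same approach as the paper's proof: identify the chunk containing the crossover point $h_{i,\bar{j}}$, observe that chunks strictly to its left agree with $w_{i+1}$ and chunks strictly to its right agree with $w_i$, leaving at most one possibly-broken chunk. You add more detail (the interval-partition and prefix-of-$h_i$ observations) than the paper's terser version, but the underlying argument is the same.
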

\begin{proof}
    First, suppose that  $j = r_i$.  Then in that case $g_j = w_i$ and all of the substrings in $\mathcal{S}$ are equal to their corresponding substring.  Otherwise, $ j \in (r_i, r_{i+1})$. In that case, $\bar{j} \in [1, r_{i+1} - r_i) = [1, \Delta(w_i, w_{i+1}))$. 
    This means that $h_{i,\bar{j}}$ (the ``crossover point'' for $g_j$) is defined, and indexes a position in $g_j$, and in particular in one of the sub-strings in $\mathcal{S}$.  Then other substrings strictly to the left of $h_{i,\bar{j}} $ are equal to their corresponding substring in $w_{i+1}$; and the ones  strictly to the right are equal to the corresponding substring in $w_i$.
\end{proof}

Next, we show that there are no ``collisions'' in $\calg$; that is, there are no $j \neq j'$ so that $g_j = g_{j'}$.
\begin{lemma}\label{lem:injective}
    Let $\calg$ and $\enc{\calg}$ be as in \Cref{def:calg}. Then $\enc{\calg}$ is injective.  
\end{lemma}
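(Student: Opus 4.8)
The plan is to assume for contradiction that $\enc{\calg}(j)=\enc{\calg}(j')=:x$ for some $j<j'$, and to derive a contradiction. Write $j\in[r_i,r_{i+1})$ and $j'\in[r_{i'},r_{i'+1})$, so that $i\le i'$. The case $i=i'$ is handled directly: by Observation~\ref{obs:unary}, for any $\ell\in[r_i,r_{i+1})$ the vector $(g_\ell\oplus w_i)$ restricted to $h_i$ is the unary encoding of $\bar\ell$, so $\Delta(g_\ell,w_i)=\bar\ell$; hence $\bar j=\Delta(x,w_i)=\bar{j'}$ and $j=r_i+\bar j=r_i+\bar{j'}=j'$, a contradiction. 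So from now on $i<i'$.

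I would first record two easy observations. (i) For any $\ell\in[r_m,r_{m+1})$, the last bit of $g_\ell$ equals $b_m:=m\bmod 2$: indeed $w_m$ and $w_{m+1}$ differ in their last coordinate (a marker bit), and since it is the \emph{last} coordinate on which they differ it is excluded from $h_m$ by Definition~\ref{def:notation}, hence is never flipped within the block $[r_m,r_{m+1})$. Applying this to $x$ both as $g_j$ and as $g_{j'}$ yields $b_i=b_{i'}$, i.e.\ $i\equiv i'\pmod 2$ (so in particular $b_{i+1}=b_{i'+1}$). (ii) Call $g_\ell$ (with $\ell\in[r_m,r_{m+1})$) \emph{early} if $\bar\ell\le\Delta(L_{z_m},L_{z_{m+1}})$, i.e.\ if every bit flipped so far lies inside the leading $\rowenc$-bit block. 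If $g_\ell$ is early then all of its marker-chunks equal $b_m^{B}$ and all of its inner-codeword chunks equal $c_m[\cdot]$; if $g_\ell$ is \emph{not} early then all differing bits of the leading block have been flipped, so the leading chunk of $g_\ell$ equals the full string $L_{z_{m+1}}$. With these in hand I would split $i<i'$ into three sub-cases. If \textbf{both} $g_j,g_{j'}$ are early, the inner-codeword chunks of $x$ simultaneously spell out $c_i$ and $c_{i'}$, forcing $c_i=c_{i'}$ and hence $i=i'$ (the map $m\mapsto c_m$ is a bijection), a contradiction. If \textbf{exactly one} is early, say $g_j$, then all marker-chunks of $x$ equal $b_i^{B}$; but $g_{j'}$ not early means its crossover point lies strictly beyond the leading block, and a short case analysis shows this forces the first bit of the first marker-chunk of $x$ to be $b_{i'+1}=1-b_{i'}=1-b_i\ne b_i$ — unless the crossover of $g_{j'}$ sits exactly at the last bit of the leading block, in which case the inner-codeword chunks of $x$ spell out $c_{i'}$ and again $i=i'$. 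Either way we reach a contradiction.

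The remaining sub-case is that \textbf{neither} $g_j$ nor $g_{j'}$ is early. Then the leading chunk of $x$ equals both $L_{z_{i+1}}$ and $L_{z_{i'+1}}$; since $\call$ is injective, $z_{i+1}=z_{i'+1}$, so by \eqref{eq:sig-row} $a_{z_{i+1}}=a_{z_{i'+1}}$ and therefore
\[
u\;:=\;c_i\oplus c_{i'}\;=\;(c_{i+1}\oplus a_{z_{i+1}})\oplus(c_{i'+1}\oplus a_{z_{i'+1}})\;=\;c_{i+1}\oplus c_{i'+1},
\]
a nonzero codeword of $\cC=\Cout\circ\Cin$ (nonzero since $i\ne i'$). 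Now I would argue the two crossover points are co-located. Viewed as $g_j$, the marker-chunks of $x$ form a "step": the first $\kappa$ of them equal $b_{i+1}^{B}$, then at most one mixed marker, then $b_i^{B}$ thereafter, where $\kappa$ is the number of markers lying entirely before the crossover of $g_j$; viewed as $g_{j'}$, the same holds with $b_{i'+1},b_{i'}$ in place of $b_{i+1},b_i$. By (i) these strings agree, so $\kappa=\bigl|\{m:\text{the }m\text{-th marker-chunk of }x\text{ equals }b_{i+1}^{B}\}\bigr|$ is intrinsic to $x$; using the regular spacing of markers in Definition~\ref{def:calw}, it follows that both crossover points lie within (or on the boundary of) the single inner-codeword chunk of index $\kappa$. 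Consequently, for every $m\ne\kappa$ the $m$-th inner-codeword chunk of $x$ equals $c_{i+1}[m]=c_{i'+1}[m]$ when $m<\kappa$ and equals $c_i[m]=c_{i'}[m]$ when $m>\kappa$; in both cases the $m$-th inner block of $u$ is zero. Hence $u$ is supported inside a single inner block. But $u=(\Cin(\tau[1]),\dots,\Cin(\tau[n]))$ for the nonzero outer codeword $\tau:=\sigma_i\oplus\sigma_{i'}$, which has at least $\delout n\ge 2$ nonzero coordinates and hence produces at least two nonzero inner blocks of $u$ — a contradiction. (Lemma~\ref{lemma:one-broken-chunk} gives an alternative route to the fact that at most one inner-codeword chunk is "ambiguous.")

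I expect the main obstacle to be the co-location step in this last sub-case — pinning both crossover points to the same inner-codeword chunk using only the marker pattern, while carefully handling the boundary positions between a marker-chunk and its neighbouring inner-codeword chunk and the degenerate possibilities $\kappa=0$ or $\kappa=n$. The rest is routine bookkeeping with Definitions~\ref{def:calw} and~\ref{def:calg}.
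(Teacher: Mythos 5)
Your proof is correct, and while it arrives at the same contradictions as the paper in the end, the organization of the case analysis is genuinely different. The paper splits on $|i-i'|\in\{0,1,\ge 2\}$ and then, within the last case, on which of $\gind,s_m\circ\tilde c_m,s_{n+1}$ each crossover point lands in (invoking \Cref{lemma:one-broken-chunk} explicitly); your proof instead (a) uses the "last bit is never flipped" observation to instantly rule out \emph{all} different-parity pairs $i\ne i'$ — this absorbs the paper's Case 2 (adjacent blocks) and the parity bookkeeping the paper re-does inside Case 3 — and then (b) splits on whether each crossover lies inside the leading $\call$-block ("early") or not. The heart of the argument — when neither is early — mirrors the paper's key sub-case of Case 3: the agreement of the $L$-chunk forces $z_{i+1}=z_{i'+1}$, the marker pattern pins the two crossovers to the same $s_m\circ\tilde c_m$, and then $c_i\oplus c_{i'}$ is a nonzero codeword of $\cC$ supported in a single inner block, contradicting $\delout n>1$. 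Your co-location step phrased via the count $a$ of flipped marker bits being intrinsic to $x$ is a clean way to make the paper's "$1^a0^b$ vs.\ $1^{a'}0^{b'}$" argument precise (and in fact in the sub-case where $a$ is not a multiple of $B$ you get the even stronger conclusion $c_i=c_{i'}$ directly, since the crossover then sits inside a marker chunk and no inner block is broken). One small remark: the "unless the crossover of $g_{j'}$ sits exactly at the last bit of the leading block" carve-out in your exactly-one-early case is vacuous under your $\le$ convention for "early" — if $g_{j'}$ is not early, its crossover is already strictly past position $\rowenc$ — so that clause can simply be dropped; it does not affect correctness since you handle it anyway. Overall, what your route buys is a cleaner unification of the paper's Case 2 with the $s_{n+1}$-parity checks in Case 3, at the price of the slightly more delicate co-location bookkeeping you flag at the end, which you nonetheless carry out correctly.
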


\begin{proof} 
   Assume, for the sake of contradiction, that there are two distinct $j,j^\prime \in \sett{0, \ldots , N-1}$ such that $g_j= g_{j^\prime}$.  Without loss of generality assume that $ j^\prime > j$.  There are three scenarios possible. 
    \begin{enumerate}
        \item  \textbf{Case 1:} Both $j$ and $j^\prime$ are in the interval $[r_i, r_{i+1})$.  Then we claim that $g_j[h_{i,\bar{j'}}] \neq g_{j^\prime}[h_{i,\bar{j'}}]$. The reason is that  $g_j[h_{i,\bar{j'}}] = w_i[h_{i,\bar{j'}}]$ and $g_{j^\prime} [h_{i,\bar{j'}}] = w_{i+1} [h_{i,\bar{j'}}]$. This implies that $w_i[h_{i,\bar{j'}}] = w_{i+1}[h_{i,\bar{j'}}]$ which contradicts the definition of $h_{i,\bar{j'}}$.

        \item \textbf{Case 2: }$j\in [r_{i-1}, r_i)$ and $j^\prime \in [r_{i}, r_{i+1})$. 
            Then $g_j$ is an interpolation of $ w_{i-1}$ and  $w_i$, and $g_{j^\prime}$ is an interpolation of $ w_{i} $ and $w_{i+1}$.
            Denote 
            \[
            g_j = \gind \circ s_1 \circ \tilde{c}_1 \circ \ldots \tilde{c}_n \circ s_{n+1} 
            \] 
            and 
            \[
            g_{j'} = \gind'_1 \circ s'_1 \circ \tilde{c'}_1 \circ \ldots \tilde{c'}_n \circ s'_{n+1}  \;.
            \]

            If $h_{i,\bar{j}}$ does not fall into $\gind$ or $s_{n+1}$ then $(s_1, \ldots, s_{n}, s_{n+1})$ cannot be equal to $(s'_1,\ldots, s'_{n}, s'_{n+1})$. Indeed,  assuming without loss of generality that $i$ is even, then $(s_1, \ldots, s_{n+1}) = 0^a1^b$ where both $a$ and $b$ are nonzero, while $(s'_1,\ldots, s'_{n}, s'_{n+1})$ is of the form $1^{a'}0^{b'}$.
            An identical argument shows that if $h_{i+1,\bar{j'}}$ does not fall into $f'_1$ or $s'_{n+1} $ then $(s_1, \ldots, s_{n}, s_{n+1})$ cannot be equal to $(s'_1,\ldots, s'_{n}, s'_{n+1})$.
            We are left with the case where $h_{i,\bar{j}}$ falls in $\gind$ or $s_{n+1}$ and $h_{i+1,\bar{j'}}$ falls in $f'_1$ or $s'_{n+1}$. 
            In this case, since the parities of $i-1$ and $i$ are different, the only possibility to get equality between $(s_1, \ldots, s_{n+1})$ and $(s'_1, \ldots, s'_{n+1})$ is if $h_{i,\bar{j}}$ is in $\gind$ and $h_{i+1,\bar{j'}}$ falls exactly on the last bit of $s'_{n+1}$.
            This implies that $(\tilde{c}_1, \ldots, \tilde{c}_n)$---which corresponds to the outer codeword $\sigma_{i-1}$---and $(\tilde{c'}_1, \ldots, \tilde{c'}_n)$---which corresponds to the outer codeword $\sigma_{i+1}$---are equal, a contradiction of the fact that the codewords in our ordering of the outer code are all distinct.

        \item \textbf{Case 3:} $j\in [r_{i}, r_{i+1})$ and $j^\prime \in [r_{i^\prime}, r_{i^\prime+1})$  where $ |i-i^\prime| > 1$.
        As before, denote 
        \[
        g_j = \gind \circ s_1 \circ \tilde{c}_1 \circ \ldots \tilde{c}_n \circ s_{n+1}
        \] 
        and 
        \[
        g_{j'} = \gind'_1 \circ s'_1 \circ \tilde{c'}_1 \circ \ldots \tilde{c'}_n \circ s'_{n+1} \;.
        \]
        By \Cref{lemma:one-broken-chunk}, only a single chunk in $g_j$ (resp. $g_{j'}$) is not equal to the corresponding chuck in $w_i$ or $w_{i+1}$ (resp. $w_{i'}$ or $w_{i'+1}$). We shall consider several sub-cases depending on the locations of $h_{i,\bar{j}}$ and $h_{i',\bar{j'}}$.

        First, assume that $h_{i,\bar{j}}$ falls into $s_m\circ \tilde{c}_m$ and $h_{i',\bar{j'}}$ into $s'_{m'}\circ \tilde{c'}_{m'}$ where $m \neq m' \in [n]$.  
        Also, assume without loss of generality that $m' > m$. Note that since neither $h_{i,\bar{j}}$ nor $h_{i',\bar{j'}}$ fall in the last chunks ($s_{n+1}$ and $s_{n+1}'$, respectively), it must be that $i$ and $i'$ have the same parity; 
        otherwise the chunks $s_{n+1}$ and $s'_{n+1}$ would disagree, contradicting our assumption that $g_j = g_{j'}$. 
        Assume that $(s_1, \ldots, s_{n+1})$ and $(s'_1, \ldots, s'_{n+1})$ are of the form $1^a0^b$ and $1^{a'}0^{b'}$, respectively. Clearly, as $h_{i,\bar{j}}$ falls into $s_m\circ \tilde{c}_m$ and $h_{i',\bar{j'}}$ into $s'_{m'}\circ \tilde{c'}_{m'}$, and $m'>m$, it must be that $a' > a$. We conclude that $g_j \neq g_{j'}$, a contradiction.

        Now assume that both $h_{i,\bar{j}}$ and $h_{i',\bar{j'}}$ fall into $s_m\circ \tilde{c}_m$ and $s'_{m}\circ \tilde{c'}_{m}$, respectively.  (Note that the difference between this sub-case and the previous one is that $h_{i,\bar{j}}$ and $h_{i', \bar{j'}}$ fall into chunks with the \emph{same} index $m$). 
        In this case, since $\gind$ and $f'_1$ in $g_j$ and $g_{j'}$ are full chunks, it holds that the tuple
        \[
        (L_{z_{i+1}}, c_{i+1}[1], \ldots, c_{i+1}[m-1],c_{i}[m+1],\ldots, c_{i}[n])
        \]
        is equal to
        \[ 
        (L_{z_{i'+1}}, c_{i'+1}[1], \ldots, c_{i'+1}[m-1],c_{i'}[m+1],\ldots, c_{i'}[n])\;.
        \]
        Now, since $L_{z_{i+1}} = L_{z_{i'+1}}$, we have that $c_{i+1}$ and $c_{i'+1}$ are obtained by adding the same row $a_z$ of the generator matrix $A$, to $c_i$ and $c_{i'}$, respectively. Thus, for each $r \leq m-1$ we have that $c_i[r] = c_{i'}[r]$ and in total,
        \[
        (c_i[1], \ldots, c_i[m-1], c_i[m+1], \ldots, c_i[n]) = (c_{i'}[1], \ldots, c_{i'}[m-1], c_{i'}[m+1], \ldots, c_{i'}[n])
        \]
        which contradicts the fact that $i\neq i'$ and that the minimum distance of $\Cout$ satisfies $\delout n > 1$.
        
        Finally, we consider the sub0case where $h_{i,\bar{j}}$ falls in $\gind$ or $s_{n+1}$. 
        In this case, if $h_{i',\bar{j'}}$ also falls in $f'_1$ or $s'_{n+1}$, then $(\tilde{c}_1, \ldots, \tilde{c}_n)$ and $(\tilde{c'}_1, \ldots, \tilde{c'}_n)$ correspond to two distinct outer codewords, which implies that $g_j \neq g_{j'}$, contradicting our assumption that they are the same.
        If $h_{i',\bar{j'}}$ doesn't fall in $f'_1$ or $s'_{n+1}$, then it must fall into an $s'_m$ or $\tilde{c'}_m$ for some $m\in [n]$. In this case, $(s_1, \ldots, s_n, s_{n+1})$ will be the all $1$ or all $0$ string but $(s'_1, \ldots, s'_n, s'_{n+1})$ clearly cannot be the all $0$ or $1$ string since $s'_{n+1}\neq s'_{m}$.
    \end{enumerate}
    Thus, in all cases we arrive at a contradiction, and this completes the proof.
\end{proof}

\section{Decoding Algorithm}\label{sec:decalg}
In this section, we define the decoding algorithm. In the following paragraphs, we will give a high level overview of the major steps in the decoding procedures. We denote the input to the algorithm by $x \in \F_2^d$, and we recall that $x$ is of the following form (see also Figure~\ref{fig:notation}):
\[
x = \gind \circ s_1 \circ \tilde{c}_1 \circ \ldots \circ \tilde{c}_n \circ s_{n+1} \;.
\]
Recall that $x$ is a noisy version of some codeword of $\calg$; let us write $x = g_{j} \oplus \eta$ for a noise vector $\eta \in \F_2^d$, so our objective is to return $\hat{j} \approx j$. As usual, suppose that $j \in [r_i, r_{i+1})$, and define $\bar{j} = j -r_i$, so that $h_{i,\bar{j}}$ is the crossover point in the correct codeword $g_j$.

Our primary decoding algorithm, $\dec{\calg}$, is given in Algorithm~\ref{alg:Dec}. 
The first objective of the decoding algorithm is to estimate the chunk in which the crossover point $h_{i,\bar{j}}$ occurs. We define $\ell \in \{0,\ldots, n+1\}$ to be 
   \begin{equation}
   \ell = 
   \begin{cases}
        0& \text{ if $h_{i,\bar{j}}$ falls in $\gind$}\\
        m& \text{ if $h_{i,\bar{j}}$ falls in $s_m \circ \tilde{c}_m$ for $m\in [n]$}\\
        n+1& \text{ if $h_{i,\bar{j}}$ falls in $s_{n+1}$}
   \end{cases} \;. \label{eq:ell-def}
   \end{equation}
    Intuitively speaking, $\ell$ will be the crossover point at the level of chunks. Algorithm~\ref{alg:Dec} will estimate $\ell$, and we will denote this estimation by $\elhat$.

Next, Algorithm~\ref{alg:Dec} decodes each chunk $\tilde{c}_m$ using the inner code's decoding algorithm to obtain an estimate $\sighat[m] \in \F_q$. 
    Then,
    based on the location of $\elhat$ and the decoded symbols $\sighat[m]$, we either invoke Algorithm~\ref{alg:est} ($\getest$), or Algorithm~\ref{alg:best} ($\bgetest$) in order to obtain our final estimate~$\jhat$.

    In more detail, for an appropriate constant $\gapp \in (0,1)$, we will show that with high probability, $\ell$ cannot be more than $\beta n$ ``far'' from $\elhat$.  
    We break up both our algorithm and analysis into two cases, depending on whether $\elhat$ lands in $(\beta n, n - \beta n)$.

    If $\elhat \in (\gapp n, n - \gapp n)$, we say that $\elhat$ is \emph{in the middle}.  In this case, we call Algorithm~\ref{alg:est} to recover $\jhat$.
    If $\elhat \not\in (\gapp n, n- \gapp n)$, we say that $\elhat$ is \emph{in the boundaries}.  In this case,  we call Algorithm~\ref{alg:best} to recover $\jhat$.
    We next describe Algorithm~\ref{alg:est} and Algorithm~\ref{alg:best}, and why we break things into these two cases.
    
    Algorithm~\ref{alg:est} ($\getest$) is called when $\elhat \in (\gapp n, n - \gapp n)$. The first thing it does is to update our estimate $\sighat$---which corresponds to an interpolation between two codewords of $\Cout$---to obtain a version $\sighat$ that corresponds to only one codeword in $\Cout$.  To do this, it first decodes the first $\rowenc$ bits to get $z_{i+1}$ and uses this to update $\sighat$ by:
    \[
        \sighat[m] = 
        \begin{cases}
            \sighat[m] -  a_{z_{i+1}}[m]& \text{ if $m < \elhs$} \\
            \perp& \text{ if $m \in [\elhs, \elhe]$} \\
            \sighat[m]& \text{ if $m > \elhe$}
        \end{cases}\;.
    \]
    where we use the $\perp$ symbol to indicate an erasure.
    Above, $\elhs = \elhat - \gapp n$, $\elhe = \elhat + \gapp n$ and recall that $a_{z_{i+1}}[m] = a_{z_{i+1}}[(m-1)n' +1 : mn']$.  Also, above we have used the fact that $a_{z_{i+1}}[m] \in \Cin$, and thus corresponds to some element of $\F_q$, so we treat $a_{z_{i+1}}[m]$ as an element of $\F_q$ in the subtraction above.  Intuitively, what the algorithm is doing here is translating the elements of $\sighat$ that correspond to $c_{i+1}$ into elements that correspond to $c_i$. Finally, \Cref{alg:est} uses $\Cout$'s decoder on $\sighat \in \F_q^n$ to obtain $\ihat$.  Given $\ihat$, it computes $\jhat$ by taking into consideration how many bits were flipped from $w_{\ihat}[H]$ to get $x[H]$, where $H = \{i\mid w_{\ihat} \neq w_{\ihat +1}\}$.

    Algorithm~\ref{alg:best} ($\bgetest$) is invoked when $\elhat \notin (\gapp n, n - \gapp n)$. The general strategy in this algorithm is similar to that of Algorithm~\ref{alg:est}, but there are several differences. 
    The main reason for these differences is that if $\elhat$ is in the boundaries, $\elhat$ will only be ``close'' to $\ell$ modulo $n$.  To see intuitively why this should be true,  consider two scenarios, one where $j$ is all the way at the \emph{end} of the interval $[r_i, r_{i+1})$, and a second where $j$ is all the way at the \emph{beginning} of the next interval $[r_{i+1}, r_{i+2})$.  The $j$'s in these two scenarios are close to each other, and their corresponding encodings under $\calg$ are also close in Hamming distance.  However, in the first scenario, $\ell$ is close to $n+1$, while in the second scenario, $\ell$ is close to $0$.  Thus, we should only expect to be able to estimate $\ell$ modulo $n$, and it could be possible that, for example, $\elhat$ is close to zero while $\ell$ is close to $n$.

    Here is how we take this into account in Algorithm~\ref{alg:best}, relative to Algorithm~\ref{alg:est} discussed above.
    First, we define $\elhs$ and $\elhe$ slightly differently, taking them modulo $n$ as per the intuition above (see Figure~\ref{fig:ell}).
    Second, Algorithm~\ref{alg:best} sets $\sighat_i[m]$ differently. For $m \in [1, \elhs] \cup [\elhe, n]$ we set $\sighat_i[m] = \perp$. A crucial observation is that for every $m \in [\elhe, \elhs]$,  
    if $\elhat \leq \gapp n$, then $\tilde{c}_m$ is a corrupted version of $c_i[m]$ and if $\elhat \geq n - \gapp n$ then $\tilde{c}_m$ is a corrupted version of $c_{i+1}[m]$. 
    Since we could have \emph{either} $\ell \leq \beta n$ \emph{or} $\ell \geq n - \beta n$, we thus take \emph{both} of these cases into account, and consider both $c_i$ and $c_{i+1}$ as possibilities.  To this end, we compute \emph{two} possible decodings of $\sighat$, and we then get \emph{two} options for $\jhat$, call them $\jhat_1$ and $\jhat_2$, by performing the same steps as in Algorithm~\ref{alg:est} to each case. Then \Cref{alg:best} sets $\jhat$ to be the more likely of $\jhat_1$ and $\jhat_2$.

    Finally, we discuss our last helper function, Algorithm~\ref{alg:compr}, called $\compr$.  This helper function is called in both Algorithms~\ref{alg:est} and \ref{alg:best}.  Its job is to compute $r_i$ given $i$.  While this seems like it should be straightforward---after all, $r_i$ is defined in Definition~\ref{def:notation} as a simple function of $i$---doing this efficiently without storing a lookup table of size $q^k$ requires some subtlety.  The key insight---and the reason that we defined the order on $\cC$ the way we did---is that from \eqref{eq:sig-row}, we have
    \[ 
    c_i = c_{i-1} \oplus a_{z_i},
    \]
    where we recall that $z_i$ is the index in which $\mathcal{R}_k(i)$ and $\mathcal{R}_k(i-1)$ differ, and $a_{z_i}$ is the $z_i$'th row of the generator matrix $A$ of $\cC$.  To see why this matters, recall from Definition~\ref{def:notation} that
    \begin{align}\label{dec:delta1}
     r_i = \sum_{t=1}^i \Delta( w_{t-1}, w_t ).
     \end{align}
    There are contributions to each $\Delta(w_{t-1}, w_t)$ from each of the chunks $\gind$, $s_m$, and $\tilde{c}_m$.  Here, we discuss just the contribution from the $\tilde{c}_m$ chunks, as this illustrates the main idea.  Due to \eqref{eq:sig-row}, this contribution is
 \begin{align}\label{dec:delta2}
 \sum_{t=1}^i \Delta( \sigma_{t-1} \circ \Cin , \sigma_t \circ \Cin) =  \sum_{t=1}^i \|a_{z_t}\|.
 \end{align}
We cannot afford to add up all of the terms in the sum individually, as $i$ may be as large as $q^k$.  However, instead we can compute the number of times that a particular row $a_z$ appears in the sum above (this is given by Observation~\ref{obs:BRC}), and add $\|a_{z}\|$ that many times.  As there are only $k\cdot k'$ such rows, this can be done efficiently.

This wraps up our informal description the decoding algorithm $\dec{\calg}$ and its helper functions; we refer the reader to the pseudocode for formal descriptions.
    In the next section, we present the analysis of $\dec{\calg}$.

\begin{algorithm}[h!]
\caption{$\dec{\calg}$: Decoding algorithm for $\calg$}\label{alg:Dec}
\begin{algorithmic}[1] 
    \State \textbf{Input:} $x \in \F_2^{d}$
      \colorcomment{ Estimate location of broken chunk: }
    \For{ $ m \in \{1, \ldots, n+1\}$  } 
    \State $\shat_m = \maj(s_m)$ \label{line4:maj-dec}
    \EndFor
    \State $\shat = (\shat_1, \ldots, \shat_{n+1})$
    \State $\elhat_1 = \dec{\calu}(\shat)$ \label{line4:lhat-start}
    \State $\elhat_2 = \dec{\calu^{\text{comp}}} (\shat) $
   \State $\elhat = \begin{cases} \elhat_1 & \Delta(\shat, 1^{\elhat_1} 0^{n + 1 - \elhat_1})) < \Delta (\shat, 0^{\elhat_2}1^{n+1-\elhat_2}) \\ \elhat_2 & \text{else} \end{cases}$\label{line4:lhat-comp}
    \colorcomment{Decode inner code $\Cin$:}
    \For{ $ m \in \{ 1 , \ldots , n  \}$} \label{line4:dec-inn-code-s}
    \State $\sighat[m] = \dec{\Cin} ( \tc[m] )$
        \EndFor\label{line4:dec-inn-code-e}
    \colorcomment{Estimate $j$:}
    \If{$\elhat \in (\gapp n, n - \gapp n)$}
    \State $ \jhat = \getest (x, \sighat , \elhat)$
    \label{line4:getest}
    \Else
    \State $\jhat=  \bgetest (x, \sighat , \elhat)$\label{line4:bgetest}
    \EndIf
    
    \State \textbf{Output:} $\jhat$

\end{algorithmic}
\end{algorithm}

\begin{algorithm}[H]
\caption{ $\getest $: Computing the final estimate of $\jhat$}\label{alg:est}
\begin{algorithmic}[1] 
\Require $x\in \F_2^d,\sighat \in\F_q^n, \elhat \in \{0,1,\ldots, n+1\}$ 
\colorcomment{Calculate erasure interval:}
\State $\elhs = \elhat - \gapp n$ 
\State $\elhe = \elhat + \gapp n$
\colorcomment{Update $\sighat$, taking into account the estimate of the crossover point:}

\State $\zhat = \dec{\call} ( \tilde{L} )  $ \label{line3:decode-alpha-z}

\For{$m < \elhs $} \label{line3:set-corrupt-outer-code}
\State $\sighat[m] = \sighat[m] - a_{\zhat} [m] $  \label{line3:reverse}

\gComment{$a_{\zhat}[m] \in \F_2^{n'}$ corresponds to an elt. of $\Cin$ and hence of $\F_q.$  Here, we treat $a_{\zhat}[m] \in \F_q$. }

\EndFor
\For {$ m \geq \elhs $ and $m \leq  \elhe$}
    \State $\sighat[m] = \bot$ \gComment{Set $\hat{\sigma}[m]$ to an erasure if $m$ is close to $\elhat$.}
\EndFor
\For {$m > \elhe$}
\State $\sighat[m] = \sighat[m]$  \gComment{Don't update $\sighat[m]$.}
\EndFor \label{line:doneupdating}
\colorcomment{Decode outer code to obtain $\ihat$:}
\State $\ihat = \dec{\Cout} ( \sighat)$\label{line3:dec-out} 
\colorcomment{Compute $\jbar$ and final estimate $\jhat$:}

\State $H = \{m \,|\, w_{\ihat}[m]  \neq w_{\ihat +1 }[m]\} $ 
\State $ \hat{ \jbar} =\dec{\calu} (  x[H] \oplus w_{\ihat}[H] )  $ \label{line3:fin-unary} 
\State $ \jhat = \compr (\ihat) +\hat{ \jbar}$\label{line3:jhat-est}
\State \Return $\jhat$
\end{algorithmic}
\end{algorithm}

\begin{algorithm}[H]
\caption{ $\bgetest $: Computing the final estimate of $\jhat$ in the case where $\elhat$ lies in the boundary}\label{alg:best}
\begin{algorithmic}[1] 
\Require $ x\in \F_2^d,\sighat \in\F_q^n, \elhat \in \{0,1,\ldots, n+1\}$ 
\colorcomment{Calculate erasure interval:}
\If{$\elhat \leq \gapp n$} \label{line2:set-elhs-elhe-start}
\State $\elhe = \elhat + \gapp n$
\State $\elhs = n+1 + (\elhat - \gapp n) $
\Else
\State $\elhe = \elhat + \gapp n- (n+1) $
\State $\elhs = \elhat - \gapp n$
\EndIf \label{line2:set-elhs-elhe-end}

\colorcomment{Erase symbols too near the boundary:}
\For{$m\in [0, \elhe] \cup [\elhs,n] $} \label{line2:set-corrupt-outer-code}
\State $\sighat[m] = \bot$
\EndFor
\colorcomment{Decode outer code to obtain $\ihat$:}
\State $\ihat = \dec{\Cout} (\sighat) $ \label{line2:dec-out} 
\colorcomment{Case 1: $\ell$ is in the beginning:}
\State $H = \{m\ |\ w_{\ihat}[m]  \neq w_{\ihat +1 }[m] \text{ and }  m < \rowenc + 2 \gapp n ( n' + B ) \} $  
\State $ \hat{ \jbar}_1 =\dec{\calu} (  x[H] \oplus w_{\hat{i}}[H] )  $ 
\State $ \jhat_1 = \compr (\ihat) + \hat{\jbar}_1$\label{line2:j1} 
\colorcomment{Case 2: $\ell$ is towards the end:}
\State $H = \{m \ |\ w_{\ihat}[m]  \neq w_{\ihat -1}[m] ,  m \geq d - 2 \gapp n ( n' + B ) \} $
\State $ \hat{ \jbar}_2 =\dec{\calu^{\text{comp}}} (  x[H] \oplus w_{\hat{i}}[H] )  $ 
\State $ \jhat_2 = \compr (\ihat) -\hat{ \jbar}_2$\label{line2:j2} 
\colorcomment{Choose the most likely estimate:}
\State $\jhat = \mathrm{argmin}_{\jhat \in \{\jhat_1, \jhat_2\}} (\Delta(x, \enc{\calg}(\jhat)))$ 
\State \Return $\jhat$
\end{algorithmic}
\end{algorithm}

\begin{algorithm}[H]
\caption{\compr: Compute $r_i$, given $i$.} 
    \begin{algorithmic}
        \State \textbf{Input:} $ i \in \sett{0,\ldots, 2^{k'k}-1}$
        \State $\hat{r}_i =i \cdot (n+1) \cdot B$ 
        \For{$z \in \sett{0,\ldots, k'k-1} $}
          \State $\hat{r}_i = \hat{r}_i +\lfloor  \frac{ i + 2^z }{2^{z+1} } \rfloor \left( \|a_z\|   + \frac{2L}{\log(kk')}  \cdot  \|\mathrm{bin}(z)\|\right) $ 
             \gComment{$a_z$ is the $z$'th row of $A$}
             
             \gComment{$\mathrm{bin}(z)$ is the binary expansion of $z$}

             \gComment{$\|\cdot\|$ denote Hamming weight}
        \EndFor
        \State \textbf{Return:} $\hat{r}_i$
    \end{algorithmic}
    \label{alg:compr}
\end{algorithm}

   \section{Analysis}\label{sec:analysis}
   In this section we analyze \Cref{alg:Dec}, proving a few statements that will be useful for our final proof of \Cref{thm:main} in \Cref{sec:last}.
   We start by setting up a bit more notation. Throughout this section, we assume that the codeword that was transmitted was  $g_j = \enc{\calg}(j) \in \calg,$ for some integer $j\in [r_i, r_{i+1})$ for some $i$. 

\subsection{Running time of \Cref{alg:Dec}}
We begin by analyzing the running time of \Cref{alg:Dec}.  In particular, we prove the following proposition.

   \begin{proposition} \label{prop:running-time}
        For a code $\mathcal{D}$ of length $D$, let $T_{\enc{\mathcal{D}}}(D)$ and $T_{\dec{\mathcal{D}}}(D)$ denote the running time of $\mathcal{D}$'s encoding map $\enc{\mathcal{D}}$ and $\mathcal{D}$'s decoding map $\dec{\mathcal{D}}$, respectively.
        Given the codes $\Cout, \Cin$ and our Gray code $\calg$ defined in \Cref{def:calg}, it holds that: 
        \begin{enumerate}
            \item $\enc{\calg}$ runs in time 
            \[
            O\left( d^2 \right) + O\left(T_{\enc{\Cout}}(n) + n\cdot T_{\enc{\Cin}}(n')\right)
            \]
            \item $\dec{\calg}$, which is given by Algorithm~\ref{alg:Dec}, runs in time \[
                O (n \cdot B) + O\left(n\cdot T_{\dec{\Cin}}(n')\right) + O\left( T_{\dec{\Cout}}(n)\right) + O(d)\;.
            \]
        \end{enumerate}
    \end{proposition}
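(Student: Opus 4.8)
The plan is to prove both bounds by a line-by-line accounting of the encoder $\enc{\calg}$ from \Cref{def:calg} and of Algorithms~\ref{alg:Dec}--\ref{alg:compr}, relying on three ingredients established earlier: Observation~\ref{obs:BRC}, which gives the closed form $\mathcal{N}_k(z,i)=\lfloor (i+2^z)/2^{z+1}\rfloor$ and is exactly what lets $\compr$ avoid a sum over exponentially many terms; Lemma~\ref{lem:unary-dec-complexity}, which gives linear-time unary decoding; and the elementary bounds $kk' = \Rout\Rin\, nn' \le nn' \le d$ and $\rowenc\le d$. We count elementary operations, treating arithmetic on the integers involved (of magnitude at most $N\le 2^d$) as unit cost. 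A repeatedly-used subroutine is the reconstruction of an intermediate codeword $w_m$ from its index $m$: by our ordering on $\cC$ (\Cref{sec:base}) we have $c_m=\Aconc^T\calr_{kk'}(m)$, so concatenated-code structure gives $\sigma_m=\enc{\Cout}(\calr_{kk'}(m))$ and $c_m[\cdot]=\enc{\Cin}(\sigma_m[\cdot])$; thus we compute $\calr_{kk'}(m)$ and $z_m$ in $O(kk')=O(d)$ time, run the outer and inner encoders, and assemble the $\rowenc$-bit prefix $L_{z_m}$ together with the alternating chunks, all at cost $O(d)+O(T_{\enc{\Cout}}(n))+O(n\,T_{\enc{\Cin}}(n'))$. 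We also assume the row weights $\|a_z\|$ of the generator matrix $\Aconc$ of $\cC$ have been precomputed once in $O(d^2)$ time (computing $\Aconc$ from the generator matrices of $\Cout$ and $\Cin$), which makes every call to $\compr$ cost $O(kk')=O(d)$.

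For $\enc{\calg}(j)$: since $i\mapsto r_i$ is monotone and $r_i=\compr(i)$ costs $O(d)$, we binary-search over $\{0,\dots,q^k-1\}$ for the $i$ with $j\in[r_i,r_{i+1})$, using $O(\log q^k)=O(kk')=O(d)$ iterations and hence $O(d^2)$ time; we then set $\bar{j}=j-r_i$, reconstruct $w_i$ and $w_{i+1}$ as above, scan $w_i\oplus w_{i+1}$ to locate its $\bar{j}$-th one $h_{i,\bar{j}}$ in $O(d)$ time, and output $\text{pref}_{h_{i,\bar{j}}}(w_{i+1})\circ\text{suff}_{h_{i,\bar{j}}+1}(w_i)$. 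This totals $O(d^2)+O(T_{\enc{\Cout}}(n)+n\,T_{\enc{\Cin}}(n'))$.

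For $\dec{\calg}$ (Algorithm~\ref{alg:Dec}): the $n+1$ majorities cost $O(nB)$; computing $\elhat_1,\elhat_2$ and selecting $\elhat$ costs $O(n)$ by Lemma~\ref{lem:unary-dec-complexity}; the $n$ inner decodings cost $n\,T_{\dec{\Cin}}(n')$. Then we call $\getest$ or $\bgetest$. In $\getest$: decoding $\tilde{L}$ with $\call$'s repetition structure costs $O(\rowenc)=O(d)$; updating $\sighat$ costs $O(n)$ field operations plus the cost of producing the single generator row $a_{\zhat}$ on the fly (by linearity of $\Cout$ and $\Cin$ from the outer generator matrix: $O(n)$ field operations plus $O(n)$ applications of $\enc{\Cin}$); the outer decoding costs $T_{\dec{\Cout}}(n)$; forming $H$ requires reconstructing $w_{\ihat}$ and $w_{\ihat+1}$, after which $\dec{\calu}(x[H]\oplus w_{\ihat}[H])$ costs $O(|H|)=O(d)$ and $\compr(\ihat)$ costs $O(d)$. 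The helper $\bgetest$ is analogous, with two $\compr$ calls, reconstructions of the relevant $w_m$'s (a constant number of them), and a final comparison $\mathrm{argmin}_t\Delta(x,\enc{\calg}(\jhat_t))$; the key point is that $\jhat_1\in[r_{\ihat},r_{\ihat+1}]$ and $\jhat_2\in[r_{\ihat-1},r_{\ihat}]$ by construction, so the interval indices are already known and each $\enc{\calg}(\jhat_t)$ is computed without the $O(d^2)$ binary search — merely reassemble the two relevant $w_m$'s and splice — at cost $O(d)+O(T_{\enc{\Cout}}(n))+O(n\,T_{\enc{\Cin}}(n'))$. Collecting everything gives $O(nB)+O(n\,T_{\dec{\Cin}}(n'))+O(T_{\dec{\Cout}}(n))+O(d)$, together with extra terms $O(T_{\enc{\Cout}}(n))+O(n\,T_{\enc{\Cin}}(n'))$ from the reconstruction subroutine; since encoding a linear code from its generator matrix is never costlier than decoding it — and in any case $n\,T_{\enc{\Cin}}(n')=O(n(n')^2)$ and $T_{\enc{\Cout}}(n)=O(nk)$ are lower order in the relevant regime — these fold into the stated bound.

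The per-line estimates are routine; the two points that I would be careful about are (a) the running time of $\compr$, which is $O(kk')$ only because Observation~\ref{obs:BRC} replaces the defining sum (with up to $q^k$ terms) by the single-loop closed form and because the $\|a_z\|$ are precomputed; and (b) ensuring that no step of the decoder secretly costs $\Theta(d^2)$ — in particular, that the two $\enc{\calg}(\cdot)$ evaluations at the end of $\bgetest$ are carried out using the already-known index $\ihat$ (skipping the encoder's binary search), and that individual rows of $\Aconc$ and the relevant intermediate codewords $w_m$ are recomputed on demand from the outer and inner generator matrices rather than by materializing all of $\Aconc$ or all of $\calw$.
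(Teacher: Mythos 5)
Your line-by-line accounting is sound, but it diverges from the paper's own proof on the computational model, and as a result you and the paper arrive at \emph{different} bounds --- neither of which actually matches the proposition as stated, which appears to be out of sync with the paper's proof. You assume unit-cost arithmetic and a one-time precomputation of the row weights $\|a_z\|$, which makes each call to $\compr$ cost $O(kk')=O(d)$ and yields $O(d^2)$ for the encoder and $O(d)$ (plus inner/outer decoding terms) for the decoder, matching the proposition literally. The paper instead uses bit-complexity accounting: the running accumulator $\hat r_i$ in $\compr$ is a $\Theta(kk')=\Theta(d)$-bit integer, multiplying two such integers costs $O(kk'\log kk')$, and the paper does not precompute $\|a_z\|$ but recomputes it in $O(kk')$ time inside each loop iteration. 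Under that accounting, $\compr$ costs $\tilde{O}((kk')^2)=\tilde{O}(d^2)$, so the paper's encoder is $\tilde{O}(d^3)$ and its decoder is $\tilde{O}(d^2)$; these are the bounds actually carried into the proof of \Cref{thm:main}. So your derivation proves the proposition as written, but under a model the paper does not adopt; to match the paper's proof you would replace the unit-cost/precomputation assumptions with bit-complexity accounting and land on the weaker $\tilde{O}(d^3)$ and $\tilde{O}(d^2)$.

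Two smaller discrepancies. First, your decoder running time, like the paper's, genuinely also includes $O(T_{\enc{\Cout}}(n)+n\,T_{\enc{\Cin}}(n'))$ from reconstructing the relevant $w_m$'s and from evaluating $\enc{\calg}(\jhat_1),\enc{\calg}(\jhat_2)$ in $\bgetest$. The paper keeps this term explicit in its final decoding bound (the proposition statement omits it), whereas you argue it ``folds in.'' The heuristic that encoding a linear code is never costlier than decoding it is not a theorem (one can have fast decoders and dense generators); the concrete $n\,T_{\enc{\Cin}}(n')=O(n(n')^2)$ and $T_{\enc{\Cout}}(n)=O(nk)$ do the job once the parameters of \Cref{thm:main} are plugged in, but at the level of generality of this proposition it is cleaner to carry the term explicitly as the paper does. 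Second, your $O(d^2)$ precomputation of $\{\|a_z\|\}_z$ is within budget for the encoder but would exceed the decoder's stated $O(d)$ additive term if charged per call to $\dec{\calg}$; you should state explicitly that this is a one-time cost at code construction.
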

    \begin{proof}
        We start with the encoding of $\calg$, which consists of the following steps.
        \begin{itemize}
            \item Given an integer $j$, we need to compute the $i$ for which $j\in [r_i, r_{i+1})$. 
            Recall that given $i$, \compr\ computes $r_i$. Thus, ind the corresponding $i$ by performing binary search on the domain $i\in \{0,\ldots, 2^{kk'} - 1\},$ calling $\compr$ in each iteration. Thus, the complexity of this step is $O(kk')$ times the time it takes to perform \compr. 

            We are left with analyzing the complexity of \compr. The loop inside it runs for $kk'$ iterations and in every iteration we perform a constant number of operations (multiplication, addition, and division) on $kk'$-bit integers. Note also that $||a_z||$ and $||\text{bin}(z)||$ can be computed in $O(kk')$.
            Now, as multiplication of two $kk'$-bit integers can be done in $O(kk'\log(kk'))$ time \cite{harvey2021integer}, the total running time of \compr\ is $O((kk')^2\cdot \log (kk'))$. 

            In total, the running time to find $i$ given $j$ is $O((kk')^3\cdot \log (kk')) \leq \tilde{O}(d^3)$.
            
            \item Given $i$ from the previous step, we encode $i$ to $c_i$ by first computing the message $\calr_{kk'}(i) \in \F_2 ^{kk'}$. This can be done by simply invoking the recursive definition given in \Cref{def:BRC} which runs in time $O(kk')=O(d)$.
            Then we encode the message $\calr_{kk'}(i)$ with $\enc{\Cout}$ and $\enc{\Cin}$ to a codeword $c_i\in F_2^{nn'}$. This can be performed in time $O(T_{\enc{\Cout}}(n) + n\cdot T_{\enc{\Cin}}(n'))$.  
            Thus, the final complexity of this step is $O(d) + O(T_{\enc{\Cout}}(n) + n\cdot T_{\enc{\Cin}}(n'))$.
            \item Given $c_i$ from the previous step, we next compute $w_i \in \calw$.  This involves computing $z_i$ and encoding it with the repetition code to obtain $L_{z_i} = \call(z_i)$; and computing the ``buffer'' sections $s_m$.  Adding the buffers and encoding $z_i$ clearly take time $O(d)$. Computing $z_i$ from $i$ can be done in time $O(d)$ as follows. We compute $\calr_{kk'}(i-1)$ in time $O(d)$, and since we already computed $\calr_{kk'}(i-1)$ in the previous step, we can identify $z_i$ by searching the only bit that differs between $\calr_{kk'}(i-1)$ and $\calr_{kk'}(i)$.
            \item At the end of the previous step, we have $w_i$. 
 We can repeat the process to obtain $w_{i+1}$.  Then we may obtain $g_j$ in time $O(d)$ from $w_{i+1}$ by flipping $j-r_i$ bits of $w_i$ (namely, the first $j-r_i$ bits on which $w_i$ and $w_{i+1}$ differ).  
        \end{itemize}
        Thus the overall running time of the encoder $\enc{\calg}$ is 
        \[
            \tilde{O}\left( d^3 \right) + O\left(T_{\enc{\Cout}}(n) + n\cdot T_{\enc{\Cin}}(n')\right)
        \]
        We proceed to analyze the running time of the decoder $\dec{\calg}$, given in \Cref{alg:Dec}.  We go line-by-line through \Cref{alg:Dec}.  
        \begin{enumerate}
            \item In Line~\ref{line4:maj-dec}, we take the majority of $B$ bits, for each $m \in [n+1]$.  This takes time $O(nB)$. 
            \item In lines~\ref{line4:lhat-start}-\ref{line4:lhat-comp}, we compute $\elhat$.  This takes time $O(n)$, as we apply $\dec{\calu}$ and $\dec{\calu^{\mathrm{comp}}}$ once each to a vector of length $n+1$; and then compute two Hamming distances between vectors of length $n+1$.  By Lemma~\ref{lem:unary-dec-complexity}, the former takes time $O(n)$, and the latter clearly also takes time $O(n)$.
            \item In Lines~\ref{line4:dec-inn-code-s}-\ref{line4:dec-inn-code-e}, Algorithm~\ref{alg:Dec} decodes $n$ inner codewords.  This takes time $O(n\cdot T_{\dec{\Cin}}(n'))$.
            \item In Lines~\ref{line4:getest}-\ref{line4:bgetest}, \Cref{alg:Dec} calls either \Cref{alg:est} or \Cref{alg:best}.  The running time of each of these includes:
            \begin{itemize}
                \item The time to update $\sighat$.  In \Cref{alg:est}, this includes time $O(L) = O(d)$ to decode the repetition code $\call$ to obtain $\hat{z}$; and then time $O(nn') = O(d)$ to perform the update.  In \Cref{alg:best}, the only work is setting $\sighat[m] = \bot$ for appropriate values of $m$, which runs in time $O(n) = O(d)$ as well.
                \item The time to decode $\sighat$ using $\dec{\Cout}$.  This takes $T_{\dec{\Cout}}(n)$ time.
                \item The time to decode the unary code $\calu$ (once for \Cref{alg:est}, twice for \Cref{alg:best}).  By \Cref{lem:unary-dec-complexity}, this takes time $O(d)$.
                \item The time to call $\compr$ (once for \Cref{alg:est}, twice for \Cref{alg:best}).  This takes time $\tilde{O}(d^2)$.
                \item 
                Finally, \Cref{alg:best} picks whichever of the two estimates $\jhat_1$ and $\jhat_2$ is better.  As written in \Cref{alg:best}, this requires computing $\enc{\calg}(\jhat_1)$ and $\enc{\calg}(\jhat_2)$, which naively would include an $O(d^2)$ term in its running time as above.  However, the only reason for the $O(d^2)$ term is the time needed to find $i$ given $j$.  In this case, we already have the relevant $i$ (it is the $\ihat$ returned by $\dec{\Cout}$), and so this step can be done in time $O(d) + O(T_{\enc{\Cout}}(n) + n \cdot T_{\enc{\Cin}}(n'))$ as well. 
                
                We note that several times throughout \Cref{alg:est} and \Cref{alg:best}, the algorithm needs access to $w_{\tilde{i}}$ for some value of $\tilde{i}$; these can be computed in the same way as $\enc{\calg}(\jhat_1)$ above, and so are covered by the $O(d) + O(T_{\enc{\Cout}}(n) + n \cdot T_{\enc{\Cin}}(n'))$ term.

            \end{itemize}
       
        \end{enumerate}
        Overall, the decoding complexity is
        \[
        O (n \cdot B) + O\left(n\cdot T_{\dec{\Cin}}(n')\right) + O\left( T_{\dec{\Cout}}(n)\right) + O\left(T_{\enc{\Cout}}(n) + n\cdot T_{\enc{\Cin}}(n')\right) + \tilde{O}(d^2)\;.
        \]
      
    \end{proof}

\subsection{Analyzing the failure probability of \Cref{alg:Dec}}

Our main result in this section says that the estimate $\jhat$ returned by \Cref{alg:Dec} is close to $j$ with high probability.

    \begin{theorem}\label{prop:decoding-alg-succeeds}
    Fix a constant $p \in (0,1/2)$.
        Let $q = 2^{k'}$ for a large enough integer $k'$. Let $\Cout$ be an $[n,k]_q$ linear code with relative distance $\delout$ that can decode efficiently from $e$ errors and $t$ erasures as long as $2e + t < \delout n$. 
        
        Let $\Cin$ be an $[n',k']_2$ linear code and suppose that $\pCinFail = o(1)$, where the asymptotic notation is as $n' \to \infty$.
        Let $\calg: \{0,\ldots,N -1\} \rightarrow \{0, 1\}^d$ be the Gray code defined in \Cref{def:calg} with $\Cout$ as an outer code and $\Cin$ as an inner code.  
        Suppose that the parameter $L$ in \Cref{def:calg} satisfies $L = \omega(\log(kk') \log\log(kk'))$, and suppose that the parameter $B$ in \Cref{def:calg} is an absolute constant (independent of $k,k', n,n', N$).
       Let $B, \beta, \xi > 0$ be constants  so that
        \begin{equation}\label{eq:decode-buffers-condition}
            2\exp(-C_p B)) < \beta < 1/4,
        \end{equation}
        where $C_p$ is a constant\footnote{The value of $C_p$ is determined in the proof; see \Cref{clm:maj-fail-prob}.} depending only on $p$; and
        \begin{equation}\label{eq:decode-outer-condition}
            2(1 + \xi)\pCinFail + 2\beta < \delout \;.
        \end{equation}

         Let $j\in \{0,\ldots,N -1\} $ and let $g_j = \enc{\calg}(j)$. Let $x\in \{0,1\}^d$ be the string $x = g_j \oplus \eta$ where $\eta \sim \ber(p)^d$ (the result of transmitting $g_j$ through the BSC$_p$). 
        Let $\jhat$ be the output of Algorithm~\ref{alg:Dec} when given as input the string $x$. Then for sufficiently large $t$ (relative to constants that depend on the constants $p,B,\beta, \xi$ above),
\begin{equation}\label{eq:probfail} \Pr_{\eta}[|j - \jhat| > t] \leq \exp(-\Omega(\repblow)) + \exp(-\Omega(n)) + \exp(-\Omega(t)).
\end{equation}
    \end{theorem}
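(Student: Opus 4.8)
The plan is to condition on a handful of "good events" about the noise vector $\eta$, show each holds with the probability claimed in \eqref{eq:probfail}, and then argue deterministically that on the intersection of these events the decoder returns $\jhat$ with $|j-\jhat| \le O(t)$. First I would set up the good events. Let $I_L$ be the set of coordinates carrying $L_{z_{i}}$ (and $L_{z_{i+1}}$), let $I_{\mathrm{buf}} = \{s_1,\dots,s_{n+1}\}$ be the buffer chunks, and for each $m$ let $I_m$ be the chunk carrying $\tilde c_m$. Define: (i) $E_{\mathrm{rep}}$: the repetition code $\call$ decodes $z_{i+1}$ correctly — this fails with probability $\exp(-\Omega(L))$ since $\call$ repeats each bit $\Theta(L/\log(kk'))$ times and $L = \omega(\log(kk')\log\log(kk'))$ gives a union bound over the $\log(kk')$ bits; (ii) $E_{\mathrm{maj}}$: the number of buffer chunks $s_m$ whose majority is decoded to the wrong bit is at most $\beta n/2$ — by \Cref{clm:maj-fail-prob} each buffer chunk is wrong with probability at most $\exp(-C_p B) < \beta/2$ (by \eqref{eq:decode-buffers-condition}), and these events are independent across $m$, so a Chernoff bound (\Cref{lem:chernoff}) gives failure probability $\exp(-\Omega(n))$; (iii) $E_{\mathrm{in}}$: the number of chunks $\tilde c_m$ on which $\dec{\Cin}$ fails is at most $(1+\xi)\pCinFail \cdot n$ — again each chunk fails independently with probability $\pCinFail$, so Chernoff gives failure probability $\exp(-\Omega(n))$ (using $\pCinFail \cdot n = \omega(1)$ since $n' = \Theta(\log d)$ and $\pCinFail = o(1)$ but not too small; this needs a mild lower bound on $\pCinFail$, or one replaces it by the worst case); (iv) $E_t$: the number of erroneous buffer chunks among the $O(\beta n)$ chunks near the true crossover $\ell$, together with the noise on the unary-decoding coordinates $x[H]$, is controlled enough that the final unary decode $\hat{\jbar}$ is within $O(t)$ of $\jbar$ — here I would invoke \Cref{lem:unary-dec-complexity}-style reasoning plus a Chernoff/Hoeffding tail (\Cref{lem:hoeffding}) on the partial sums of the noise restricted to $H$, yielding the $\exp(-\Omega(t))$ term.

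Next I would argue the deterministic implications. On $E_{\mathrm{maj}}$, the vector $\shat = (\shat_1,\dots,\shat_{n+1})$ of decoded buffer bits is within Hamming distance $\beta n/2$ of the true buffer pattern, which is $\enc{\calu}(\ell)$ or $\enc{\calu^{\mathrm{comp}}}(\ell)$ (depending on the parity of $i$, and modulo the fact that exactly one buffer chunk straddles $h_{i,\bar j}$). Hence the unary decode in Lines~\ref{line4:lhat-start}--\ref{line4:lhat-comp} returns $\elhat$ with $|\elhat - \ell| \le \beta n$ (or $|\elhat-\ell| \le \beta n \pmod{n+1}$ in the boundary case) — this is the key structural fact I would state and prove carefully as a lemma, since it justifies the choice of erasure interval $[\elhs,\elhe]$ of width $2\beta n$ in both \Cref{alg:est} and \Cref{alg:best}. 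Given $|\elhat-\ell|\le\beta n$, every chunk $m$ outside $[\elhs,\elhe]$ is a \chunkn\ equal to its counterpart in a \emph{single} one of $w_i, w_{i+1}$: those with $m < \elhs$ belong to $w_{i+1}$, those with $m > \elhe$ to $w_i$ (in the middle case), so after the correction $\sighat[m] \mathrel{-}= a_{\zhat}[m]$ for $m<\elhs$ (valid since $E_{\mathrm{rep}}$ gives $\zhat = z_{i+1}$), the vector $\sighat$ agrees with $\sigma_i$ except on (a) the $\le 2\beta n$ erased coordinates in $[\elhs,\elhe]$, and (b) the $\le (1+\xi)\pCinFail n$ coordinates where $\dec{\Cin}$ failed. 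By \eqref{eq:decode-outer-condition}, $2(1+\xi)\pCinFail n + 2\beta n < \delout n$, so $\dec{\Cout}$ recovers $\sigma_i$, i.e. $\ihat = i$. In the boundary case one runs the same argument for \emph{both} hypotheses ($\ell$ near $0$ vs.\ near $n+1$), obtaining $\ihat \in \{i, i\pm 1\}$ consistently, computes both candidate estimates $\jhat_1, \jhat_2$, and the final $\mathrm{argmin}$ over $\Delta(x, \enc{\calg}(\cdot))$ picks the one within $O(t)$ of $j$ — this last step uses a standard argument that $\Delta(x, g_j) \approx pd$ while $\Delta(x, g_{j'})$ for the "wrong parity-block" candidate is noticeably larger, again a Hoeffding bound, absorbed into the $\exp(-\Omega(n))$ term. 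Finally, with $\ihat = i$ correct, $\jhat = \compr(\ihat) + \hat{\jbar} = r_i + \hat{\jbar}$, and $|\jhat - j| = |\hat{\jbar} - \bar j| \le O(t)$ by $E_t$; correctness of $\compr$ follows from \Cref{obs:BRC} as sketched before the algorithm pseudocode.

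The main obstacle I expect is the careful bookkeeping in the boundary case and around the single straddling chunk. Precisely: the true buffer pattern is not exactly a unary codeword because the chunk containing $h_{i,\bar j}$ is half-$w_i$, half-$w_{i+1}$, and when $\ell \in \{0, n+1\}$ (crossover in $\gind$ or $s_{n+1}$) the "wraparound" means $\elhat$ close to $0$ and $\ell$ close to $n$ are indistinguishable from the buffers alone; this is exactly why \Cref{alg:best} carries two hypotheses. I would need to verify that in \emph{every} sub-case of the location of $h_{i,\bar j}$ relative to the $O(\beta n)$-window, the un-erased portion of $\sighat$ still corresponds to a genuine outer codeword ($\sigma_i$ or $\sigma_{i+1}$) up to $< \delout n$ corruptions, and that the set $H$ used for the final unary decode is contained in a region of $O(\beta n)$ consecutive (buffer-and-inner) chunks so that the associated noise tail is genuinely $\exp(-\Omega(t))$ and not $\exp(-\Omega(1))$. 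The rest — the three Chernoff bounds and the union bound combining them into \eqref{eq:probfail} — is routine given the parameter inequalities \eqref{eq:decode-buffers-condition} and \eqref{eq:decode-outer-condition}, which were clearly reverse-engineered to make exactly these steps go through.
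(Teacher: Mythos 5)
Your proposal follows the same high-level decomposition as the paper: condition on good events for (i) repetition-code decoding of $z_{i+1}$, (ii) buffer-majority decoding hence accuracy of $\elhat$, (iii) inner-code decoding hence success of $\dec{\Cout}$, and (iv) the final unary decode; then union-bound. Events (i)--(iii) correspond precisely to the paper's $E_{\zhat}$, $E_{\elhat}$, $E_{\ihat}$, and your deterministic implications (e.g.\ that $|\elhat-\ell|\le \beta n$, hence $\sighat$ agrees with a single $\sigma_i$ except on $\le 2\beta n$ erasures and $\le(1+\xi)\pCinFail n$ errors, which $\dec{\Cout}$ tolerates by \eqref{eq:decode-outer-condition}) are exactly the paper's. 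Your observation that the Chernoff bound on inner-decoder failures needs $\pCinFail n = \omega(1)$ to give $\exp(-\Omega(n))$, and otherwise one should cap $\pCinFail$ from below, is a correct reading of a minor imprecision that the paper also glosses over; either fix works.

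The genuine gap is in the boundary case. You dispatch the $\mathrm{argmin}$ step in \Cref{alg:best} by asserting that ``$\Delta(x, g_{j'})$ for the wrong parity-block candidate is noticeably larger,'' and that the associated Hoeffding tail is ``absorbed into the $\exp(-\Omega(n))$ term.'' This is not true: the wrong candidate $\jhat_2$ lies in the adjacent interval $[r_{i-1},r_i)$, but because $\calg$ is a Gray code, $\Delta(g_j, g_{\jhat_2}) = |j - \jhat_2|$ and this can be as small as $\Theta(t)$ --- nothing forces the Hamming gap to be $\Omega(n)$. Consequently the probability that the $\mathrm{argmin}$ picks $\jhat_2$ with $|\jhat_2 - j| > t$ is only $\exp(-\Omega(t))$, not $\exp(-\Omega(n))$, and establishing even the $\exp(-\Omega(t))$ bound requires the careful disjoint-support argument the paper carries out: defining the sets $A_1$ (indices where $g_{\jhat_1}$ differs from $g_j$ but $g_{\jhat_2}$ agrees) and $A_2$ (vice versa), verifying $|A_1|+|A_2| = \Delta(g_{\jhat_1},g_{\jhat_2})$, writing $\Delta(x,g_{\jhat_2}) - \Delta(x,g_{\jhat_1})$ as a sum of $|A_1|+|A_2|$ independent $\pm1$ variables with mean $(|A_2|-|A_1|)(1-2p)$, applying Hoeffding to get $\exp\bigl(-\Omega_p\bigl((|A_2|-|A_1|)^2/(|A_1|+|A_2|)\bigr)\bigr)$, lower-bounding that exponent by $\Omega(|\jhat_2-j|)$ via the case split on $j\lessgtr \jhat_1$, and summing over all $\le t$ choices of $\jhat_1$ and all $\jhat_2$ with $|\jhat_2-j|\ge t$. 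Your sketch of $E_t$ correctly handles the unary-decode tail when $\elhat$ is in the middle, but it does not supply this boundary-case comparison; the final union bound you propose therefore does not yet yield \eqref{eq:probfail} in the regime where $\elhat\notin(\beta n, n-\beta n)$.
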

    Above, we emphasize that the constants inside the $\Omega(\cdot)$ notation in \eqref{eq:probfail} may depend on the constants $p, \beta, B, \xi$.

The rest of this section is devoted to the proof of \Cref{prop:decoding-alg-succeeds}.  In each of the following sub-sections, we analyze a different step of \Cref{alg:Dec}, and show that it is successful with high probability.  \Cref{prop:decoding-alg-succeeds} will follow by a union bound over each of these steps; the formal proof of \Cref{prop:decoding-alg-succeeds} is at the end of the section.
 
    \subsubsection{Estimating the location of the crossover}

    The purpose of the following claims is to show that, except with probability $\exp(-\Omega(n))$, \Cref{alg:Dec} correctly identifies the interval in which the crossover point occurs.
    Recall the definition of $\ell$ from  \eqref{eq:ell-def}.
    Our goal is to show that with high probability, the value $\elhat$ computed in Line~\ref{line4:lhat-comp} of \Cref{alg:Dec} will be close to $\ell$. 
    
    We start by a simple application of the Chernoff bound (given in \Cref{lem:chernoff}) and show that the probability that the majority decoding of a single chunk $s_m$ in Line~\ref{line4:maj-dec} fails in $\exp(-\Omega(B))$ (assuming that $h_{i,\bar{j}}$ didn't fall in $s_m$). 
    \begin{claim} \label{clm:maj-fail-prob}
    Let $\eta \sim \ber(p)^{B}$.  Then there is some constant $C_p > 0$ so that
        \(
        \Pr_\eta\left[\maj \left(1^{B} \oplus \eta \right) \neq 1 \right] = \exp(-C_p \cdot B).
        \)
    \end{claim}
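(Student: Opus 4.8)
The plan is to reduce the claim to a standard binomial tail bound and invoke the multiplicative Chernoff bound of Lemma~\ref{lem:chernoff}. Let $X = \|\eta\| = \sum_{m=1}^{B}\eta_m$ be the number of flipped coordinates, so that $X \sim \mathrm{Bin}(B,p)$ with mean $\mu = pB$. Since $1^{B}\oplus\eta$ has exactly $B-X$ ones and $X$ zeros, its majority bit equals $1$ whenever $X < B/2$; hence $\maj(1^{B}\oplus\eta)\neq 1$ implies $X \geq B/2$, regardless of how ties (which can only occur when $B$ is even) are broken. It therefore suffices to show $\Pr_\eta[X \geq B/2] \leq \exp(-C_p B)$ for a suitable constant $C_p = C_p(p)$.

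Because $p < 1/2$, the threshold $B/2$ exceeds the mean by a constant multiplicative factor: with $\alpha_0 := \frac{1}{2p} - 1 > 0$ we have $(1+\alpha_0)\mu = B/2$. To remain in the range $0<\alpha<1$ in which Lemma~\ref{lem:chernoff} is stated, I would set $\alpha := \min\{1/2,\ \alpha_0/2\}$, so that $0 < \alpha < 1$ and $\alpha < \alpha_0$, and hence $(1+\alpha)\mu < B/2$. Then $\Pr_\eta[X \geq B/2] \leq \Pr_\eta[X > (1+\alpha)\mu] \leq \exp(-\mu\alpha^2/3) = \exp\bigl(-(p\alpha^2/3)\,B\bigr)$, so we may take $C_p := \frac{p}{3}\bigl(\min\{1/2,\ (\tfrac{1}{2p}-1)/2\}\bigr)^2$, which depends only on $p$.

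There is no real obstacle in this claim; the only care needed is the tie case for even $B$ (handled by the implication $\maj\neq 1 \Rightarrow X \geq B/2$) and the constraint $\alpha < 1$ in the stated form of the Chernoff bound (handled by capping $\alpha$ at $1/2$ rather than taking it to be $\alpha_0$ directly). Although the claim is written as an equality, only the upper bound $\Pr_\eta[\maj(1^{B}\oplus\eta)\neq 1]\leq \exp(-C_pB)$ is used later — via condition~\eqref{eq:decode-buffers-condition} in Theorem~\ref{prop:decoding-alg-succeeds} — so I would prove just that; the identical argument with $0^{B}$ in place of $1^{B}$ (swapping the roles of zeros and ones) gives the analogous bound for the $0$-valued buffer chunks.
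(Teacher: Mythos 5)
Your proof is correct and follows the same route as the paper: reduce to the tail event $\|\eta\| \geq B/2$ for $\|\eta\| \sim \mathrm{Bin}(B,p)$ with mean $\mu = pB$, and apply the multiplicative Chernoff bound. Where you genuinely improve on the paper's own argument is the handling of the Chernoff parameter $\alpha$. The paper takes $\alpha = \frac{1}{2p}-1$ directly, which exceeds $1$ whenever $p < 1/4$ and so lies outside the stated range $0<\alpha<1$ of Lemma~\ref{lem:chernoff}; worse, the resulting expression $\exp(-\mu\alpha^2/3)$ is actually \emph{false} for small $p$ (for fixed $B$ and $p\to 0$ it decays like $\exp(-B/(12p))$, while $\Pr[\|\eta\|\geq B/2]$ only decays like $(4p)^{B/2}$). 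Your cap $\alpha := \min\{1/2,\ \alpha_0/2\}$ keeps the application inside the lemma's hypotheses and yields a valid, if smaller, constant $C_p>0$ depending only on $p$. You also correctly note that only the upper bound is ever used (the equality sign in the claim should be read as $\leq$, or as ``$\exp(-\Omega_p(B))$''), and that the $0^B$ case is symmetric.
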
  
    \begin{proof}
        The majority fails if at least half of the bits are changed to $0$. 
        The expected number of $0$s in $1^{B} + \eta$ is $p \cdot B$. Thus, by Chernoff bound (\Cref{lem:chernoff}), the probability that the majority fails can be upper bounded by 
        \[
            \exp \left( - \frac{p \cdot B \cdot (\frac{1}{2p} - 1)^{2} }{3} \right) =\exp(-C_p B)
        \]
        where $C_p = p \cdot \left( \frac{1}{2p} -1 \right)^2/3.$
        
    \end{proof}

    Our next focus is to show that $\elhat$ (computed in line \ref{line4:lhat-comp} of Algorithm ~\ref{alg:Dec}), is ``close'' to $\ell$.
    The next claim considers three possible scenarios depending on the location of $\elhat$. In the first scenario, $\elhat \in (\gapp n, n - \gapp n)$ is ``in the middle'' of the codeword. 
    In this case, we show that with high probability, $\ell \in [\elhs, \elhe]$ where $\elhs = \elhat - \gapp n$ and $\elhe = \elhat + \gapp n$. 
    The other two cases are that $\elhat \notin (\gapp n, n- \gapp n)$ is ``in the boundary'' of the codeword (with one case for the beginning and one for the end). Here, we show that with high probability $\ell \in [0, \elhe] \cup [\elhs, n]$ where $\elhe$ and $\elhs$ are defined according to lines \ref{line2:set-elhs-elhe-start}-\ref{line2:set-elhs-elhe-end} of Algorithm~\ref{alg:best} (See also Figure~\ref{fig:ell}). 
Formally, we have the following claim.

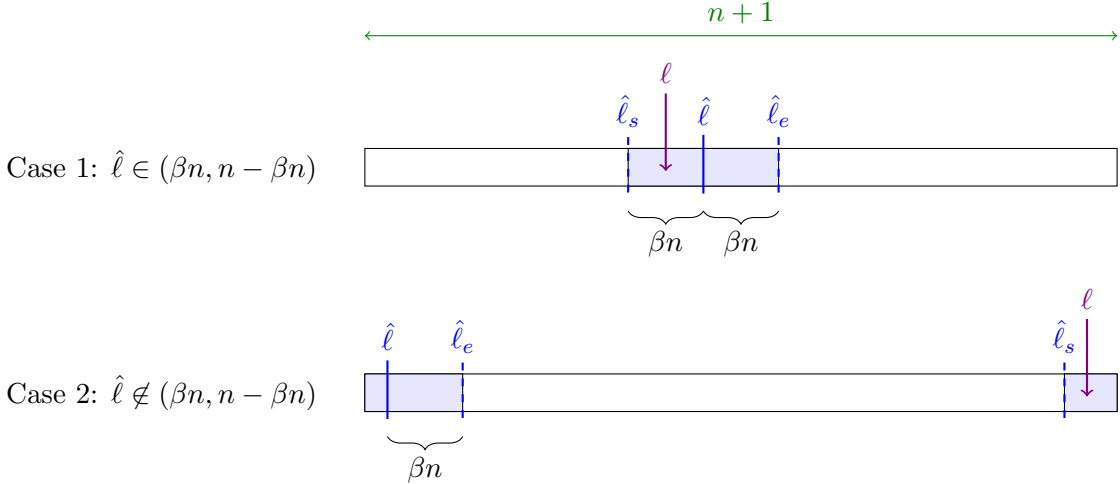
\begin{figure}
    \centering
\begin{tikzpicture}
\begin{scope}[yshift=1cm]
\draw [green!50!black, <->] (0,1) to (10,1); 
        \node[green!50!black] at (5,1.3) {$n+1$};
    \end{scope}
    \begin{scope}
    \node[anchor=east] at (-.5,.25) {Case 1: $\elhat \in (\gapp n, n- \gapp n)$};
     \draw (0,0) rectangle (10,.5);
     \draw[fill=blue!10] (3.5,0) rectangle (5.5,.5);
  \node[violet](ell) at (4, 1.5) {$\ell$};
  \draw[thick,violet,->] (ell) to (4,.2);
  \node[blue](ellhat) at (4.5, 1) {$\hat{\ell}$};
  \draw[thick, blue] (ellhat) to (4.5,-.1);
  \node[blue](elle) at (5.5, 1) {$\hat{\ell}_e$};
  \node[blue](ells) at (3.5, 1) {$\hat{\ell}_s$};
  \draw[thick, blue, dashed] (elle) to (5.5,-.1);
  \draw[thick, blue,dashed] (ells) to (3.5, -.1);
  \draw [decorate,decoration={brace,amplitude=5pt,mirror,raise=2ex}]
  (3.5,0) -- (4.5,0) node[midway,yshift=-2em]{$\gapp n$};
    \draw [decorate,decoration={brace,amplitude=5pt,mirror,raise=2ex}]
  (4.5,0) -- (5.5,0) node[midway,yshift=-2em]{$\gapp n$};
    \end{scope}

        \begin{scope}[yshift=-3cm]
    \node[anchor=east] at (-.5,.25) {Case 2: $\elhat \not\in (\gapp n, n- \gapp n)$};
     \draw (0,0) rectangle (10,.5);
     \draw[fill=blue!10] (9.3,0) rectangle (10,.5);
     \draw[fill=blue!10] (0,0) rectangle (1.3,.5);
  \node[violet](ell) at (9.6, 1.5) {$\ell$};
  \draw[thick,violet,->] (ell) to (9.6,.2);
  \node[blue](ellhat) at (.3, 1) {$\hat{\ell}$};
  \draw[thick, blue] (ellhat) to (.3,-.1);
  \node[blue](elle) at (1.3, 1) {$\elhe$};
  \node[blue](ells) at (9.3, 1) {$\elhs$};
  \draw[thick, blue, dashed] (elle) to (1.3,-.1);
  \draw[thick, blue,dashed] (ells) to (9.3, -.1);
  \draw [decorate,decoration={brace,amplitude=5pt,mirror,raise=2ex}]
  (.3,0) -- (1.3,0) node[midway,yshift=-2em]{$\gapp n$};
    \end{scope}
\end{tikzpicture}
    \caption{Two cases for where $\hat{\ell}$ can land.  
    As one case see in Case 2, it can be the case that $\ell$ is in the end of the transmitted codeword whereas $\elhat$, our estimate of $\ell$, is in the beginning.}
    \label{fig:ell}
\end{figure}

    \begin{claim} \label{clm:ell-estimation}
    Assume the conditions of \Cref{prop:decoding-alg-succeeds}.
        Let $\elhat$ be the value obtained in line \ref{line4:lhat-comp} of Algorithm ~\ref{alg:Dec}. Define the bad event $E_{\elhat}$ according to the following cases:
        \begin{enumerate}
            \item If $\elhat \in (\gapp n, n - \gapp n)$, then $E_{\elhat}$ is the event that $\ell \notin [\elhs, \elhe]$ where $\elhs = \elhat - \gapp n$ and $\elhe = \elhat + \gapp n$.
            \item If $\elhat \leq \gapp n$, then $E_{\elhat}$ is the event that $\ell \notin [0, \elhe] \cup [\elhs, n]$ where $\elhe = \elhat + \gapp n$ and $\elhs = n + 1 - (\elhat - \gapp n)$.
            \item If $\elhat \geq n - \gapp n$, then $E_{\elhat}$ is the event that $\ell \notin [0, \elhe] \cup [\elhs, n]$ where $\elhe = \elhat + \gapp n - (n + 1)$ and $\elhs = \elhat - \gapp n$.
        \end{enumerate}
        Then, the probability (over the choice of $\eta \sim \ber(p)^d$) that $E_{\elhat}$ occurs is at most $\exp(- \Omega_{B, \gapp, p}(n))$. 
    \end{claim}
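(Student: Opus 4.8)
My plan is to separate a \emph{probabilistic} step --- bounding the number of buffer chunks that are majority-decoded incorrectly --- from a \emph{deterministic} step --- showing that, given few such errors, the (co-)unary decoding performed in \Cref{alg:Dec} pins down $\ell$ up to the claimed window. Assume throughout (WLOG) that $i$ is even, so $b_i = 0$ and $b_{i+1} = 1$; the odd case is symmetric, swapping $\calu$ and $\calu^{\mathrm{comp}}$. By \Cref{lemma:one-broken-chunk}, at most one of the chunks $s_1,\ldots,s_{n+1}$ of $g_j$ is crossed by $h_{i,\bar{j}}$ (the ``transitional'' buffer, possibly absent); every other (``full'') buffer equals $1^{B}$ if it lies before $h_{i,\bar{j}}$ and $0^{B}$ if after. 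Let $a$ be the number of full buffers equal to $1^{B}$, and let $T \in \zo^{n+1}$ agree with the intended value on full buffers and with $\maj(s_m)$ on the transitional buffer. A short case analysis on which chunk contains $h_{i,\bar{j}}$ --- it is $\gind$, giving $\ell=0$; or $s_m\circ\tilde c_m$, giving $\ell=m$; or $s_{n+1}$, giving $\ell=n+1$ --- shows that $T = \enc{\calu}(a')$ is a genuine length-$(n+1)$ unary codeword with $a'\in\{a,a+1\}$, and that $\ell\in\{a,a+1\}$, so $|\ell-a'|\le 1$.

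For the probabilistic step, let $F$ be the number of full buffers with $\maj(s_m)\neq T_m$. These events are independent (the buffers occupy disjoint coordinate blocks), each of probability $e^{-C_pB}$ by \Cref{clm:maj-fail-prob}, so $\bbE[F]\le (n+1)e^{-C_pB}=\Theta(n)$ since $B$ is constant. By the multiplicative Chernoff bound (\Cref{lem:chernoff}) together with \eqref{eq:decode-buffers-condition} (taking $B$ large enough), $F$ is at most a suitably small constant fraction of $n$ except with probability $\exp(-\Omega_{B,\gapp,p}(n))$; condition on this event. This is the only place the $\exp(-\Omega(n))$ term enters.

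For the deterministic step, note that $\hat s$ differs from $T=\enc{\calu}(a')$ on exactly the $F$ failed buffers, so $\Delta(\hat s,\enc{\calu}(a'))=F$; since $\dec{\calu}$ returns a closest unary codeword and distinct unary codewords $\enc{\calu}(u),\enc{\calu}(v)$ are at Hamming distance $|u-v|$, the triangle inequality gives $\Delta(\hat s,\enc{\calu}(\elhat_1))\le F$ and $|\elhat_1-a'|\le 2F$. Line~\ref{line4:lhat-comp} of \Cref{alg:Dec} outputs $\elhat=\elhat_1$ unless the closest co-unary codeword is at least as close to $\hat s$. \emph{Case $\elhat=\elhat_1$:} then $|\elhat-\ell|\le 2F+1\le\gapp n$, and checking the three sub-cases of $E_{\elhat}$ (according to whether $\elhat_1$ lies in $(\gapp n,n-\gapp n)$ or near an endpoint) shows $\ell$ always lands in the prescribed set. \emph{Case $\elhat=\elhat_2$:} then $\Delta(\hat s,\enc{\calu^{\mathrm{comp}}}(\elhat_2))\le \Delta(\hat s,\enc{\calu}(\elhat_1))\le F$; combining with $\Delta(\hat s,\enc{\calu}(a'))=F$ and the identity $\Delta(\enc{\calu}(u),\enc{\calu^{\mathrm{comp}}}(v))=(n+1)-|u-v|$ (verified by inspection) yields $|a'-\elhat_2|\ge (n+1)-2F$, so one of $a',\elhat_2$ is below $2F$ and the other above $n+1-2F$. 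By the $\calu\leftrightarrow\calu^{\mathrm{comp}}$ symmetry, say $a'<2F$ and $\elhat_2>n+1-2F$; then $\ell\le a'+1$ is near $0$ while $\elhat=\elhat_2$ is near $n+1$, placing us in the boundary sub-case of $E_{\elhat}$, and the wrapped interval $[0,\elhe]$ contains $\ell$ because $\elhe=\elhat-(n+1)+\gapp n$ and $F$ is a small enough fraction of $\gapp n$. In every case $E_{\elhat}$ fails to occur, and a union bound with the probabilistic step completes the proof.

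The main obstacle I expect is the modular bookkeeping in this last case: when the true $\ell$ sits near one endpoint of $\{0,\ldots,n+1\}$ while the estimate $\elhat$ sits near the other, one must verify carefully that the wrapped-around erasure intervals $\elhs,\elhe$ of \Cref{alg:best} genuinely cover $\ell$, and making those inequalities close is exactly why $F$ must be a sufficiently small multiple of $\gapp n$ (hence why $B$ is taken large in \eqref{eq:decode-buffers-condition}); everything else reduces to routine triangle-inequality estimates on unary codewords.
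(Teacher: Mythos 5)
Your proposal takes the same high-level route as the paper (bound the number of buffer chunks whose majority vote fails, then apply Chernoff), but it reorganizes the argument in a cleaner way: you first prove a deterministic implication ``few majority failures $\Rightarrow E_{\elhat}$ does not occur'' and then bound the probability of many failures, whereas the paper interleaves the two (``$E_{\elhat}$ occurs $\Rightarrow$ at least $\gapp n/2$ failures, which is unlikely''). A real strength of your write-up is the explicit treatment of the case $\elhat = \elhat_2 = \dec{\calu^{\mathrm{comp}}}(\shat)$ when the ground-truth buffer pattern is of $\calu$-type: the paper dispatches this as ``the other case follows by an identical argument,'' but it is not literally identical, and your use of the identity $\Delta(\enc{\calu}(u), \enc{\calu^{\mathrm{comp}}}(v)) = (n+1)-|u-v|$ together with the triangle inequality is exactly the missing piece.

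Two small issues are worth flagging. First, in the $\elhat=\elhat_2$ branch you derive the requirement $4F+1\le\gapp n$, whereas the paper works with the weaker threshold $F\ge\gapp n/2$. The extra factor of two matters, because \eqref{eq:decode-buffers-condition} only guarantees $\exp(-C_p B)<\gapp/2$, so $\mathbb{E}[F]$ could genuinely sit between $\gapp n/4$ and $\gapp n/2$, and then $\Pr[F>\gapp n/4]$ is \emph{not} small; your parenthetical ``(taking $B$ large enough)'' papers over this, but strictly speaking it means you are proving the claim under a stronger hypothesis than stated. The slack comes from using only the constraint $\ell > \elhe$ in the boundary sub-case of $E_{\elhat}$; if you also use $\ell < \elhs$, you get $\gapp n < \elhat_2 - \ell < n+1-\gapp n$, which combined with $|\elhat_2-a'|\ge n+1-2F$ yields $F > (\gapp n - 1)/2$, matching the paper's threshold and removing the need to strengthen $B$. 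Second, your argument uses the value of $\elhs$ that appears in \Cref{alg:best} (line $3$: $\elhs = n+1 + (\elhat-\gapp n)$), which is the sensible wrap-around; note that the claim statement as printed writes $\elhs = n+1 - (\elhat-\gapp n)$, which makes $[\elhs,n]$ empty when $\elhat\le\gapp n$ and is evidently a sign typo. You implicitly corrected this, which is fine, but it is worth stating explicitly that you are working with the algorithm's definition.
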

    \begin{proof}
        We begin with Case 1, namely that $\elhat \in (\gapp n, n - \gapp n)$. If $\ell \notin [\elhs, \elhe]$, then $|\ell - \elhat| > \gapp n$. Assume without loss of generality that $\ell > \elhat$. 
        Let $\shat \in \zo^{n+1}$ be the quantity computed in \Cref{alg:Dec}, and suppose that $\elhat = \dec{\calu}(\shat)$.  (Note that $\elhat$ is either $\dec{\calu}(\shat)$ or $\dec{\calu^{\mathrm{comp}}}(\shat)$; assume without loss of generality that it is $\dec{\calu}(\shat)$, and the other case follows by an identical argument.)
        By the definition of the unary decoder, it must be that $\Delta (\shat, \enc{\calu}(\elhat)) \leq \Delta (\shat, \enc{\calu}(\ell))$. This implies that the number of zeros in $\shat[\elhat: \ell]$ is greater than the number of ones in this interval.
        This means that at least $\gapp n/2$  values in $\shat$ were decoded incorrectly by the majority decoder. 
        By \Cref{clm:maj-fail-prob}, the probability that a single value of $\shat$ was decoded incorrectly is $\exp(-C_p B)$. Thus, the expected number of values that are decoded incorrectly is $\exp(-C_p B) \cdot (n + 1)$.  By \Cref{lem:chernoff}, as long as $\gapp / 2 > \exp(-C_p B)$ (which it is by assumption), the probability that $\ell \notin [\elhs, \elhe]$ is at most $\exp(-\Omega(n))$, where the constant in the $\Omega(\cdot)$ depends on $B, p$ and $\beta$.

        Next, consider Case 2, namely that $\elhat \leq \gapp n$ and that $\ell \notin [0, \elhe] \cup [\elhs, n]$ where $\elhe = \elhat + \gapp n$ and $\elhs = n + 1 - (\elhat - \gapp n)$. Note that in this case it must be that $\ell > \elhat$ and that $|\ell - \elhat| > \gapp n$. Following the same arguments as in Case 1, we get again that the probability $E$ occurs is $\exp(-\Omega(n))$, and Case 3 follows in the same way.
    \end{proof}

    \begin{remark}[The meaning of ``close'']
        We remark that in Cases 2 and 3 \Cref{clm:ell-estimation} it can be the case that $\ell$ and $\elhat$ are not close to each other, in the sense that $|\ell - \elhat|$ is much bigger than $\gapp n$. This can happen if $\ell \leq \gapp n$, but $\elhat \geq n - \gapp n$ or vice versa, as depicted in \Cref{fig:ell}. However, if we consider the values modulo $n+1$, so that the interval $[0,n+1]$ ``wraps around,'' then  $\elhat$ and $\ell$ are actually close to one another. In this sense, \Cref{clm:ell-estimation} says that $\ell$ and $\elhat$ will be ``close'' to each other with high probability.
    \end{remark}

    \subsubsection{Decoding $z$}
    In this subsection, we are interested only in the case where $\elhat \in (\gapp n , n - \gapp n)$. This is because only \Cref{alg:est} (not \Cref{alg:best}), attempts to estimate $z$, and \Cref{alg:est} is only called when $\elhat$ is in the middle.\footnote{Intuitively, the reason that \Cref{alg:best} does not need to estimate $z$ is because when it is called, the cross-over point is near the boundary.  This means that the $g_j$ is already close to a codeword $w_i$, and we do not need to ``translate'' the symbols of $\sigma_i$ into $\sigma_{i+1}$ or vice versa.  Thus, estimating $z_{i+1}$ is not necessary.}  In this case, 
  \Cref{alg:est} decodes the first $\rowenc$ bits of $x$, which should contain the information $z_{i+1}$.  The following claim shows that this decoding process succeeds with probability $\exp(-\Omega (\repblow))$.

    \begin{claim} \label{clm:decode-rowenc}
        Assume the conditions of \Cref{prop:decoding-alg-succeeds}.
        Let $E_{\zhat}$ be the bad event that (a) $\elhat \in (\gapp n, n-\gapp n)$, and that (b) the quantity $\zhat$ computed on Line~\ref{line3:decode-alpha-z} of \Cref{alg:est} is incorrect, meaning that $\zhat \neq z_{i+1}$.
Suppose that the bad event $E_{\elhat}$ defined in \Cref{clm:ell-estimation} does not occur.  Then, conditioned on that, the probability that $E_{\zhat}$ occurs is at most 
\[ \Pr_{\eta}[E_{\zhat} \,|\, \overline{E_{\elhat}}] \leq \exp(-\Omega (\repblow)). \]
    \end{claim}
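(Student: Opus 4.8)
The claim is that if $\elhat$ lands in the middle, then the repetition-decoding of the first $L$ bits of $x$ correctly recovers $z_{i+1}$, except with probability $\exp(-\Omega(L))$. The key structural observation is that, conditioned on $\overline{E_{\elhat}}$ and on $\elhat \in (\beta n, n-\beta n)$, we know from \Cref{clm:ell-estimation} that $\ell \in [\elhs, \elhe]$, and in particular $\ell \geq \elhs = \elhat - \beta n > 0$. Since $\ell$ is the chunk in which the crossover point $h_{i,\bar j}$ falls, $\ell \geq 1$ means that the crossover point is strictly to the right of the $\gind$ chunk. By \Cref{lemma:one-broken-chunk}, the first chunk $\gind$ of $g_j$ is therefore a \emph{full chunk}, meaning it equals the corresponding chunk of $w_{i+1}$, which is exactly $L_{z_{i+1}} = \call(z_{i+1})$, the repetition encoding of $z_{i+1}$.

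\medskip
\noindent So the first step is to record that, on the event $\overline{E_{\elhat}}$ together with $\elhat \in (\beta n, n-\beta n)$, we have $\tilde L = L_{z_{i+1}} \oplus \eta'$, where $\eta' \sim \ber(p)^{L}$ is the restriction of the noise vector $\eta$ to the first $L$ coordinates. The second step is a standard Chernoff argument on the repetition code: $\call$ repeats each of the $\log(kk')$ bits of $z_{i+1}$ a total of $L/\log(kk')$ times, and $\dec{\call}$ takes a majority within each block of $L/\log(kk')$ repeated bits. A single block is decoded incorrectly only if at least half of its $L/\log(kk')$ bits are flipped; since $p < 1/2$, \Cref{lem:chernoff} bounds this by $\exp(-\Omega(L/\log(kk')))$, exactly as in \Cref{clm:maj-fail-prob}. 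Taking a union bound over the $\log(kk')$ blocks, the probability that \emph{any} block fails — i.e., that $\zhat \neq z_{i+1}$ — is at most $\log(kk') \cdot \exp(-\Omega(L/\log(kk')))$. The hypothesis of \Cref{prop:decoding-alg-succeeds} that $L = \omega(\log(kk')\log\log(kk'))$ is precisely what makes this quantity $\exp(-\Omega(L/\log(kk'))) = \exp(-\Omega(L))$... well, more carefully: the $\log(kk')$ prefactor is absorbed since $L/\log(kk') = \omega(\log\log(kk')) = \omega(\log\log(kk'))$ dominates $\log\log(kk')$, so $\log(kk')\cdot\exp(-\Omega(L/\log(kk'))) \le \exp(-\Omega(L/\log(kk')))$ for large parameters, and this is $\exp(-\Omega(L))$ up to the logarithmic slack in the exponent which is what the statement's $\exp(-\Omega(L))$ notation (with constants depending on $p$) should be read as. I would state the bound as $\exp(-\Omega(L))$ consistent with how \eqref{eq:probfail} is written, noting the role of the $L = \omega(\log(kk')\log\log(kk'))$ assumption.

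\medskip
\noindent The one subtlety — and the place I expect to spend the most care — is the conditioning. The event $E_{\elhat}$ (and hence $\overline{E_{\elhat}}$, and the sub-case $\elhat \in (\beta n, n-\beta n)$) depends on the noise in the $s_m$ chunks, not on the noise in the first $L$ bits; the noise coordinates are independent across coordinates. So I would argue that the noise $\eta'$ on the first $L$ coordinates is independent of the event $\overline{E_{\elhat}} \cap \{\elhat \in (\beta n, n - \beta n)\}$, and therefore conditioning on the latter does not change the distribution of $\eta'$ — it remains $\ber(p)^{L}$. Then the Chernoff bound applies verbatim to the conditional probability. (If one is worried that $\overline{E_{\elhat}}$ might still have small probability making the conditional probability blow up, note that we only claim an upper bound $\Pr[E_{\zhat} \mid \overline{E_{\elhat}}] \le \exp(-\Omega(L))$ and, by independence, $\Pr[E_{\zhat} \mid \overline{E_{\elhat}}] = \Pr_{\eta' \sim \ber(p)^L}[\dec{\call}(L_{z_{i+1}} \oplus \eta') \neq z_{i+1}]$ directly, with no conditioning surviving.) Assembling: restrict to the first $L$ bits being a clean noisy repetition codeword using \Cref{lemma:one-broken-chunk}, apply per-block majority Chernoff as in \Cref{clm:maj-fail-prob}, union-bound over $\log(kk')$ blocks, and invoke $L = \omega(\log(kk')\log\log(kk'))$ to conclude.
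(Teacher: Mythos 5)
Your proof is correct and follows the same route as the paper: break $\tilde{L}$ into the $\log(kk')$ repetition blocks, bound each block's majority-decoding failure with Chernoff as in \Cref{clm:maj-fail-prob}, and union-bound over blocks. You are more careful than the paper in two places worth flagging. First, you justify that $\gind = L_{z_{i+1}}$ by observing that $\overline{E_{\elhat}}$ together with $\elhat > \gapp n$ forces $\ell \geq \elhs > 0$, so the crossover point lies strictly to the right of the first chunk; and you correctly note that the conditioning event is measurable with respect to the noise on the $s_m$ chunks alone, hence independent of the noise on the first $\repblow$ coordinates. The paper simply asserts that $\tilde{L}$ is a noisy version of $\call(z_{i+1})$ without this justification. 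Second, you correctly take the per-block length to be $\repblow/\log(kk')$, consistent with the codomain $\call\colon \F_2^{\log(kk')} \to \F_2^{\repblow}$ and with \Cref{alg:compr}, whereas the paper's proof writes that each $\tilde{L}_m$ has length $\repblow$, which would make the $\tilde{L}$ chunk have length $\repblow\log(kk')$ and contradict \Cref{def:calw}. With the correct per-block length, the union bound yields $\log(kk')\cdot\exp(-\Omega(\repblow/\log(kk')))$, which the hypothesis $\repblow = \omega(\log(kk')\log\log(kk'))$ reduces to $\exp(-\Omega(\repblow/\log(kk')))$ --- slightly weaker than the stated $\exp(-\Omega(\repblow))$, but still sufficient for \Cref{thm:main}, where $\repblow/\log(kk') = \Theta(n)$.
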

    \begin{proof}
        Clearly, if $\elhat \notin (\gapp n , n - \gapp n)$ the claim trivially holds as the probability of $E_{\zhat}$ is $0$.
        Assume that $\elhat\in (\gapp n , n - \gapp n)$ and that $E_{\elhat}$ did not occur.
        Recall that $\call(z_{i+1})$ simply duplicates each bit of $z_{i+1}$ $\repblow$ times and that $\tilde{L}$ is a noisy version of $\call(z_{i+1})$.  Let us write $\tilde{L} = \tilde{L}_0 \circ \tilde{L}_1 \circ \cdots \circ \tilde{L}_{\log(kk')-1}$, where each $\tilde{L}_m \in \zo^{\repblow}.$
        The decoding algorithm $\dec{\call}$ then takes a majority vote of each $\tilde{L}_m$ to recover the estimate $\zhat$.  Thus, it fails if and only if there is some $m \in [\log(kk') - 1]$ so that $\tilde{L}_m$ has more than half of its $\repblow$ bits flipped.
        By the Chernoff bound (\Cref{lem:chernoff}), and since the expectation of the number of bits that are flipped is $p \cdot \repblow$, the probability that this occurs for a particular $m$ most $\exp(-\Omega (\repblow) )$.
        Applying the union bound over all $\log(kk')$ values of $m$, the probability that the decoding of $\call(z_{i+1})$ fails is at most $\log(kk') \cdot \exp(-\Omega(\repblow)) = \exp(-\Omega(\repblow))$. 
    \end{proof}

    \subsubsection{Estimating $i$}
    In this subsection, we show that the estimate of $\ihat$ obtained in either \Cref{alg:est} or \Cref{alg:best} (depending on which was called by \Cref{alg:Dec}) succeeds with high probability.

   In more detail, $\ihat$ is computed either in \Cref{alg:est} or \Cref{alg:best} based on the location of $\elhat$ (whether it is in the middle or in the boundary).  The following claim shows that with high probability, the estimate of $\ihat$ is correct, meaning that $\ihat = i$ if $\elhat$ is in the middle, and $\ihat$ is equal to either $i$ or $i+1$ if $\elhat$ is in the boundary.
    
    \begin{claim} \label{clm:outer-code-decoding}
        Assume the conditions of \Cref{prop:decoding-alg-succeeds}. 
        Define the bad event $E_{\ihat}$ according to the following cases
        \begin{enumerate}
            \item If $\elhat \in (\gapp n, n - \gapp n)$ then $E_{\ihat}$ is the event that $\ihat \neq i$ after performing line \ref{line3:dec-out} in Algorithm~\ref{alg:est}.
            \item If $\elhat \notin (\gapp n, n - \gapp n)$ then $E_{\ihat}$ is the event that after performing line~\ref{line2:dec-out} in Algorithm~\ref{alg:best}, either
            \begin{itemize}
               \item $\elhat \leq \gapp n$ and $\ihat \neq i$; or
                \item $\elhat \geq n - \gapp n$ and $\ihat \neq i+1$.
            \end{itemize} 
        \end{enumerate}

        Assume that neither $E_{\elhat}$ nor $E_{\zhat}$ occurred. Then, conditioned on that, the probability that $E_{\ihat}$ occurs is at most 
\[ \Pr_{\eta}[E_{\ihat} \,|\, \overline{E_{\elhat}}, \overline{E_{\zhat}}] \leq \exp(-\Omega(n)), \]
where the constant in the $\Omega(\cdot)$ depends on $p, \xi$ and $\gapp$ defined in (\ref{eq:decode-outer-condition}).
      
    \end{claim}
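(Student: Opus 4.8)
The plan is to reduce the claim to bounding the number of errors and erasures that \Cref{alg:est} or \Cref{alg:best} feeds to the outer decoder $\dec{\Cout}$, and then to control those two quantities. Recall that $\dec{\Cout}$ recovers an outer codeword whenever $2e+t<\delout n$, where $t$ counts the erased coordinates of its input and $e$ counts the non-erased coordinates on which the input disagrees with that codeword. So it suffices to exhibit an outer codeword $\sigma^\star\in\Cout$ such that, except with probability $\exp(-\Omega(n))$, the vector $\sighat$ handed to $\dec{\Cout}$ (Line~\ref{line3:dec-out} of \Cref{alg:est}, resp.\ Line~\ref{line2:dec-out} of \Cref{alg:best}) has at most $2\gapp n+1$ erasures and disagrees with $\sigma^\star$ on at most $(1+\xi)\pCinFail\, n$ non-erased coordinates, and moreover $\sigma^\star=\sigma_i$ exactly in the cases where the claim asks for $\ihat=i$ and $\sigma^\star=\sigma_{i+1}$ exactly where it asks for $\ihat=i+1$. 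Given this, \eqref{eq:decode-outer-condition} yields $2(1+\xi)\pCinFail\, n+2\gapp n+1<\delout n$ for $n$ large, so $\dec{\Cout}$ returns $\sigma^\star$, which is precisely the correct $\ihat$.

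First I would settle the (deterministic) erasure count: in \Cref{alg:est} the erased index set is $[\elhs,\elhe]$, of size $2\gapp n+1$, and in \Cref{alg:best} the erased set $[0,\elhe]\cup[\elhs,n]$ has size at most $2\gapp n+1$ in either branch of Lines~\ref{line2:set-elhs-elhe-start}--\ref{line2:set-elhs-elhe-end}, so $t\le 2\gapp n+1$ always. Next I would identify $\sigma^\star$ and reduce $e$ to the number of inner-decoding failures, conditioning throughout on $\overline{E_{\elhat}}$ and $\overline{E_{\zhat}}$. By \Cref{lemma:one-broken-chunk}, exactly one chunk of $g_j$ is ``broken,'' namely the chunk $\ell$ of \eqref{eq:ell-def} containing the crossover point. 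If $\elhat\in(\gapp n,n-\gapp n)$ (so \Cref{alg:est} runs), then $\overline{E_{\elhat}}$ places $\ell$ inside $[\elhs,\elhe]$, so chunk $\ell$ is erased and every non-erased inner chunk $\tilde{c}_m$ is a full chunk, equal up to noise to $c_{i+1}[m]$ when $m<\elhs$ and to $c_i[m]$ when $m>\elhe$; since $\overline{E_{\zhat}}$ gives $\zhat=z_{i+1}$ and $c_{i+1}=c_i\oplus a_{z_{i+1}}$ forces $\sigma_{i+1}[m]=\sigma_i[m]+a_{z_{i+1}}[m]$ (reading the codeword $a_{z_{i+1}}[m]\in\Cin$ as the corresponding element of $\F_q$), the subtraction on Line~\ref{line3:reverse} turns the $m<\elhs$ entries of $\sighat$ into $\sigma_i[m]$ and leaves the $m>\elhe$ entries at $\sigma_i[m]$; hence after the update every non-erased entry of $\sighat$ equals $\sigma_i[m]$ except on the chunks where $\dec{\Cin}$ erred, so $\sigma^\star=\sigma_i$ and $\ihat=i$. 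If instead $\elhat\notin(\gapp n,n-\gapp n)$ (so \Cref{alg:best} runs), then $\overline{E_{\elhat}}$ places $\ell$ in $[0,\elhe]\cup[\elhs,n]$: in the first case every non-erased chunk $m$ satisfies $m>\elhe\ge\ell$ and so equals $c_i[m]$ up to noise, giving $\sigma^\star=\sigma_i$ and $\ihat=i$, while in the second case every non-erased chunk has $m<\elhs\le\ell$ and so equals $c_{i+1}[m]$ up to noise, giving $\sigma^\star=\sigma_{i+1}$ and $\ihat=i+1$; in both sub-cases no reverse step is applied and $\sighat$ disagrees with $\sigma^\star$ only on $\dec{\Cin}$-failures. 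Matching ``$\ell$ near the beginning'' with ``$\elhat\le\gapp n$'' and ``$\ell$ near the end'' with ``$\elhat\ge n-\gapp n$'' (up to the wrap-around discussed after \Cref{clm:ell-estimation}) then reproduces exactly the two sub-cases in the statement.

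What remains is the probabilistic step, which I expect to be routine. Each non-erased inner chunk $\tilde{c}_m$ receives a block of $x$ of the form $c[m]\oplus\eta^{(m)}$ with $c[m]\in\Cin$ and $\eta^{(m)}\sim\ber(p)^{n'}$, and the $\eta^{(m)}$ are mutually independent and independent of the noise restricted to $\gind$ and the $s_m$'s, so conditioning on $\overline{E_{\elhat}},\overline{E_{\zhat}}$ leaves their law unchanged; hence $\dec{\Cin}$ errs on chunk $m$ with probability at most $\pCinFail$, independently over $m$. By the Chernoff bound (\Cref{lem:chernoff}), and since $\pCinFail=o(1)$ while \eqref{eq:decode-outer-condition} leaves a constant-fraction margin in $\dec{\Cout}$'s decoding budget after the at most $2\gapp n+1$ erasures, the number $F$ of non-erased chunks on which $\dec{\Cin}$ errs satisfies $\Pr[F>(1+\xi)\pCinFail\, n]\le\exp(-\Omega(n))$ with the hidden constant depending only on $p,\xi,\gapp$. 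On the complement we have $e\le F\le(1+\xi)\pCinFail\, n$, so $2e+t<\delout n$ and $\dec{\Cout}$ outputs $\sigma^\star$; a union bound over the constantly many bad events invoked above then gives $\Pr[E_{\ihat}\mid\overline{E_{\elhat}},\overline{E_{\zhat}}]\le\exp(-\Omega(n))$.

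The hard part will be the second step: one must pin down the location of the broken chunk $\ell$ relative to the erased window using only the ``$\ell$ is close to $\elhat$ modulo $n+1$'' guarantee from $\overline{E_{\elhat}}$, and in particular handle the ``seam'' near the two ends (e.g.\ when the crossover falls inside $\gind$) so that the value of $\ihat$ actually returned by \Cref{alg:est}/\Cref{alg:best} agrees with the one asserted; verifying in the middle regime that the erased window of width $2\gapp n$ truly swallows the unique broken chunk is exactly where $\overline{E_{\elhat}}$ and the definitions of $\elhs,\elhe$ in terms of $\gapp$ enter. Once that combinatorial accounting is done, the Chernoff estimate on inner-decoder failures---together with the independence of the noise across chunks---closes the argument.
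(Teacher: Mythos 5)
Your proposal follows essentially the same route as the paper's proof: reduce to bounding the errors $e$ and erasures $t$ passed to $\dec{\Cout}$, argue that $t\le 2\gapp n+1$ deterministically, identify the target outer codeword $\sigma^\star\in\{\sigma_i,\sigma_{i+1}\}$ using $\overline{E_{\elhat}}$ (to locate the broken chunk inside the erased window) and $\overline{E_{\zhat}}$ (to translate $\sigma_{i+1}$-symbols back to $\sigma_i$-symbols via $a_{\zhat}$), bound $e$ by a Chernoff estimate on inner-decoder failures, and conclude from \eqref{eq:decode-outer-condition} that $2e+t<\delout n$. This matches the paper's argument step for step, including the split between the middle and boundary cases.

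Two small remarks. First, your observation that conditioning on $\overline{E_{\elhat}}$ and $\overline{E_{\zhat}}$ leaves the law of the noise on the $\tilde{c}_m$ chunks unchanged (because those events depend only on $\eta$ restricted to $\gind$ and the $s_m$'s) is a clean justification for the independence used in the Chernoff step; the paper uses this implicitly but does not spell it out. Second, you flag as ``the hard part'' the matching of ``which side $\ell$ falls on'' with ``which side $\elhat$ falls on'' in the boundary case. This is indeed the subtlest point: $\overline{E_{\elhat}}$ only places $\ell$ in $[0,\elhe]\cup[\elhs,n]$, so strictly speaking $\ihat$ as returned by $\dec{\Cout}$ is $i$ or $i+1$ according to which of these two arcs $\ell$ lands in, not according to whether $\elhat\le\gapp n$ or $\elhat\ge n-\gapp n$. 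The paper's proof in fact establishes the $\ell$-dependent dichotomy and then asserts the $\elhat$-dependent one in its closing bullet without further justification; notably, the subsequent Claim~\ref{clm:j-jhat} is invoked with the $\ell$-dependent version (``$\ihat=i$ if $\ell\in[0,\elhe]$ and $\ihat=i+1$ if $\ell\in[\elhs,n]$''). So you have identified a genuine imprecision, but it is present in the paper's own proof as well and does not affect the use made of the claim downstream; your proposal is at least as precise as the paper on this point.
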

    \begin{proof}
        Based on the location of $\elhat$, this claim 
        considers the lines \ref{line3:set-corrupt-outer-code}-\ref{line3:dec-out} in Algorithm~\ref{alg:est} and lines \ref{line2:set-corrupt-outer-code}-\ref{line2:dec-out} in Algorithm~\ref{alg:best}.
        We consider the two cases separately, and show that in each case, before $\ihat$ is computed, $\sighat$ corresponds to a noisy version of $\sigma_i$ or $\sigma_{i+1}$.  Then we can invoke the guarantees of $\dec{\Cout}$ to argue that we correctly return either $i$ or $i+1$.
        \begin{enumerate}
            \item \textbf{$\elhat$ is in the middle, i.e., $\elhat \in (\gapp n, n - \gapp n)$.}

            Since we assume that $E_{\elhat}$ did not occur, the quantity $\ell$ defined in \eqref{eq:ell-def} satisfies $\ell \in [\elhat_s, \elhat_e]$.  This implies that the chunks $\tilde{c}_m$ for $m\in [1, \elhs)$ are corrupted versions of the inner codewords of $c_{i+1}[m], m\in [1, \elhs]$; and that the chunks $\tilde{c}_m$ for $m\in (\elhe, n]$ are corrupted versions of the inner codewords of $c_{i}[m], m\in (\elhe, n]$. 

              First, we argue that with high probability, the chunks $\tilde{c}_m$ are \emph{mostly} correctly decoded to the symbols $\sighat[m]$ in Lines~\ref{line4:dec-inn-code-s}-\ref{line4:dec-inn-code-e} in \Cref{alg:Dec}, in the sense that with probability at least $\exp(-\Omega(n))$ over the choice of $\eta$, after Line~\ref{line4:dec-inn-code-e}, at least a $1 - (1 + \xi)\cdot \pCinFail$ fraction of  $m \in [1,\elhs) \cup (\elhe, n]$ satisfy
              \begin{equation}\label{eq:decode_good}
              \sighat[m] = \begin{cases} \sigma_i[m] & m \in (\elhe, n] \\ \sigma_{i+1}[m] & m \in [1, \elhs) \end{cases}
              \end{equation}
              where we recall that $\sigma_i \in \Cout$ is the $i$'th outer codeword, so that $c_i = \sigma_i \circ \Cin$.
              To see that \eqref{eq:decode_good} holds for most $m$, we observe that for any $m \in [1, \elhs) \cup (\elhe, n]$, the probability that $\sighat[m]$ does \emph{not} satisfy \eqref{eq:decode_good} is at most $\pCinFail,$ and so the expected number to not satisfy \eqref{eq:decode_good} is $\pCinFail (n - 2 \gapp n - 1)$.  Thus, by a Chernoff bound (Lemma~\ref{lem:chernoff}), the probability that more than $(1 + \xi) \pCinFail (n - 2\gapp n - 1)$ of these $m$ do not satisfy \eqref{eq:decode_good} is at most $\exp(-\Omega(n))$, where the constant inside the $\Omega(\cdot)$ depends on $p, \gapp$ and $\xi$.

            Next, we argue that if the favorable case above occurs, then $\sighat$ is a noisy version of $\sigma_i$, with not too many errors or erasures.
            
            Since we assume that $E_{\zhat}$ did not occur, we have that $\zhat = z_{i+1}$.
            Thus, our choice of ordering of the codewords of $\cC$ (see Equation~\eqref{eq:sig-row}) implies that
            \[ c_{i} = c_{i+1} \oplus a_{z_{i+1}} = c_{i+1} \oplus a_{\zhat}.\] Therefore, after $\sighat$ is done being updated in \Cref{alg:est} (that is, after line~\ref{line:doneupdating}), for the $m \in [1, \elhs) \cup (\elhe, n]$ that satisfy \eqref{eq:calg-rate}, we have
            \[
                \sighat[m] = 
                \begin{cases}
                    c_{i+1}[m] \oplus a_{\zhat}[m], \text{ if $m \in [1, \elhs)$} \\
                    c_i[m], \text{ if $m \in (\elhe,n]$}
                \end{cases} 
           =
                    c_i[m]
            \]
            Meanwhile, we have $\sighat[m] = \bot$ for all $m \in [\elhs, \elhe].$  In the favorable case that \eqref{eq:decode_good} is satisfied for all but $(1 + \xi) \pCinFail (n - 2\gapp n - 1)$ values of $m \in [1, \elhs) \cup (\elhe, n]$, we conclude that $\sighat \in \F_q^n$ is a noisy version of $\sigma_i \in \Cout$, where there are at most $2\gapp n + 1$ erasures, and at most $(1 + \xi) \pCinFail \cdot (n - 2\gapp n - 1)$ errors.

            \item \textbf{$\elhat$ is in the boundary, i.e., $\elhat \notin (\gapp n, n - \gapp n)$.}

Again, since we assume that $E_{\elhat}$ did not occur, the quantity $\ell$ defined in \eqref{eq:ell-def} satisfies $\ell \in [1, \elhe] \cup [\elhs, n]$.  If $\ell \in [1, \elhe]$, then for all $m \in (\elhe, \elhs)$, the chunk $\tilde{c}_m$ is a corrupted version of the inner codeword $c_{i}[m]$.  If $\ell \in [\elhs, n]$, then for all $m \in (\elhe, \elhs)$, the chunk $\tilde{c}_m$  is a corrupted version of the inner codeword $c_{i+1}[m]$.
By the same argument as in the previous case, we conclude that with probability at least $1 - \exp(-\Omega(n))$ over the choice of $\eta$, after Line~\ref{line4:dec-inn-code-e} in \Cref{alg:Dec}, at least a $1 - (1 + \xi) \cdot \pCinFail$ fraction of the $m \in (\elhe, \elhs)$ satisfy
\begin{equation*}
\sighat[m] = \begin{cases} \sigma_i[m] & \ell \in [1, \elhe] \\ \sigma_{i+1}[m] & \ell \in [\elhs, n] \end{cases}
\end{equation*}
Since we have $\sighat[m] = \bot$ for all $m \in [1, \elhe] \cup [\elhs,n]$,
this means that with probability $1 - \exp(-\Omega(n))$, when $\ihat$ is computed on Line~\ref{line2:dec-out}, $\sighat$ is a corrupted version of either $\sigma_i$ or $\sigma_{i+1}$, with at most $2\gapp n + 1$ erasures and at most $(1 + \xi)\pCinFail(n - 2\gapp n - 1)$ errors.     
        \end{enumerate}
        
Thus, in either case, we have that when $\dec{\Cout}$ is called (Line~\ref{line3:dec-out} for \Cref{alg:est} or Line~\ref{line2:dec-out} for \Cref{alg:best}), it is called on a corrupted codeword $\sigma$ that has at most $2 \gapp n + 1$ erasures and at most $(1 + \xi)\pCinFail \cdot(n - 2\gapp n - 1)$ errors.
     Recall that our outer code can recover efficiently from $e$ errors and $t$ erasures, as long as $2e + t < \delout n$.  Plugging in the number of errors and erasure above, we see that indeed we have 
        \begin{align*}
            2e + t &\leq 2(1 + \xi)\pCinFail (n - 2\gapp n -1) + 2\gapp n + 1 \\
            &\leq \left( 2(1 + \xi)\pCinFail + 2\beta + \frac{1}{n} \right) \cdot n \\
            & < \delout n
        \end{align*}
        where the last inequality holds for large enough $n$ (relative to $\frac{1}{\delout}$) and by our inequality assumption \eqref{eq:decode-outer-condition}.
        We conclude that with probability at least $1 - \exp(\Omega(n))$:
        \begin{itemize}
            \item If $\elhat$ is in the middle, then $\ihat = i$
            \item If $\elhat$ is in the boundary, then $\ihat$ is either $i$ or $i+1$, depending on which side of the boundary $\elhat$ is on.
        \end{itemize}
        This proves the claim.
    \end{proof}
    \subsubsection{Estimating $j$}
Next, we argue that the estimate $\jhat$ that \Cref{alg:Dec} returns satisfies $|j -\jhat| = \Delta(g_j, g_{\jhat})$ with high probability.  Before we state and prove that (in \Cref{clm:j-jhat} below), we first prove the correctness of \Cref{alg:compr}, as this is used as a step in the process of deterimining $\jhat$.
 \begin{claim}\label{cl:compr_correct}
 \Cref{alg:compr} is correct.  That is, given $i \in \{0,1,\ldots, q^k-1\}$, $\compr(i) = r_i$.
 \end{claim}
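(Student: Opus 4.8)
The plan is to verify directly that the quantity $\hat r_i$ computed by Algorithm~\ref{alg:compr} equals $r_i = \sum_{t=1}^i \Delta(w_{t-1},w_t)$, by splitting the Hamming distance $\Delta(w_{t-1},w_t)$ into its contributions from the three types of chunks (the buffers $s_m$, the inner-codeword chunks $\tilde c_m$, and the prefix $L_{z_t}$), and summing each contribution over $t \le i$. First I would recall from Definition~\ref{def:calw} that $w_{t-1}$ and $w_t$ have opposite buffer bits (one is all-$0$'s, the other all-$1$'s in every buffer block), so the buffer chunks contribute exactly $B(n+1)$ to each $\Delta(w_{t-1},w_t)$; summed over $t=1,\dots,i$ this gives the initial term $i\cdot(n+1)\cdot B$ in Algorithm~\ref{alg:compr}.

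Next I would handle the inner-codeword chunks. By \eqref{eq:sig-row}, $c_t = c_{t-1}\oplus a_{z_t}$, so the contribution of the $\tilde c_m$ chunks to $\Delta(w_{t-1},w_t)$ is exactly $\|a_{z_t}\|$ (the buffers and $L$-block are disjoint coordinate regions). Summing over $t\le i$ gives $\sum_{t=1}^i \|a_{z_t}\| = \sum_{z=0}^{k'k-1} \mathcal N_{k'k}(z,i)\,\|a_z\|$, since $\mathcal N_{k'k}(z,i)$ counts exactly the number of $t\in\{1,\dots,i\}$ with $z_t=z$ (note $z_0$ is undefined / the $t=0$ term is absent, consistent with Definition~\ref{def:BRC_stuff}). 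By Observation~\ref{obs:BRC}, Item~2, $\mathcal N_{k'k}(z,i) = \lfloor (i+2^z)/2^{z+1}\rfloor$, which matches the coefficient of $\|a_z\|$ in Algorithm~\ref{alg:compr}.

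Finally I would handle the $L_{z_t}$ prefix. Since $L_{z_t} = \call(z_t)$ is the repetition encoding that repeats each of the $\log(k k')$ bits of $\mathrm{bin}(z_t)$ exactly $\repblow/\log(kk') = 2L/\log(kk')$ wait --- I should be careful here: $\call:\F_2^{\log(kk')}\to\F_2^{\repblow}$ repeats each bit $\repblow/\log(kk')$ times, so $\Delta(L_{z_{t-1}}, L_{z_t}) = \frac{\repblow}{\log(kk')}\cdot \Delta(\mathrm{bin}(z_{t-1}),\mathrm{bin}(z_t))$. Hmm, but Algorithm~\ref{alg:compr} uses the term $\frac{2L}{\log(kk')}\cdot\|\mathrm{bin}(z)\|$ rather than a difference of consecutive binary expansions. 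I would need to argue that $\sum_{t=1}^i \Delta(\mathrm{bin}(z_{t-1}),\mathrm{bin}(z_t)) = \sum_z \mathcal N_{k'k}(z,i)\cdot\|\mathrm{bin}(z)\|$; this should follow from the same bookkeeping: each time the "active row index" changes to $z$ (which happens $\mathcal N_{k'k}(z,i)$ times up to step $i$), the contribution to be accounted is $\|\mathrm{bin}(z)\|$ rather than the Hamming distance between successive binary strings — i.e. one reorganizes the telescoping-style sum so that one attributes a cost of $\|\mathrm{bin}(z_t)\|$ to step $t$ instead of $\|\mathrm{bin}(z_{t-1})\oplus\mathrm{bin}(z_t)\|$. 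The identity $\sum_{t=1}^i \|\mathrm{bin}(z_{t-1})\oplus \mathrm{bin}(z_t)\| = \sum_{t=1}^i \|\mathrm{bin}(z_t)\|$ holds here because in the BRC bit-flip sequence, between two consecutive occurrences of any fixed index $z'$, every index $z''<z'$ has been flipped an even number of times (a parity argument on Observation~\ref{obs:BRC}, Item~1 and Equation~\eqref{eq:rec}), so the running XOR of the $\mathrm{bin}(z_t)$'s behaves compatibly; then combining with the factor $\repblow/\log(kk') = 2L/\log(kk')$ gives exactly the stated coefficient. Summing the three contributions — $i(n+1)B$, $\sum_z\mathcal N(z,i)\|a_z\|$, and $\sum_z\mathcal N(z,i)\cdot\frac{2L}{\log(kk')}\|\mathrm{bin}(z)\|$ — and invoking Observation~\ref{obs:BRC} to rewrite $\mathcal N_{k'k}(z,i)$ as $\lfloor(i+2^z)/2^{z+1}\rfloor$, yields precisely $\hat r_i$ as computed by Algorithm~\ref{alg:compr}, completing the proof.

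The main obstacle I anticipate is exactly the reconciliation of the $L_{z_t}$-prefix contribution: verifying that one may replace the per-step Hamming distance $\Delta(\mathrm{bin}(z_{t-1}),\mathrm{bin}(z_t))$ by $\|\mathrm{bin}(z_t)\|$ inside the sum. This requires a careful parity argument about the bit-flip sequence $Z_{k'k}$ of the BRC (using the self-similar recursion $Z_k = Z_{k-1}\circ(k-1)\circ Z_{k-1}$ from \eqref{eq:rec}), and is the one place where the structure of the specific ordering — rather than just generic telescoping — is essential. Everything else is routine bookkeeping of disjoint coordinate blocks plus a direct appeal to Observation~\ref{obs:BRC}.
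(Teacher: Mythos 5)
Your overall decomposition matches the paper's: you split $r_i = \sum_{t=1}^i \Delta(w_{t-1},w_t)$ into the contributions of the buffer chunks $s_m$, the inner-codeword chunks $\tilde c_m$, and the $\tilde L$ prefix, and your treatment of the first two is correct and essentially identical to the paper's (each step flips all $(n+1)B$ buffer bits, and $\Delta(c_{t-1},c_t)=\|a_{z_t}\|$, summed via $\mathcal N_{k'k}(z,i)=\lfloor (i+2^z)/2^{z+1}\rfloor$).

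However, your handling of the $\tilde L$ prefix has a genuine gap. The identity you propose to prove,
\[
\sum_{t=1}^i \Delta\bigl(\mathrm{bin}(z_{t-1}),\mathrm{bin}(z_t)\bigr)=\sum_z \mathcal N_{k'k}(z,i)\,\|\mathrm{bin}(z)\|,
\]
is false: already for $i=3$ (where $z_1=0$, $z_2=1$, $z_3=0$) the left side is $0+1+1=2$ while the right side is $1$. The correct statement carries a factor of $2$, and that factor of $2$ is precisely where the coefficient $\tfrac{2L}{\log(kk')}$ in Algorithm~\ref{alg:compr} comes from; your attempted reconciliation ``$\repblow/\log(kk')=2L/\log(kk')$'' papers over the discrepancy rather than resolving it. Moreover, the ``running XOR of the $\mathrm{bin}(z_t)$'s'' parity argument you sketch conflates the BRC codewords $\calr_{k'k}(t)$ (which really are a running XOR of unit vectors $e_{z_s}$) with the unrelated binary expansions $\mathrm{bin}(z_t)$, and does not obviously produce any identity of the needed form. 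The paper's actual argument is much simpler and is the one you should use: by Observation~\ref{obs:BRC}, Item~1, $z_t\neq 0$ precisely when $t$ is even, so every nonzero $z_t$ is immediately preceded and followed by a zero. Hence the two consecutive transitions around any even $t$ contribute $\Delta(\mathrm{bin}(z_{t-1}),\mathrm{bin}(z_t))+\Delta(\mathrm{bin}(z_t),\mathrm{bin}(z_{t+1}))=\|\mathrm{bin}(z_t)\|+\|\mathrm{bin}(z_t)\|=2\|\mathrm{bin}(z_t)\|$, and counting the occurrences of each nonzero $z$ via $\mathcal N_{k'k}(z,i)$ yields the $\tfrac{2L}{\log(kk')}\cdot\|\mathrm{bin}(z)\|$ term directly (the $z=0$ term can be added back for free since $\|\mathrm{bin}(0)\|=0$). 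You should replace your parity/running-XOR sketch with this pairing argument.
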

  \begin{proof}
        Consider the task of computing $r_i$ from $i$.  From Definition~\ref{def:notation}, we have
        \begin{equation}\label{eq:sum}
     r_i = \sum_{t=1}^i \Delta(w_{t-1}, w_t).
     \end{equation}
        Recalling that we may break up the codewords $w_t \in \mathcal{W}$ into chunks, we see that there are three types of contributions to $r_i$:  (1) Contributions from the chunks $s_m$ for $m=1, \ldots, n+1$; (2) contributions from the chunks $\tilde{c}_m$ for $m=1, \ldots, n$; and (3) contributions from the chunks $\tilde{L}$.
        We consider each of these in turn.
\begin{enumerate}
        \item \textbf{The chunks $s_m$.}  Since $t-1$ and $t$ have different parities, the chunks $s_m$ in $w_{t-1}$ are all completely different from those in $w_t$.  This contributes a total of $(n+1) \cdot B \cdot i$ to the sum in~\eqref{eq:sum}.

        \item \textbf{The chunks $\tilde{c}_m$.}
        Recall from \eqref{eq:sig-row} that $c_t = c_{t-1} + a_{z_t}$, where $a_{z_t}$ is the $z_t$th row of the generator matrix $A$ of $\cC$.  Thus, the contribution to \eqref{eq:sum} of the chunks $\tilde{c}_m$ for $m=1, \ldots, n$ is
        \[ \sum_{t=1}^i \Delta(c_{t-1}, c_t) = \sum_{t=1}^i \| a_{z_t} \|,\]
        where $\|\cdot\|$ denotes hamming weight. 
        For each $z$, from Observation~\ref{obs:BRC}, the number of $t$ so that $z = z_t$ is $\left\lfloor \frac{i+2^z}{2^{z+1}} \right\rfloor.$  
        Thus, the total contribution to $r_i$ from the $\tilde{c}_m$ chunks is
        \[ \sum_{z=0}^{k'k - 1}  \lfloor \frac{i + 2^z}{2^{z+1}} \rfloor \cdot \|a_z\|.\]
        
        \item \textbf{The chunk $\tilde{L}$.} Recall that what goes into the chunk $\tilde{L}$ in $c_t$ is $\call(z_t)$, where $\call$ is the code that repeats each bit representing $z_t$ exactly $\repblow$ times.  Thus, the contribution to \eqref{eq:sum} from these chunks is
        \[ \sum_{t=1}^i \Delta(\call(z_{t-1}), \call(z_t)) = \repblow \sum_{t=1}^i \Delta( \mathrm{bin}(z_{t-1}), \mathrm{bin}(z_t)),\]
        where $\mathrm{bin}(z)$ denotes the binary expansion of $z$.  By Observation~\ref{obs:BRC}, $z_t > 0$ if and only if $t$ is even.  Thus, for all $z \in \{1, 2, \ldots, kk' - 1\}$, $z = z_t$ implies that $z_{t-1} = z_{t+1} = 0$.  This means that the contributions from the two terms
        \[ \Delta( \mathrm{bin}(z_{t-1}, z_t)) + \Delta( \mathrm{bin}(z_t, z_{t+1}))\]
        is given by $2\|\mathrm{bin}(z_t)\|$.  Again by Observation~\ref{obs:BRC}, the number of times each such $z$ appears as $z_t$ for some $t \leq i$ is $\left\lfloor \frac{i + 2^z }{2^{z+1}} \right\rfloor$.  Thus, the total contribution to \eqref{eq:sum} of these terms is
        \[ \repblow \sum_{z=1}^{kk'-1} \left\lfloor \frac{i + 2^z }{2^{z+1}} \right\rfloor \cdot 2\|\mathrm{bin}(z)\| = \repblow \sum_{z=0}^{kk'-1} \left\lfloor \frac{i + 2^z }{2^{z+1}} \right\rfloor \cdot 2\|\mathrm{bin}(z)\|\]
        where in the equality we have added back in the $t=0$ term as $\|\mathrm{bin}(0)\| = 0$ and this does not affect the sum.
                \end{enumerate}
            Finally, we observe that \Cref{alg:compr} exactly computes the three contributions above.  First, it initializes $\hat{r}_i$ to $(n+1)Bi$ to account for the $s_m$ chunks; and then it loops over all $z \in \{0,1,\ldots, kk'-1\}$ and adds the contributions from the $\tilde{c}_m$ and $\tilde{L}$ chunks.
    \end{proof}

    \begin{claim}\label{clm:j-jhat}
        Let $j\in [N] $ and set $i$ to  be such that, $j\in [r_i,r_{i+1})$. Further, let $x = g_j \oplus \eta$ be the noisy version of $g_j$ and $\jhat$ be the output of Algorithm~\ref{alg:Dec}.
        Let $t > 0$.
        If the bad events $E_{\elhat}, E_{\zhat}$, and $E_{\ihat}$ do not occur, then the probability that $|j - \jhat| > t$, conditional on this, is bounded by
        \[ 
        \Pr_\eta \left[ |j - \jhat| > t \,|\, \overline{E_{\elhat}}, \overline{E_{\zhat}}, \overline{E_{\ihat}} \right] \leq \exp(-\Omega(t))\;, 
        \] 
        where the constant inside the $\Omega(\cdot)$ depends on $p$.
    \end{claim}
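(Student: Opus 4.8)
I would condition throughout on $\overline{E_{\elhat}}$, $\overline{E_{\zhat}}$, $\overline{E_{\ihat}}$ and reduce the whole statement to a single robustness property of the unary code. By \Cref{clm:outer-code-decoding}, $\ihat$ is the ``correct'' index ($\ihat=i$ when $\elhat$ is in the middle, and $\ihat$ the appropriate one of $i,i+1$ when $\elhat$ is in the boundary); by \Cref{cl:compr_correct}, $\compr(\ihat)=r_{\ihat}$. Hence in every branch of \Cref{alg:Dec} the returned estimate has the form $\jhat=r_{\ihat}+\hat{\jbar}$ or $\jhat=r_{\ihat}-\hat{\jbar}$, where $\hat{\jbar}$ is the output of $\dec{\calu}$ or $\dec{\calu^{\mathrm{comp}}}$ on $x[H]\oplus w_{\ihat}[H]$ for the index set $H$ prescribed in \Cref{alg:est}/\Cref{alg:best}. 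Now $x[H]\oplus w_{\ihat}[H]=(g_j\oplus w_{\ihat})[H]\oplus\eta[H]$, and the whole point of \Cref{def:calg} is that $(g_j\oplus w_{\ihat})[H]$, read in increasing order of coordinate, is \emph{exactly} a unary (or complementary-unary) codeword $\enc{\calu}(a)$ for the appropriate offset $a$: this is \Cref{obs:unary}, together with its ``further'' part, which lets us restrict $h_i$ to any prefix of length $\geq\bar{j}$ and still see $\enc{\calu}(\bar{j})$ --- exactly what is needed when $H$ is a truncated index set as in \Cref{alg:best}. So the input to the unary decoder is an honest $\ber(p)$-corrupted (complementary-)unary encoding of $a$, and --- using that $j$ and $\jhat$ lie in ``aligned'' intervals --- $|j-\jhat|=|a-\hat{\jbar}|$. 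Everything thus reduces to: the unary decoder recovers its input offset up to additive error $t$ with probability $1-\exp(-\Omega_p(t))$.

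\textbf{The unary robustness lemma.} The main lemma I would prove is: if $y=\enc{\calu}(a)\oplus\eta$ with $\eta\sim\ber(p)^{\ell}$ and $p\in(0,1/2)$, then $\Pr_\eta[\,|\dec{\calu}(y)-a|>t\,]\leq 2\exp(-c_p t)$ for a constant $c_p>0$ depending only on $p$ (and likewise for $\dec{\calu^{\mathrm{comp}}}$). This is a Chernoff-plus-union-bound argument: if $\dec{\calu}(y)=a+s$ with, say, $s>0$, then since the decoder prefers $\enc{\calu}(a+s)$ to $\enc{\calu}(a)$, and these codewords differ exactly on the block of $s$ coordinates strictly between positions $a$ and $a+s$ --- where the true codeword is constantly $0$ --- at least half of those $s$ entries of $\eta$ must equal $1$; as $p<1/2$, \Cref{lem:chernoff} bounds this by $\exp(-\Omega_p(s))$. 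Summing the resulting geometric series over $s>t$ (and symmetrically over $s<0$, and for $\calu^{\mathrm{comp}}$) gives $\exp(-\Omega_p(t))$.

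\textbf{The middle case.} When $\elhat\in(\gapp n,n-\gapp n)$, \Cref{alg:est} is called, $\ihat=i$, $\compr(\ihat)=r_i$, and $H=\{m:w_i[m]\neq w_{i+1}[m]\}$; by the reduction above $\hat{\jbar}=\dec{\calu}(\enc{\calu}(\bar{j})\oplus\eta[H])$ and $|j-\jhat|=|\bar{j}-\hat{\jbar}|$, so the lemma immediately gives $\Pr_\eta[|j-\jhat|>t\mid\overline{E_{\elhat}},\overline{E_{\zhat}},\overline{E_{\ihat}}]\leq\exp(-\Omega_p(t))$.

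\textbf{The boundary case --- the main obstacle.} When $\elhat\notin(\gapp n,n-\gapp n)$, \Cref{alg:best} is called, and this is the delicate part, since \Cref{clm:ell-estimation} only localizes $\ell$ up to the wrap-around of $[0,n+1]$: it lies in one of at most two windows, one near the beginning and one near the end of the codeword. \Cref{alg:best} accordingly forms two candidates $\jhat_1=\compr(\ihat)+\hat{\jbar}_1$ and $\jhat_2=\compr(\ihat)-\hat{\jbar}_2$, each built from a truncated index set covering the first (resp.\ last) $\approx 2\gapp n$ chunks. For the window that actually contains $\ell$, the corresponding candidate is \emph{faithful} --- just as in the middle case (again invoking the ``further'' part of \Cref{obs:unary}, and using $\gapp<1/4$, i.e.\ condition \eqref{eq:decode-buffers-condition}, so that the beginning- and ending-chunk windows are disjoint and carry the crossover cleanly), its unary input is an honest corrupted (complementary-)unary codeword of the right offset --- so by the lemma it is within $t$ of $j$ except with probability $\exp(-\Omega_p(t))$. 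It remains to handle the final $\mathrm{argmin}$ tie-break. The key structural observation is that $\jhat_1\geq\compr(\ihat)\geq\jhat_2$ and that $g_{\jhat_1}$, $g_{\jhat_2}$, and $g_j$ each differ from $w_{\ihat}$ only within the beginning- or the ending-chunk window; since those windows are disjoint, the flips of the non-faithful candidate $b$ and the flips of $g_j$ occupy disjoint coordinate sets, giving the exact identity $|j-\jhat_b|=\Delta(g_j,g_{\jhat_b})$. Hence, if the $\mathrm{argmin}$ picked the non-faithful candidate, we would have $|j-\jhat|=\Delta(g_j,g_{\jhat})$, and it suffices to bound $\Pr[\Delta(x,g_{\jhat_{\mathrm{bad}}})\leq\Delta(x,g_{\jhat_{\mathrm{good}}})\text{ and }\Delta(g_j,g_{\jhat_{\mathrm{bad}}})>t]$. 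Conditioning first on the (probability $1-\exp(-\Omega_p(t))$) event that $\Delta(g_j,g_{\jhat_{\mathrm{good}}})\leq t$, the quantity $\Delta(x,g_{\jhat_{\mathrm{bad}}})-\Delta(x,g_{\jhat_{\mathrm{good}}})$ is a sum of at most $\Delta(g_j,g_{\jhat_{\mathrm{bad}}})+t$ independent $\pm1$ terms with mean $(1-2p)\bigl(\Delta(g_j,g_{\jhat_{\mathrm{bad}}})-\Delta(g_j,g_{\jhat_{\mathrm{good}}})\bigr)$, so by \Cref{lem:hoeffding} it is non-positive with probability $\exp(-\Omega_p(\Delta(g_j,g_{\jhat_{\mathrm{bad}}})))\leq\exp(-\Omega_p(t))$; the dependence of $\jhat_1,\jhat_2$ on $\eta$ is removed by a union bound over their $\mathrm{poly}(d)$ possible values, which is harmless once $t=\Omega(\log d)$ and for smaller $t$ is absorbed into the constant hidden in $\Omega(\cdot)$. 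A union bound over the constantly many failure events then yields the claimed $\exp(-\Omega_p(t))$ bound. The hard part is not any single estimate but the index bookkeeping of the boundary case --- aligning $\elhat$, $\ell$, $\ihat$, the truncated sets $H$, and the choice between $\calu$ and $\calu^{\mathrm{comp}}$.
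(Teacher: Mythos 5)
Your plan matches the paper's proof in all essentials: reduce to a unary-decoding robustness lemma, split on whether $\elhat$ is in the middle or the boundary, use \Cref{obs:unary} (including the ``further'' prefix-truncation part) for the honest candidate, establish the identity $|j-\jhat_b|=\Delta(g_j,g_{\jhat_b})$ via disjointness of the beginning and ending chunk windows (guaranteed by $\gapp<1/4$), and finish the $\mathrm{argmin}$ tie-break in \Cref{alg:best} with Hoeffding plus a union bound. The middle case is handled exactly as in the paper, and the structural observations in the boundary case ($\jhat_1\geq\compr(\ihat)\geq\jhat_2$, disjointness, the reduction to $\Delta(x,g_{\jhat_2})\leq\Delta(x,g_{\jhat_1})$) are also correct and match the paper's $A_1,A_2$ bookkeeping.

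The one genuine gap is the last union bound. You bound the probability of picking the bad candidate by $\exp(-\Omega_p(t))$ for each fixed pair $(\jhat_1,\jhat_2)$ and then sum over ``their $\mathrm{poly}(d)$ possible values,'' concluding with $\mathrm{poly}(d)\cdot\exp(-\Omega_p(t))$. That is $\exp(-\Omega(t))$ only when $t=\Omega(\log d)$; for $t$ a fixed large constant (which is the regime the claim and \Cref{thm:main} are stated for) it is vacuous, and the $\mathrm{poly}(d)$ factor cannot be ``absorbed into the constant hidden in $\Omega(\cdot)$'' because that constant must be independent of $d$. The fix, which the paper executes, is to keep the per-pair bound as $\exp(-\Omega_p(|\jhat_2-j|))$ (not flatten it to $\exp(-\Omega_p(t))$) and to exploit condition (ii), namely $|\jhat_1-j|\leq t/2$, so that the $\jhat_1$-sum ranges over only $O(t)$ values; then the $\jhat_2$-sum is a geometric series starting at $t$, and the total is $O(t)\cdot\exp(-\Omega_p(t))=\exp(-\Omega_p(t))$ for all $t$ above a constant threshold depending only on $p,B,\gapp,\xi$. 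Relatedly, your phrasing ``conditioning first on the event $\Delta(g_j,g_{\jhat_{\mathrm{good}}})\leq t$, then applying Hoeffding'' is not literally valid since conditioning on a $\eta$-measurable event breaks the i.i.d.\ structure needed for Hoeffding; the clean version (which you gesture at with ``removed by a union bound'') is to define a single bad event $E$ that incorporates both conditions for a fixed pair $(\jhat_1,\jhat_2)$ and union-bound over pairs without any conditioning, which is precisely the paper's event $E$ with properties (i)--(iii).
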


\begin{proof}
As in the proofs of earlier claims, we separate the analysis into two scenarios: one where $\elhat$ is in the middle, and one where it is in the boundary.

  \begin{enumerate}
      \item \textbf{$\elhat \in (\gapp n, (1-\gapp)n)$ is in the middle.}
          
    In this case the function $\getest$ (\Cref{alg:est}) is invoked to compute $\jhat$. 
    Since we assume that $E_{\elhat}, E_{\zhat}$, and $E_{\ihat}$ all hold, we have $\ihat = i$.  By \Cref{cl:compr_correct}, given $i$, the value $\compr(\ihat)$ computed on Line~\ref{line3:jhat-est} is equal to $r_{\ihat}$ and hence equal to $r_i$. 
    
    Next, we will show that the estimate $\jhat$ computed in \Cref{alg:est} satisfies $\jhat \in [r_i, r_{i+1})$.  To see this observe that \Cref{alg:est} first computes $\hat{\bar{j}} = \dec{\calu}(x[H] \oplus w_{\ihat}[H]) = \dec{\calu}(x[H] \oplus w_i[H]).$ Recall from Observation~\ref{obs:unary} that since $j \in [r_i, r_{i+1})$, we have $g_j[H] \oplus w_i[H] = \enc{\calu}(\bar{j})$. (Here we are using the fact that the set $H = \sett{m\,:\, w_i[m] \neq w_{i+1}[m] }$ in \Cref{alg:est} is the set of elements that appear in the vector $h_i$).  Since $x = g_j \oplus \eta$, we see that
    \[ \hat{\bar{j}} = \dec{\calu}(\enc{\calu}(\bar{j}) \oplus \eta[H] ).\]
    Now, we consider the probability that $\hat{\bar{j}}$ is very different than $\bar{j}$. 
For any fixed $\hat{\bar{j}}$, we claim that
    \[ 
    \Pr_\eta\left[ \hat{\jbar} = \dec{\calu}(\enc{\calu}(\bar{j}) \oplus \eta[H]) \right] \leq \exp(-\Omega(|\hat{\jbar} - \bar{j}|)).
    \]
    Indeed, $\enc{\calu}(\bar{j})$ and $\enc{\calu}(\hat{\jbar})$ differ on $|\bar{j} - \hat{\jbar}|$ coordinates, and $\dec{\calu}$---which just finds the $\hat{\jbar}$ that is closest to the received word---will return $\hat{\jbar}$ rather than the correct answer $\bar{j}$ only if at least half of these bits are flipped by $\eta[H]$.  The probability that this occurs is 
    the probability that at least half of $|\hat{\jbar} - \bar{j}|$ i.i.d. random bits, distributed as $\ber(p)$, are equal to one.  As $p < 1/2$, by a Chernoff bound (Lemma~\ref{lem:chernoff}), the probablity that this occurs is at most $\exp(-\Omega(|\bar{j} - \hat{\jbar}|))$, where the constant inside the $\Omega(\cdot)$ depends on $p$.  Thus, the probability that $\dec{\calu}$ returns any $\hat{\jbar}$ with $|\hat{\jbar} - \bar{j}| \geq t$ is at most
    \begin{align}
    &\Pr_\eta\left[|\bar{j} - \hat{\jbar}| \geq t\right]\notag \\
    &\qquad = \Pr_{\eta}\left[| \bar{j} - \dec{\calu}(\enc{\calu}(\jbar) \oplus \eta[H] ) | \geq t \right] \notag \\
    &\qquad \leq \sum_{\hat{\jbar} \geq \jbar + t} \Pr\left[\hat{\jbar} = \dec{\calu}(\enc{\calu}(\jbar) \oplus \eta[H] ) \right]  + \sum_{\hat{\jbar} \leq \jbar - t} \Pr\left[ \hat{\jbar} = \dec{\calu}(\enc{\calu}(\jbar) \oplus \eta[H] ) \right]\notag\\
    &\qquad \leq 2\sum_{s \geq t} \exp(-\Omega(s))\notag \\
    &\qquad \leq \exp(-\Omega(t)). \label{eq:conc}
    \end{align}

    This shows that $\jbar$ is likely close to $\hat{\jbar}$.  As we observed above, the value $\compr(\ihat)$ computed by \Cref{alg:est} is equal to $r_i$, so we have
    \[ \jhat = \compr(\ihat) + \hat{\jbar} = r_i + \hat{\jbar},\]
    and by definition we have that
    \[ j = r_i + \jbar.\]
    Thus, $|j - \jhat| = |\bar{j} - \hat{\jbar}|$, and \eqref{eq:conc} implies that
    \[ \Pr\left[ |j - \jhat| \geq t \right] \leq \exp(-\Omega(t))\;,\]
    as desired.

\item \textbf{$\elhat$ is in the boundary.} 

In this case, the function  \bgetest\ is invoked.  
Since we assume that $E_{\elhat}, E_{\zhat}$, and $E_{\ihat},$  we have that 
$\ihat = i$ if $\ell \in [0,\elhe]$ and $\ihat = i+1$ if $\ell \in [\elhs, n]$.

First assume that $\ell \in [0,\elhe]$, and note that this implies  both that $\ihat = i$ and that the crossover point satisfies $h_{i,\jbar} \leq \rowenc + 2\gapp n(n'+B)$.  Further, by \Cref{cl:compr_correct}, we have $\compr(\hat{i}) = r_i$.  

Let
\[ H_1 = \{ m\ |\ w_{i+1}[m] \neq w_i[m], 0 \leq m \leq L + 2\gapp n(n' + B)\},\]
and let
\[ H_2 = \{m \ |\ w_{i}[m]  \neq w_{i -1}[m] ,   d -  2\gapp n ( n' + B ) \leq m \leq d\}.\]
(These are the two values of $H$ chosen in \Cref{alg:best} when computing $\jhat_1$ and $\jhat_2$, respectively; we give them separate names $H_1$ and $H_2$ for the analysis.) 

First we analyze the choice of $\jhat_1$.
By \Cref{obs:unary} and the fact that $h_{i,\bar{j}}\in H_1$, we have $g_j[H_1] \oplus w_i[H_1] = \enc{\calu}(\bar{j}),$ and so as above we have
\[ \hat{\bar{j}}_1 = \dec{\calu}( \enc{\calu}(\bar{j}) \oplus \eta[H_1]).\]
The same reasoning as in Case 1 implies that
\begin{equation}\label{eq:j1}
\Pr[ |{\hat{j}}_1 - j| \geq t/2 ] \leq \exp(-\Omega_p(t)).
\end{equation}
Further, in this case we also have that
\[ |\jhat_1 - j| = \Delta(g_j, g_{{\jhat}_1}).\]
Indeed, this follows because, regardless of the noise $\eta$, we have $\hat{\bar{j}}_1 \leq |H_1| \leq \Delta(w_i, w_{i+1})$, which means that 
    \[ \jhat = \compr(\hat{i}) + \hat{\bar{j}}_1 = r_i + \hat{\bar{j}}_1 \in [r_i, r_{i+1}).\]
    Now, since $j$ and $\hat{j}_1$ are both in the same interval $[r_i, r_{i+1})$, we must have $\Delta(g_j, g_{\hat{j}_1}) = |j - \hat{j}_1|$.  This is true because---assuming without loss of generality that $\jhat_1 \geq j$---to get from $g_j$ to $g_{\hat{j}_1}$, we flip the bits indexed by $h_i[j+1], h_i[j+2], \ldots, h_i[\jhat_1]$, and there are $|j - \hat{j}_1|$ such bits.

Next, we analyze $\jhat_2$. First, note that as with $\jhat_1$, we have
\begin{equation}\label{eq:same2}
|\jhat_2 - j | = \Delta(g_j, g_{\jhat_2}). 
\end{equation}
Indeed, by construction we have $\hat{\jbar}_2 \leq |H_2|$, which means that $\jhat_2 = r_i - \hat{\jbar}_2$ is towards the end of the interval $[r_{i-1}, r_i)$; concretely, it implies that the crossover point corresponding to $\jhat_2$ satisfies
\[ h_{i-1, \jhat_2 - r_{i-1}} \in [d - 2\gapp n(n' + B), d].\]
Thus, to get from the codeword $g_{\jhat_2}$ to the codeword $g_j$, we need to flip all of the bits indexed by $m$ in the set
\[ \{ m \, |\, w_i[m] \neq w_{i-1}[m], h_{i-1, \jhat_2 - r_{i-1}} \leq m \leq d \},\]
as well as all the bits indexed by $m$ in the set
\[ \{ m\,|\, w_i[m] \neq w_{i+1}[m], 0 \leq m \leq h_{i, \bar{j}} \}.\]
The number of elements in the first set is $\jhat_2 = r_i - \hat{\jbar}_2$, and the number in the second set is $\bar{j} = j-r_i$.  Since $\beta < 1/4$ and $h_{i, \bar{j}} \leq L + 2\beta n (n' + B)$ and $h_{i-1, \jhat_2 - r_{i-1}} \geq d - 2\beta n (n' + B)$, these two sets are disjoint.  
Thus the total number of indices we need to flip to get from $g_{\jhat_2}$ to $g_j$ is the sum of the sizes of these two sets, which is
\[ (r_i - \hat{\jbar}_2) + (j-r_i) = j - \hat{\jbar}_2,\]
establishing \eqref{eq:same2}.  A similar argument shows that $\Delta(g_{\jhat_1}, g_{\jhat_2}) = |\jhat_1 - \jhat_2|.$

Next, note that \Cref{alg:best} sets $\jhat = \jhat_2$ only if $\Delta(x, g_{{\jhat}_2}) \leq \Delta(x, g_{{\jhat}_1})$.  To analyze the probability that this occurs, fix $\jhat_1$ and $\jhat_2$, and define
\[A_1 :=  A_1(\jhat_1) := \{ m \in H_1\,|\, h_{i,\jbar} \leq m \leq h_{i,\jhat_1 - r_i} \}\]
and let 
\[ A_2 := A_2(\jhat_1, \jhat_2) := \{m \in H_2\,|\, m \geq h_{i-1, \jhat_2 - r_{i-1}} \} \cup \{ m \in H_1 \,|\, m \leq \min(h_{i,\bar{j}}, h_{i, \jhat_1 - r_i}) \}. \]
That is, $A_1$ is the set of indices $m$ so that $g_{\jhat_1}[m] \neq g_j[m]$ but $g_{\jhat_2}[m] = g_j[m]$, and $A_2$ is the set of indices $m$ so that $g_{\jhat_2}[m] \neq g_j[m]$ but $g_{\jhat_1}[m] = g_j[m]$.  Notice that $|A_1| + |A_2| = \Delta(g_{\jhat_1}, g_{\jhat_2})$.  Notice also that 
\begin{equation}\label{eq:A1A2} |A_1| = \begin{cases} \Delta(g_j, g_{\jhat_1}) = \jhat_1 - j & j \leq \jhat_1 \\ 0 & j > \jhat_1 \end{cases} \qquad \text{and} \qquad |A_2| =  \begin{cases} \Delta(g_j, g_{\jhat_2}) = j - \jhat_2 & j \leq \jhat_1 \\ \Delta(g_{\jhat_1}, g_{\jhat_2}) = \jhat_1 - \jhat_2 & j > \jhat_1 \end{cases}
\end{equation}

Now, consider the event that \Cref{alg:best} sets $\jhat = \jhat_2$, which happens only if $\Delta(x, g_{\jhat_2}) \leq \Delta(x, g_{\jhat_1})$.  
Note that 
\[
\Delta (x, g_{\jhat_2}) = \Delta (g_j \oplus \eta, g_{\jhat_2}) = |A_2| - \|\eta[A_2]\| + \|\eta[\bar{A_2}]\|
\]
and a similar expression holds for $\Delta (x, g_{\jhat_1})$. Therefore,
the event that $\Delta(x, g_{\jhat_2}) \leq \Delta(x, g_{\jhat_1})$ is the same as the event that
\[ |A_2| - \|\eta[A_2]\| + \|\eta[A_1]\| \leq |A_1| - \|\eta[A_1]\| + \|\eta[A_2]\|,\]
which, rearranging, is the same as the event that
\begin{equation}\label{eq:badevent}
|A_2| - 2\|\eta[A_2]\| - (|A_1| - 2\|\eta[A_1]\|) \leq 0.
\end{equation}

Note that $|A_2| - 2\|\eta[A_2]\|$ is a sum of $|A_2|$ independent random variables that are $+1$ with probability $1-p$ and $-1$ with probability $p$, and similarly for $|A_1| - 2\|\eta[A_1]\|$.  Moreover, 
since $A_1$ and $A_2$ are disjoint, the whole left hand side of \eqref{eq:badevent} is the sum of $|A_1| + |A_2|$ independent $\pm 1$-valued random variables, and the expectation of the left-hand-side is $(|A_2|-|A_1|)(1 - 2p)$, which is larger than zero when $|A_2| > |A_1|$ and $p < 1/2$.  By Hoeffding's inequality (Lemma~\ref{lem:hoeffding}), provided that $|A_2| > |A_1|$ and $p < 1/2$, the probability that \eqref{eq:badevent} occurs is at most
\[ \Pr_\eta[ \Delta(x,g_{\jhat_2}) \leq \Delta(x, g_{\jhat_1}) ] \leq 2 \exp\left( -\Omega_p\left( \frac{(|A_2|-|A_1|)^2 }{|A_1| + |A_2|}\right)\right),\]
where the constant inside the $\Omega_p(\cdot)$ depends on the gap between $p$ and $1/2$.

Now, consider the event $E$ that \Cref{alg:best} picks $\jhat_1$ and $\jhat_2$ so that all of the following hold:
\begin{itemize}
    \item[i.] $\Delta(x, g_{\jhat_2}) \leq \Delta(x, g_{\jhat_1})$
    \item[ii.] $|\jhat_1 - j| \leq t/2$
    \item[iii.] $|\jhat_2 - j| \geq t$
\end{itemize}
By a union bound, the probability that $E$ occurs is at most
\[ \Pr[E] \leq \sum_{\jhat_1, \jhat_2} \Pr[ \Delta(x, g_{\jhat_2}) \leq \Delta(x, g_{\jhat_1})]
\leq \sum_{\jhat_1, \jhat_2} 2\exp\left( -\Omega_p\left( \frac{(|A_2| - |A_1|)^2}{|A_1| + |A_2|} \right)\right)
,\]
where the sum is over all $\jhat_1$ and $\jhat_2$ that satisfy (ii) and (iii) above.  For any such $\jhat_1$ and $\jhat_2$, we have
\[ 
\frac{(|A_2| - |A_1|)^2}{|A_2| + |A_1|} \geq \frac{|\jhat_2 - j|}{6}.
\]
Indeed, (ii) and (iii) imply that $|\jhat_2 - j| \geq 2|\jhat_1 - j|$, so by \eqref{eq:A1A2}, if $j \geq \jhat_1$, then 
\[ \frac{(|A_2| - |A_1|)^2}{|A_2| + |A_1|} = |A_2| = |\jhat_1 - \jhat_2| \geq |\jhat_2 - j| - |\jhat_1 - j| \geq \frac{|\jhat_2 - j|}{2}, \] 
and if $j < \jhat_1$, then
\[ \frac{(|A_2| - |A_1|)^2}{|A_2| + |A_1|} = 
\frac{ ( (j - \jhat_2) - (\jhat_1 - j) )^2}{\jhat_1 - \jhat_2} \geq \frac{(|\jhat_2 - j|/2)^2}{(3/2)|j - \jhat_2|} = \frac{|\jhat_2 - j|}{6}.\]
Thus, we have 
\begin{align*}
\Pr[E] &\leq \sum_{\jhat_1, \jhat_2} \Pr[ \Delta(x, g_{\jhat_2}) \leq \Delta(x, g_{\jhat_1})] \\ &\leq 
\sum_{\jhat_1,\jhat_2} 2\exp\left(-\Omega_p(|\jhat_2 - j|)\right) \\
&\leq 2t \sum_{\jhat_2} \exp(-\Omega_p(|\jhat_2 - j|))\\
&\leq 2t \sum_{s \geq t} \exp(-\Omega_p(s)) \\
&\leq 2t\exp(-\Omega_p(t)) = \exp(-\Omega_p(t)).
\end{align*}
Above, we have used the fact from (ii) that there are at most $t$ values of $\jhat_1$ in the sum; then we have used (iii) (and the fact that in this case, \Cref{alg:best} will only choose $\jhat_2 < j$) to re-write the sum over $\jhat_2$ as a sum over $s \geq t$.

Altogether, by a union bound over the event $E$ analyzed above and the event that $|\jhat_1 - j| \geq t/2$ in \eqref{eq:j1}, we conclude that for all $t$ large enough (relative to the gap between $p$ and $1/2$), with probability at least $1 - 2\exp(-\Omega_p(t)) = 1 - \exp(-\Omega_p(t))$, we have both $|\jhat_1 - j| \leq t/2$ and also that $E$ does not occur.  Suppose that this favorable case happens.

Now consider $\jhat$.  If $\jhat = \jhat_1$, then by above, $|\jhat_1 - j| \leq t/2$ and so $|\jhat - j| \leq t/2$.  On the other hand, if $\jhat = \jhat_2$, then either $|\jhat - j| = |\jhat_2 - j| < t$, or else event $E$ occurs.  (Indeed, if $|\jhat_2 - j| \geq j$, then (iii) holds; (i) holds because \Cref{alg:best} chose $\jhat = \jhat_2$; and (ii) holds because we are assuming that the favorable case in \eqref{eq:j1} occurs).  But since we are assuming that $E$ does not occur, this implies that $|\jhat - j| \leq t$.  Either way, we conclude $|\jhat - j| \leq t$ except with probability $\exp(-\Omega_p(t))$, which proves the claim when $\elhat$ is on the boundary and $\ell \in [0, \elhat_e]$.

The above handled only the sub-case when $\ell \in [0,\elhat_e]$.  There is also the sub-case where $\ell \in [\elhat_s, n]$.  However, that case follows by an identical argument. 

  \end{enumerate}
  Thus, we have handled both the case when $\elhat$ is in the middle and the case where $\elhat$ is on the boundary, and this proves the claim.
\end{proof}

Finally, we are ready to prove \Cref{prop:decoding-alg-succeeds}.
    \begin{proof}[Proof of \Cref{prop:decoding-alg-succeeds}]
    \Cref{clm:j-jhat} implies that $|\jhat - j| \leq t$ with probability at least $1 - \exp(\Omega(t))$, provided that none of $E_{\elhat}$, $E_{\zhat}$, and $E_{\ihat}$  occur.  By Claims~\ref{clm:ell-estimation}, \ref{clm:decode-rowenc} and \ref{clm:outer-code-decoding} together with a union bound, the probability that any of these occur is at most $\exp(-\Omega(n)) + \exp(-\Omega(\repblow)) + \exp(-\Omega(n))$.  Thus, we conclude that with probability at least 
    \[ 1 - \exp(-\Omega(n)) - \exp(-\Omega(\repblow) - \exp(-\Omega(t)),\]
    we have $|\jhat - j| \leq t$, as desired.
    \end{proof}

    \section{Putting all together and choosing parameters}\label{sec:last}

In order to prove \Cref{thm:main}, we will plug in a Reed--Solomon (RS) code as $\Cout$.  Thus, before we prove the theorem, we recall the definition of RS codes and their basic properties.

\begin{definition}
    	Let $\alpha_1, \alpha_2, \ldots, \alpha_n$ be distinct points of the finite field $\mathbb{F}_q$ of order  $q$. For $k<n$ the $[n,k]_q$ \emph{RS code} 
		defined  by the  evaluation set $\{ \alpha_1, \ldots, \alpha_n \}$ is the   set of codewords 
		\[
		\left \lbrace \left( f(\alpha_1), \ldots, f(\alpha_n) \right) \mid f\in \mathbb{F}_q[x],\deg f < k \right \rbrace \;.
		\]
\end{definition}
It is well-known that RS codes are Maximum Distance Separable (MDS), which means in particular that an RS code of rate $\calr$ and distance $\delta$ has
\[ \calr = 1 - \delta + 1/n.\]
Moreover, encoding and decoding of RS codes can be done in $O(n \cdot \text{poly}(\log n))$ time (see e.g., \cite{gao2003new,lin2014novel}).

We are now ready to prove our main result, which we restate here for the reader's convenience. 
\mainResult*
    \begin{proof}[Proof of \Cref{thm:main}]
        Given \Cref{prop:decoding-alg-succeeds} and \Cref{prop:running-time} we are left to show that we can choose the outer code and the parameters $L, B, \beta, \xi$ such that: (i) the rate of our Gray code is $\Rin - \varepsilon$ where $\Rin$ is the rate of the inner code, (ii) that inequalities \eqref{eq:decode-buffers-condition} and \eqref{eq:decode-outer-condition} hold, (iii) that the running time of our encoder and decoder are as desired.
        
        Let $q = 2^{k'}$ where $k'$ is a large enough integer. Let $\Cin$ be as in the theorem statement. 
        We start with choosing the outer code. We shall use as $\Cout$ an $[n,k]_q$ Reed--Solomon code where the evaluation points are taken to be $\Fq^*$, namely, $n = q-1$. 
        As Reed--Solomon codes are MDS codes, we have that $\Rout = 1 - \delout + \frac{1}{n}$. Set $\delout$ to be $\varepsilon/2$, so $\Rout = 1 - \varepsilon/2 + o(1)$.

        Note that as $\Rin = k'/n'$, and $k'=\log q = \log (n+1)$, we have that $n' = (\log(n+1))/\Rin$. 

        We set $L = (\varepsilon/8)nn'$
        and set $B$ to be a sufficiently large constant. 
        Then we plug in $n'$ in the definition of $d$ the length of the encoding of our Gray code (\Cref{def:calg}) to get,
        \[
        d = n \frac{\log (n+1)}{\Rin} + 
        B \cdot (n + 1) + 2 L = \Theta(n\log n) \;,
        \]
        where the last equality follows as $B =O(1)$ 
        and that $2L = (\varepsilon/4)n((\log (n + 1))/\Rin)$.
        We now turn to compute the final rate according to \eqref{eq:calg-rate}. We get
        \begin{align*}
            \calr_{\calg} &\geq \frac{(1 - \delout) \cdot \Rin}{1 + (1+ \frac{1}{n})\cdot \frac{1}{\sqrt{n'}} + \frac{\varepsilon}{4}} \\
            &\geq \frac{(1 - \frac{\varepsilon}{2})\Rin}{1 + \frac{\varepsilon}{2}} \\
            & \geq \Rin - \varepsilon
        \end{align*}
        where the first inequality follows as there exists a large enough integer $n$ for which $(1 + 1/n)/\sqrt{n'} < \varepsilon/4$ (recall that $n' = O(\log n)$). 

        Now, note that by our choice of $L$, the failure probability of our algorithm is
        \[
        \Pr\left[ |j-\jhat|  \geq t \right] \leq \exp(-\Omega (t)) + \exp(-\Omega (n))\;. 
        \]
        We now show how to get the final failure probability as a function of $d$. Recall that $d = \Theta(n\cdot \log n)$, which implies that $n = \Theta(d/\log d)$. Indeed, let $C_1, C_2$ be constants such that $d/C_1 \leq n\log n \leq d/C_2$ for large enough $n$. It holds that
        
        \[
        n \geq \frac{d}{C_1 \log n} \geq \frac{d}{C_1 \log d - \log (C_2 \log n)} \geq \frac{d}{C_1\log d} \;,
        \]
        and a similar computation also shows that $n = O(d/\log d)$.
        Thus, 
        \[
        \Pr[|j-\jhat| \geq t ] \leq \exp(-\Omega(t)) + \exp\left(-\Omega\left(\frac{d}{\log d}\right)\right) \;.
        \]
        We proceed to show that the conditions given in \Cref{prop:decoding-alg-succeeds} indeed hold. 
        
         First, we observe that our choice of $L$ indeed satisfies $L = \omega(\log(kk') \log\log(kk'))$, as $L = \Theta(\varepsilon n \log n)$, which is much larger.  Next, we show that we can choose constants $\gapp, \xi$ so that \eqref{eq:decode-buffers-condition} and \eqref{eq:decode-outer-condition} hold, namely that
        \[ 
        2\exp(-C_p B) < \gapp < 1/4 \qquad \text{and} \qquad 2(1 + \xi)\pCinFail + 2\gapp < \delout\;.
        \]
        First, we choose a positive constant $\gapp < \min\{1/4, \frac{\delout}{4}\},$ recalling that $\delout = 1-\calr$ for our Reed-Solomon code $\Cout$, and thus $\delout$ is also a constant.  Thus the second inequality in \eqref{eq:decode-buffers-condition} is satisfied.
        Next, we note that by assumption, $\pCinFail = o(1)$ as $n' \to \infty$, and thus for large enough values of $n'$, we have $\pCinFail < \delout/8$; 
        then we can choose any $\xi < 1$ and satisfy \eqref{eq:decode-outer-condition} given that $\gapp < \delout/4$.  
        Finally, we may choose $B$ to be a sufficiently large constant (larger than $\ln(2/\gapp)/C_p$) and the first inequality in \eqref{eq:decode-buffers-condition} will hold as well.

        Finally, we analyze the final running time of our scheme given our choice of the outer code.
        Since RS codes can be encoded in time $O(n\cdot \text{poly}(\log n))$, plugging this in \Cref{prop:running-time} and noting that $n' = \log(n+1)/\Rin \leq  \log(d)/\Rin$, we get that the encoding of Gray code can be done in time
        \begin{align*}
        \tilde{O}(d^3) + O(d \cdot \text{poly}(\log d)) + O(d \cdot T_{\enc{\Cin}}(\log( d)/\Rin)) = \tilde{O}(d^3)\;, 
        \end{align*}
        where the last equality follows since $\Cin$ can be encoded in polynomial time.
        Further, the decoding time of RS codes is also $O(n\cdot \text{poly}(\log n))$, and thus, by our choice of $B$ and the fact that $n' = O(\log n)$, we get that the decoding time is
        \[
        \tilde{O}(d^2) + O(d \cdot \text{poly}(\log d)) + O\left( d \cdot T_{\dec{\Cin}}(n')\right)  = \tilde{O}(d^2) \;.
        \]
        The last equality above follows because, without loss of generality, we may assume that $T_{\dec{\Cin}}(n')$ is at most $\mathrm{poly}(n')\cdot 2^{k'}$, the running time of the brute-force maximum-likelihood decoder.  As we have $k' = \log(n+1)$, this is $n \cdot \mathrm{polylog}(n) = d \cdot \mathrm{polylog}(d)$, and hence the final term above is $\tilde{O}(d^2)$. 
    \end{proof}

\section*{Acknowledgements} We thank the Simons Institute for Theoretical Computer Science for their hospitality and support.

\bibliographystyle{alpha}
\bibliography{refs.bib}

\end{document}